\newcommand{\OPT}{\mathrm{OPT}}
\DeclareMathOperator*{\argmax}{arg\,max}
\newcommand{\PM}{W_p}  
\newtheorem{theorem}{Theorem}[section]
\newtheorem{lemma}[theorem]{Lemma}
\newtheorem{claim}{Claim}
\theoremstyle{definition}
\newtheorem{definition}[theorem]{Definition}
\newtheorem{example}[theorem]{Example}
\title{\bf The Cost of EFX:  Generalized-Mean Welfare and Complexity Dichotomies with Few Surplus Items}
\author[1]{Eugene Lim}
\author[2]{Tzeh Yuan Neoh}
\author[3]{Nicholas Teh}
\affil[1]{National University of Singapore, Singapore}
\affil[2]{Harvard University, USA}
\affil[3]{University of Oxford, UK}
\date{\vspace{-10mm}}
\begin{document}

\maketitle
\begin{abstract}
    Envy-freeness up to any good (EFX) is a central fairness notion for allocating indivisible goods, yet its existence is unresolved in general. In the setting with few surplus items, where the number of goods exceeds the number of agents by a small constant (at most three), EFX allocations are guaranteed to exist, shifting the focus from existence to efficiency and computation. We study how EFX interacts with generalized-mean ($p$-mean) welfare, which subsumes commonly-studied utilitarian ($p=1$), Nash ($p=0$), and egalitarian ($p \rightarrow -\infty$) objectives. We establish sharp complexity dichotomies at $p=0$: for any fixed $p \in (0,1]$, both deciding whether EFX can attain the global $p$-mean optimum and computing an EFX allocation maximizing $p$-mean welfare are NP-hard, even with at most three surplus goods; in contrast, for any fixed $p \leq 0$, we give polynomial-time algorithms that optimize $p$-mean welfare within the space of EFX allocations and efficiently certify when EFX attains the global optimum. We further quantify the welfare loss of enforcing EFX via the price of fairness framework, showing that for $p > 0$, the loss can grow linearly with the number of agents, whereas for $p \leq 0$, it is bounded by a constant depending on the surplus (and for Nash welfare it vanishes asymptotically). Finally we show that requiring Pareto-optimality alongside EFX is NP-hard (and becomes $\Sigma_2^P$-complete for a stronger variant of EFX). Overall, our results delineate when EFX is computationally costly versus structurally aligned with welfare maximization in the setting with few surplus items.
\end{abstract}

\section{Introduction}
Fair division of indivisible goods is a fundamental problem at the intersection of computer science and economics, modeling diverse real-world problems such as course allocation \citep{budish2012courseallocation}, divorce settlements \citep{BramsTa96}, and inheritance division \citep{pratt1990inheritance} (see also the survey by \citet{amanatidis2023fairdivisionprogress}).

Among the many notions of fairness proposed, \emph{envy-freeness} (EF) and its relaxations remain one of the most widely studied in the field. 
Intuitively, an allocation is EF if every agent values his bundle at least as much as he values any other agent's bundle.
However, in this setting, a complete\footnote{\emph{Completeness} is a common assumption in fair division that mandates that all items must be allocated; otherwise, by leaving all items unallocated, one can obtain a trivially ``fair'' allocation.} envy-free allocation may not exist. 
To address this, weaker notions of EF have been introduced. 
Among the most notable of these relaxations is \emph{envy-freeness up to any good} (EFX) and \emph{envy-freeness up to one good} (EF1).
An allocation is EF1 if any envy that an agent has towards another agent can be removed by deleting \emph{some} good from the latter agent's bundle; it is EFX if the envy can be removed by deleting \emph{any} good.
The latter is strictly stronger and, consequently more desirable as a fairness criterion.
While EF1 allocations always exist and can be computed efficiently for additive valuations~\citep{Budish11,lipton2004approximately}, the existence of EFX allocations remains one of the field's biggest open problems~\citep{amanatidis2023fairdivisionprogress,caragiannis2019unreasonable,Moulin19,procaccia2020enigmatic}.
Positive results so far have been obtained only in (numerous) restricted settings, including (but not limited to) the two-agent setting with general monotone valuations~\citep{plaut2020almost}, identical valuations~\citep{plaut2020almost}, binary valuations~\citep{bu2023efx}, bi-valued instances~\citep{amanatidis2021mnwefx}, and when there are three distinct additive valuation functions in the instance~\citep{prakash_EFX_3types}.\footnote{The last case generalizes the setting with three agents \citep{chaudhury_EFX_3agents} and two distinct additive valuation functions in the instance \citep{mahara2023efx_twoadditive}.}
However, the general existence question for EFX remains unresolved even for additive preferences, and existing algorithmic frameworks fail to extend~\citep{chaudhury2021little,mahara2023extension}.

A particularly promising direction in recent years has focused on the setting with \emph{few surplus items}, where the number of goods ($m$) only slightly exceeds the number of agents ($n$), i.e., $m =n +c$ for a small constant $c$.
This setting is both theoretically and practically appealing.
From a theoretical perspective, it is a special setting where EFX is known to exist even for general monotone valuations, a rare occurrence in the EFX literature.
From a practical perspective, it captures problems relating to the distribution of limited bonus items or small surpluses in shared economies.
\citet{amanatidis2020multiple} first showed that for arbitrary monotone valuations, an EFX allocation always exists when $m \leq n +2$ via an envy cycle elimination procedure.
\citet{mahara2023extension} later extended this guarantee to $m \leq n + 3$ through a potential function-based approach.
\citet{NeohTe25} also focused on this setting but consider the \emph{number} of EFX allocations.
This setting thus provides a general condition ensuring the existence of EFX allocations without further structural restrictions.
As a result, it enables us to move beyond the mere \emph{existence} question and study the \emph{welfare} properties of these EFX allocations.
This motivates our central question: when EFX is guaranteed to exist in the few surplus items regime, what welfare can be achieved under EFX, and what is the computational complexity of finding welfare-optimal EFX allocations?

In most practical applications, fairness is desirable but not the only goal; decision-makers may opt to maximize collective welfare as well.
To study efficiency systematically, we use \emph{generalized-mean} (or \emph{$p$-mean}) welfare, a unifying framework encompassing classical objectives: utilitarian welfare ($p=1$), Nash welfare ($p=0$), and egalitarian welfare ($p \rightarrow -\infty$).
These objectives balance different degrees of inequality aversion, with smaller values of $p$ placing greater emphasis on equity \citep{eckart2024pmeans}.

However, imposing fairness constraints such as EFX introduces a new layer of difficulty. For instance, while EF1 and Pareto-optimal (PO) allocations are known to exist in some settings \citep{barman2018ef1po,caragiannis2019unreasonable},\footnote{An allocation is said to be \emph{PO} if there does not exist another allocation that gives each agent as least as much utility, and some agent strictly greater utility.}  the compatibility of EFX with welfare optimization has remained largely unexplored.
One might expect inherent computational barriers to combining fairness and efficiency, which motivate a deeper investigation into the welfare guarantees of EFX allocations, especially in the few surplus items setting where EFX existence is guaranteed.

To quantify how much welfare is lost by enforcing fairness, the \emph{price of fairness} framework \citep{bertsimas2011priceoffairness,bei2021price,caragiannis2012efficiencyoffairdivision,kurz2014pricesmallnumber} measures the worst-case ratio between the welfare of a (globally) welfare-maximizing allocation and the maximum welfare attainable subject to the fairness constraint.
Prior work has studied this for EF1, EF, and MMS allocations under various welfare objectives such as utilitarian \citep{caragiannis2012efficiencyoffairdivision,bei2021price,kurz2014pricesmallnumber} and egalitarian welfare \citep{celine2023egalitarianPrice}, but no analogous quantitative results are known for EFX in this setting with few surplus items. 
Understanding the ``price of EFX'' in this setting therefore offers a key step towards assessing the trade-off between arguably the strongest known fairness guarantee and collective welfare.

\subsection{Our Contributions}
We initiate the computational analysis of the compatibility between welfare/efficiency and EFX in the few surplus items setting $m = n+c \geq 1$ (with $c \leq 3$),\footnote{Strictly speaking, \emph{few surplus items} might suggest $c \geq 0$. However, for the sake of completeness, we extend our analysis to all $m=n+c \geq 1$, where $c$ can be negative as well.} where EFX (and indeed, EFX$_0$) is guaranteed to exist, shifting the focus from existence to efficiency, computation, and tradeoffs.
We assume additive valuations (as is standard in most works analyzing problems of this nature), and analyze efficiency through generalized-mean ($p$-mean) welfare (for $p \leq 1$), which captures commonly studied welfare notions in fair division, such as utilitarian, Nash, and egalitarian welfare objectives.

In Section~\ref{sec:efx+welfare}, we show a clean complexity dichotomy, with a transition at $p=0$.
For any fixed $p \in (0,1]$, we show that it is NP-hard to both decide whether an EFX allocation can achieve the global $p$-mean optimum and to compute a $p$-mean maximizing EFX allocation.
In contrast, for any fixed $p \leq 0$, we give polynomial-time algorithms that optimize $p$-mean welfare within the space of EFX allocations by enumerating the constant-sized non-singleton bundles and completing the allocation via a matching formulation; we also efficiently certify when EFX can attain the global optimum.

In Section~\ref{sec:price}, we quantify the welfare loss due to imposing EFX by deriving upper and lower bounds on the \emph{price of EFX} as a function of $p$, $n$, and $c$.
For $p > 0$, we show that the welfare loss from enforcing EFX can be linear in $n$; whereas for any fixed $p \leq 0$, the loss is bounded by a constant depending only on $c$ (and for Nash welfare it approaches $1$ as $n$ grows).

In Section~\ref{sec:efx+po}, we explore the compatibility between EFX and a commonly-studied economic efficiency notion, Pareto optimality (PO).
We show that deciding whether an allocation exists that is both EFX and PO is NP-hard (and for a strengthened notion of EFX$_0$, the problem becomes $\Sigma_2^P$-complete).

Finally, in Section~\ref{sec:other_settings}, we provide two complementary extensions that address several pivotal modeling features: (i) assuming agents have strictly positive marginal utilities, we give a polynomial-time algorithm that computes an allocation that is both EFX and maximizes the $p$-mean welfare among all EFX allocations; (ii) we further extend a result in Section~\ref{sec:efx+welfare} by showing that, for any fixed $p \leq 0$ and any constant $c = m-n$, there is an XP (in $c$) algorithm that either computes an EFX allocation maximizing the global $p$-mean welfare (for an arbitrary constant $c$, beyond our main focus on $c \leq 3$) or correctly certifies that no such allocation exists.

For completeness, all our results consider both the standard EFX and a strengthened EFX$_0$ notion.

Together, our results provide the first comprehensive study on the welfare and fairness tradeoffs for a setting where EFX is known to exist. They reveal that despite existence results, achieving welfare guarantees under EFX as a fairness notion remains a nuanced challenge.

\subsection{Related Work}
We discuss related work for several core concepts used or studied in the fair division model with indivisible goods. 

\paragraph{Few surplus items.}
Our work studies the few surplus items setting, where the number of goods only slightly exceeds the number of agents (i.e., $m=n+c$ for small constant $c$). 
This setting has also been studied for other fairness
benchmarks such as \emph{maximin share (MMS)} allocations \citep{kurokawa2016can,bouveret2016conflict,feige2021tight,hsu2024existence}.
For EFX, this setting is particularly attractive because exact existence becomes provable for general monotone
valuations when the surplus is small: EFX is guaranteed for $c\le 2$ \citep{amanatidis2020multiple} and extended to
$c\le 3$ by \citet{mahara2023extension}. 
\citet{NeohTe25} further investigated structural questions in this setting, including counting the number of EFX allocations, as well as exploring several further variants of EFX. 
Our results build on the fact that EFX exists here, and ask what welfare guarantees and computational phenomena remain once existence is no longer the bottleneck.

\paragraph{$p$-mean (generalized-mean) welfare.}
Generalized-mean welfare objectives unify several standard efficiency goals (e.g., utilitarian welfare at $p=1$,
Nash welfare at the $p = 0$, and egalitarian welfare as $p\to -\infty$), and have been extensively studied
algorithmically even without EFX/fairness constraints. 
For general valuations (e.g., subadditive), \citet{barman2020pmeans} give approximation guarantees for $p$-mean welfare, while \citet{barman2021uniformwelfare} obtain stronger constant-factor results under identical valuations. 
\citet{chaudhury2021fairandefficient}  subsequently built on this and provided a polynomial-time algorithm that outputs an allocation satisfying approximations to EFX and achieves an approximation (linear in the number of agents) to the Nash welfare.
For identical additive valuations, \citet{garg2022tractableMNW} provide a polynomial-time approximation scheme.
For structured valuations such as matroid-rank valuations,  \citet{viswanathan2023generalframework} give a strategyproof
mechanism optimizing weighted $p$-means alongside fairness desiderata, and in the same setting \citet{barman2022truthfulmatroid} connect Nash-welfare maximization to (group) strategyproofness. 
\citet{eckart2024pmeans} study normalized $p$-mean objectives and fairness for both goods and chores.

\paragraph{Fairness-efficiency tradeoffs and the price of fairness.}
A broad literature quantifies efficiency loss from fairness constraints via the \emph{price of fairness} framework
\citep{BertsimasEtAl2011,caragiannis2012efficiencyoffairdivision}. 
For indivisible goods, \citet{bei2021price} provide a detailed analysis for EF, EF1, and MMS under utilitarian welfare, while Kurz \citep{kurz2014pricesmallnumber} studies the setting with small items.
Other work considers different objectives (e.g., egalitarian welfare) and additional relaxations (including mixed manna)
\citep{celine2023egalitarianPrice,li2024completelandscapeEF}. For EFX specifically, \citet{bu2025approximability} study the approximability of welfare maximization within fair allocations and, in the setting that allows partial allocations, derive tight $\Theta(n)$ bounds for the utilitarian ``price of EFX''. Our contribution is complementary: we provide explicit welfare guarantees and tight(er) bounds for EFX in the few surplus items setting where \emph{complete} EFX allocations are guaranteed to exist, and we show a sharp qualitative transition around Nash welfare ($p=0$) in both welfare behavior and computational complexity.

\section{Preliminaries} \label{sec:prelims}
Let $N = [n]$ be the set of agents and $G = \{g_1,\dots,g_m\}$ be the set of indivisible goods, where $[k] := \{1,\dots,k\}$ for any positive integer $k$.
A \emph{bundle} refers to a subset of $G$.
Each agent $i\in N$ has a nonnegative \emph{valuation function}  $v_i:2^G\rightarrow\mathbb{R}_{\ge 0}$. We also assume that $v_i$ is \emph{additive}, i.e., $v_i(S) = \sum_{g \in S} v_i(\{g\})$ for any $S\subseteq G$. For notational simplicity, we sometimes write $v_i(g)$ instead of $v_i(\{g\})$ for $g\in G$.
We assume that each good $g \in G$ provides nonzero value to some agent, otherwise every agent has value $0$ for it and we can simply discard it.
For expositional and notational simplicity, we  assume that valuations can be expressed as rational numbers (although we note that this is not necessary, by scaling by common denominators as needed).

A problem \textit{instance} $\mathcal{I} = (N, G, \mathbf{v})$ is defined by the set of agents, goods, and valuation functions.
An \emph{allocation} $\mathcal{A} = (A_1, \dots, A_n)$ is a list of $n$ bundles such that no two bundles overlap, where agent $i$ receives bundle $A_i$; let $\Pi_n(G)$ denote the set of all possible allocations (where obvious, we omit the subscript $n$). Also, let $\Pi_\text{EFX}(G)$ denote the set of allocations that are EFX in a given instance. As standard in the literature, we require allocations to be \emph{complete}, i.e., all goods must be allocated.
An \textit{allocation rule} is a function that maps each instance to an allocation. 

As mentioned in the introduction, EF allocations may not exist.
Our work focus on a popular relaxation, EFX, defined as follows.
The first notion of EFX was conceived by \citet{caragiannis2019unreasonable}, which drops only nonzero-valued goods.
\begin{definition}[EFX] \label{def:EFX}
    An allocation $\mathcal{A} = (A_1,\dots,A_n)$ is \emph{envy-free up to any positively valued item (EFX)} if for all agents $i,j \in N$, either $A_j = \varnothing$ or for all $g \in A_j$ where $v_i(g) > 0$, $v_i(A_i) \geq v_i(A_j \setminus \{g\})$.
\end{definition}

We also consider a stronger variant (introduced by \citet{KYROPOULOU2020110}), which allows for dropping zero-valued goods as well.

\begin{definition}[EFX$_0$] \label{def:EFX_0}
    An allocation $\mathcal{A} = (A_1,\dots,A_n)$ is \emph{envy-free up to any item (EFX$_0$)} if for all agents $i,j \in N$, either $A_j = \varnothing$ or for all $g \in A_j$, $v_i(A_i) \geq v_i(A_j \setminus \{g\})$.
\end{definition}
We study both EFX and the stronger EFX$_0$ notions.
For expositional simplicity, when we refer to EFX without qualification, we mean statements that apply to both notions, unless otherwise stated.
In our technical results, we are explicit about the exact notion under consideration.
This provides a comprehensive treatment of both notions as studied in the literature.

We define the $p$-mean welfare objective, as follows.
\begin{definition}[$p$-mean welfare]
    For any $p \leq 1$, the \emph{$p$-mean welfare} of an allocation $\mathcal{A} = (A_1,\dots, A_n)$ is defined as
    \begin{equation*}
        \PM(\mathcal{A}) := \PM(v_1(A_1),\dots, v_n(A_n)) = \left( \frac{1}{n} \sum_{i \in N} v_i(A_i)^p \right)^{1/p}.
    \end{equation*}
    When $p = 1$, $p= 0$, and $p \rightarrow -\infty$, the $p$-mean welfare corresponds to the utilitarian social welfare, Nash social welfare, and egalitarian welfare, respectively. 
    We denote $W_p^* := \max_{\mathcal{A} \in \Pi_\mathrm{EFX}(G)} W_p(\mathcal{A})$.
    When $p \leq 0$, we adopt the standard convention that $W_p(\mathcal{A}) = 0$ for any allocation $\mathcal{A}$ if $v_i(A_i) = 0$ for some $i \in N$.
\end{definition}

\section{EFX + $p$-Mean Welfare Maximization} \label{sec:efx+welfare}
In this section, we study how EFX interacts with welfare maximization using the previously defined generalized-mean (or $p$-mean) objective $W_p$.
This framework allows us to study how the compatibility of EFX varies with the \emph{degree} of inequality aversion in the welfare measure.

There are two distinct questions one can ask, and they lead to different phenomena and different algorithmic barriers.
The first is a \emph{compatibility} question: does there exist an allocation that is simultaneously EFX and globally $W_p$-optimal among all allocations? When the answer is yes, imposing EFX causes no welfare loss for this objective.
Then, regardless of compatibility, the second question is a \emph{constrained optimization} problem: can we find a $W_p$-optimal allocation among all EFX allocations?
This question remains meaningful whenever EFX allocations are guaranteed to exist (in particular, in our setting with few surplus goods).
We address these questions separately in the following two subsections.

\subsection{Compatibility with Global $p$-Mean Optimality} \label{subsec:global_compatibility}
We first study the following \emph{compatibility} question: does there exist an allocation that simultaneously satisfies EFX and maximizes $p$-mean welfare \emph{globally} among all complete allocations? This is the most stringent way to combine fairness and welfare---a positive answer means that imposing EFX incurs no loss with respect to the chosen objective.

Without fairness constraints, computing (or approximating) $W_p$-optimal allocations has been extensively studied; however, adding fairness can fundamentally change the computational landscape. 
Even for the weaker \emph{envy-freeness up to one good (EF1)} property, the decision problem of whether a utilitarian-welfare maximizing allocation (i.e., $W_1$-optimal) can be chosen to satisfy EF1 is NP-hard \citep{AZIZ2023773}. 
We show that this intractability persists for EFX in our few-surplus items setting: for every fixed $p \in (0,1]$ and any constant surplus $c \le 3$, deciding whether a EFX allocation exists that attains the \emph{global} $p$-mean optimum is NP-hard.

In contrast, when $p \le 0$, a simple structural phenomenon gives us tractability when $m \le n$.
Our results thus reveal a qualitative split at $p=0$.

We begin with the hardness result for $p \in (0,1]$.
\begin{theorem} \label{thm:efx_global_p>0_hard}
    Fix any $p\in(0,1]$ and integer $c\leq 3$. 
    Given an instance $\mathcal{I}=(N,G,\mathbf{v})$ with $m = n + c \geq 1$, determining if there exists an allocation that is both \emph{EFX} and maximizes the $p$-mean welfare among all allocations is NP-hard.
\end{theorem}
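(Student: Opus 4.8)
The plan is to reduce from a known NP-hard problem and engineer an instance in which the \emph{global} $p$-mean optimum is essentially forced to take a specific structure, so that attaining it while remaining EFX encodes a combinatorial decision. A natural candidate is \textsc{Partition} (or a variant such as \textsc{3-Partition} or \textsc{Equal-Cardinality Partition}), since $p$-mean welfare for $p\in(0,1]$ is a strictly concave symmetrization of the utility vector, and concavity means that the global optimum rewards \emph{balance} among agents' utilities. The key idea is that a \textsc{Partition} instance, which asks whether a multiset of numbers can be split into two equal-sum halves, corresponds to whether the $p$-mean-optimal allocation can equalize (or nearly equalize) bundle values across two designated agents --- and simultaneously I want the only way to achieve this balance to coincide with an EFX-feasible split.

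First I would set up the instance so that $m=n+c$ with the specified small surplus $c\le 3$: most agents receive a single dedicated high-value ``filler'' good that pins their utility and keeps them EFX-trivially-satisfied, while a small constant number of ``active'' agents compete over the surplus goods that encode the \textsc{Partition} numbers. Because $c\le 3$, I have only a handful of extra goods beyond a perfect matching, so the reduction must pack all the combinatorial content into those few surplus items plus how a bounded number of agents share them. The scaling of valuations must be chosen so that (i) the \emph{unconstrained} $W_p$-optimal value is achieved \emph{only} by allocations that split the encoding goods in the balanced \textsc{Partition}-solving way (this is where strict concavity of $x\mapsto x^p$ for $p\in(0,1)$, and a separate argument for the linear case $p=1$, does the work), and (ii) among those balanced allocations, EFX holds iff the split is the intended one. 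I would verify that the global optimizer's utility profile is determined by the \textsc{Partition} answer, and that any deviation from the balanced split strictly decreases $W_p$, so that ``global optimum $+$ EFX'' is attainable exactly when the \textsc{Partition} instance is a yes-instance.

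The main obstacle, and the step I expect to require the most care, is the \emph{simultaneous} control of two conditions that pull in different directions: forcing the global $p$-mean optimum to be unique (or at least to uniquely determine the relevant utility profile) \emph{and} ensuring that EFX-feasibility of that optimum coincides with the \textsc{Partition} solution. For $p=1$ the $p$-mean is just linear (utilitarian) welfare, which is indifferent to balance, so the concavity trick fails and I would need a genuinely different gadget at the utilitarian endpoint --- likely making the utilitarian-optimal bundle assignment combinatorially rigid via carefully tuned values so that the unique welfare-maximizing matching is EFX iff the encoded instance is solvable. Handling $p=1$ together with $p\in(0,1)$ under one construction, while respecting the hard constraint $c\le3$ (which severely limits how many surplus goods are available to encode the instance), is the crux; I would probably encode a bounded-size hard problem (e.g., over a constant number of active agents) and rely on the valuations rather than the number of goods to carry the hardness.

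Finally I would confirm membership-agnostic correctness: NP-hardness only requires the forward and reverse implications of the reduction plus a polynomial-time construction, so I would check that the valuations (rational, polynomially bounded after scaling by common denominators as the preliminaries permit) and the instance size are polynomial in the source instance, and that the argument is uniform in the fixed parameters $p$ and $c$. I would state the reduction once and note that it applies verbatim to both \emph{EFX} and \emph{EFX}$_0$, since the distinguishing goods can be given strictly positive value to the relevant agents, making the two notions coincide on the instances produced.
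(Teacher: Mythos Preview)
Your high-level instincts (reduce from \textsc{Partition}, exploit concavity of $x\mapsto x^p$, tune valuations so the global optimum forces balance) match the paper's, but the plan as written has a genuine gap in where the combinatorial content lives. You propose that ``most agents receive a single dedicated high-value filler good'' and that the hardness is packed into the $c\le 3$ surplus goods shared by a constant number of active agents, ultimately ``rely[ing] on the valuations rather than the number of goods to carry the hardness.'' This cannot yield NP-hardness: once only $O(1)$ goods and $O(1)$ agents participate in the nontrivial part, there are only $O(1)$ relevant allocations, and valuations alone (however large) do not create exponentially many combinatorial choices.

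The paper's maneuver is the opposite of filler goods: it introduces all $k$ \textsc{Partition} items as goods valued only by two designated agents, and then adds $\Theta(k)$ \emph{zero-valuation dummy agents} to keep $m-n=c$ exact. Dummies need no goods and are EFX-trivially satisfied, so the $k$ partition goods can be freely split between agents~$1$ and~$2$. A third agent receives two special goods $x,y$ (made attractive enough via a constant $\Lambda$ that any $\Phi_p$-maximizer must place them there). The decisive step is that the EFX constraint of agents~$1$ and~$2$ \emph{toward agent~$3$}, after removing $x$ or $y$, forces $U_1\ge T/2$ and $U_2\ge T/2$; combined with $U_1+U_2=T$ this yields the balanced split. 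This EFX-driven argument is what makes the $p=1$ endpoint go through uniformly---concavity is used only to pin the global optimum for $p\in(0,1)$, while for $p=1$ every split is globally optimal and EFX alone forces the partition. So no separate gadget for $p=1$ is needed, contrary to your expectation.
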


Next, we show our result for the case of $p \leq 0$, which neatly contrasts with the above.

\begin{theorem} \label{prop:pleq0_polytime}
    Fix any $p \leq 0$. 
    Given an instance $\mathcal{I} = (N, G, \mathbf{v})$ with $m\leq n$, an allocation that is both \emph{EFX}$_0$ (and hence \emph{EFX}) and maximizes the $p$-mean welfare among all allocations always exists and is polynomial-time computable.
\end{theorem}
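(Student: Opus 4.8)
The plan is to exploit a single structural observation: any allocation in which every agent receives \emph{at most one} good is automatically EFX$_0$. Indeed, if $|A_j| \le 1$ for every $j$, then for any pair $i,j$ and any $g \in A_j$ we have $A_j \setminus \{g\} = \varnothing$, so the defining inequality $v_i(A_i) \ge v_i(A_j \setminus \{g\}) = 0$ holds trivially. Since $m \le n$, such ``matching-type'' allocations always exist (assign the goods injectively to distinct agents), so the search for a welfare-optimal EFX$_0$ allocation can be confined to them, provided I also verify that they attain the \emph{global} $p$-mean optimum.

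First I would dispose of the case $m < n$. Here every complete allocation leaves at least $n-m \ge 1$ agents with an empty bundle, hence with value $0$; by the convention for $p \le 0$, every allocation then has $W_p = 0$. Thus the global optimum is $0$ and is attained by any matching-type allocation, which is EFX$_0$ by the observation above.

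The substantive case is $m = n$. I claim the global $p$-mean optimum is attained by a perfect matching (each agent gets exactly one good). For $p \le 0$, any allocation with $W_p > 0$ must give every agent positive value, hence a nonempty bundle; with $m=n$ goods and $n$ agents this forces each bundle to be a singleton, i.e.\ a perfect matching along positively-valued pairs. Conversely, if no such matching exists, every allocation---matching or not---has some agent of value $0$, so the global optimum is $0$ and any matching-type allocation attains it. It therefore suffices to find, among perfect matchings that use only positively-valued pairs $(i,g)$, one maximizing $W_p$. Because $W_p$ is a strictly monotone transformation of the separable quantity $\sum_i v_i(A_i)^p$ (for $p=0$, of $\sum_i \log v_i(A_i)$), this reduces to a weighted bipartite assignment problem: set edge cost $c_{i,g} = v_i(g)^p$ and minimize total cost when $p<0$ (respectively set weight $\log v_i(g)$ and maximize when $p=0$), with zero-valued pairs forbidden. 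Feasibility is checked by testing for a perfect matching on the positive-value pairs; if none exists we fall back to the $W_p=0$ case, and otherwise the Hungarian algorithm returns in polynomial time a perfect matching that is simultaneously globally $W_p$-optimal and EFX$_0$.

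The main obstacle is conceptual rather than technical: establishing that the global optimum over \emph{all} allocations coincides with the optimum over matchings. This hinges entirely on combining the $p \le 0$ zero-convention with the pigeonhole count $m \le n$, which is precisely the source of the ``$p \le 0$ versus $p>0$'' dichotomy. A secondary, routine point is the monotone reduction itself: since the outer exponent $1/p$ is negative, $W_p$ is decreasing in $\sum_i v_i(A_i)^p$ for $p<0$ and increasing in $\sum_i \log v_i(A_i)$ for $p=0$, so optimizing the separable objective optimizes $W_p$; comparisons of the possibly irrational edge costs $v_i(g)^p$ are handled in the standard real-arithmetic model used throughout this literature.
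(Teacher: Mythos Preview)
Your proposal is correct and follows essentially the same approach as the paper: the key observation that singleton allocations are automatically EFX$_0$, the case split on $m<n$ versus $m=n$, the pigeonhole argument that positive $p$-mean welfare forces a perfect matching when $m=n$, and the reduction to a weighted bipartite assignment (minimizing $\sum v_i(g)^p$ for $p<0$, maximizing $\sum \log v_i(g)$ for $p=0$) all match the paper's proof almost step for step.
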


The tractability in the above result hinges on the structural simplification that occurs when $m\le n$. 
If the global $W_p$-optimum is positive, then every agent must receive positive value, which forces an allocation with at most one good per agent (and exactly one good per agent when $m=n$). 
Then, EFX$_0$ is immediate: whenever the envied bundle is a singleton, removing its only good leaves the empty bundle.

Once $m>n$, this argument breaks: at least one bundle must contain multiple goods, so EFX constraints become substantive. 
More importantly, even when $p\le 0$, a globally $W_p$-optimal allocation can fail EFX, so alignment with the global $W_p$-optimum may fail outright. 
The next example shows this already for Nash welfare ($p=0$) with just one surplus good.

\begin{example} \label{example:compatibility}
    Let $N=\{1,2,3\}$ and $G=\{g_1,g_2,g_3,g_4\}$, and assume additive valuations given by
    \[
    \begin{array}{c|cccc}
          & g_1 & g_2 & g_3 & g_4\\\hline
    v_1   & 5 & 1 & 0 & 0\\
    v_2   & 0 & 0 & 0 & 5\\
    v_3   & 2 & \tfrac{1}{10} & 1 & 0
    \end{array}
    \]
    Consider the Nash welfare objective ($p=0$). 
    Then the unique Nash welfare-maximizing allocation is $\mathcal{A}^*=(A^*_1,A^*_2,A^*_3)=(\{g_1,g_2\},\{g_4\},\{g_3\})$, but $\mathcal{A}^*$ is not EFX (and hence not EFX$_0$). 
    Consequently, there is no allocation that is
    simultaneously EFX and globally Nash welfare-maximizing.
    
    Due to space constraints, we defer the full explanation to the appendix, and provide just the key idea here.
    Agent~$2$ values only $g_4$, so any Nash-optimal allocation must assign $g_4$ to agent~$2$. To keep agent~$1$'s utility positive, agent~$1$ must receive $g_1$ or $g_2$. If agent~$1$ does not receive $g_1$, then agent 1's utility is at most $1$ and agent~$3$'s utility is at most $v_3(\{g_1,g_3\})=3$, giving Nash product at most $1\cdot 5\cdot 3=15$. The allocation $(\{g_1,g_2\},\{g_4\},\{g_3\})$ achieves utilities $(6,5,1)$ and Nash product $30$, so any Nash-optimal allocation must give $g_1$ to agent~$1$; a quick check of the remaining placements shows $(\{g_1,g_2\},\{g_4\},\{g_3\})$ is the unique Nash maximizer.
    This allocation violates EFX for agent~$3$ toward agent~$1$: removing $g_2$ from agent~$1$ leaves $\{g_1\}$, which agent~$3$ values at $2$, exceeding agent~$3$'s own value $1$ for $\{g_3\}$. Thus, no allocation is both EFX and globally Nash welfare-maximizing.
\end{example}

The above example highlights a new phenomenon that only appears once there is surplus ($m=n+c$ with $c\ge 1$): even on the ``inequality averse'' side ($p\le 0$), the global $W_p$-optimum need not admit any EFX allocation, i.e., Theorem~\ref{prop:pleq0_polytime} does not extend beyond the $m \leq n$ setting.
Nevertheless, when the surplus is small ($c\in\{1,2,3\}$), this remains algorithmically verifiable: we can either construct an EFX allocation that is globally $W_p$-optimal, or certify that none exists.

\begin{theorem} \label{thm:decide-find-efx-global-pmean-maximizer}
    For every fixed $p \le 0$ and $c \in \{1,2,3\}$, given an instance $\mathcal{I} = (N,G,\mathbf{v})$ with $m = n + c \geq 1$, there is a polynomial-time algorithm that either outputs an \emph{EFX}$_0$ (or \emph{EFX}) allocation that also maximizes the $p$-mean welfare among all allocations, or certifies that no such allocation exists.
\end{theorem}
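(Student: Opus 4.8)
The plan is to exploit the same structural rigidity that drives \Cref{prop:pleq0_polytime}, now adapted to a small positive surplus, reducing the search to a polynomial enumeration followed by bipartite matching. First I would record the shape constraint. Since $p \le 0$, any allocation with positive $W_p$ must give every agent positive value, hence at least one good; with $m = n + c$ this forces at most $c$ agents to hold \emph{non-singleton} bundles, and these bundles jointly contain at most $2c \le 6$ goods (a non-singleton bundle of size $s$ consumes $s-1$ of the $c$ surplus goods, so the number of such bundles is at most $c$ and their total size is at most $2c$). Consequently the number of \emph{non-singleton configurations} --- a choice of at most $c$ agents together with the explicit bundles they receive, drawn from a set of at most $2c$ goods --- is $O(m^{2c} n^{c})$, polynomial for fixed $c \le 3$. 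This is precisely the ``enumerate constant-sized non-singleton bundles and complete via matching'' strategy.

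Second, I would compute the global $W_p$-optimum within this enumeration. If no configuration can be completed to a positive allocation, the global optimum is $0$; then any EFX allocation (which is guaranteed to exist in this regime) is trivially globally optimal and we output one. Otherwise, for each configuration the surplus agents' values are fixed, and completing the allocation amounts to assigning the remaining goods one-to-one to the remaining (singleton) agents. Maximizing $W_p$ over these completions is a bipartite perfect-matching problem: for $p<0$ we minimize $\sum_i v_i(g_i)^p$ (with infinite cost on zero-valued edges), and for $p=0$ we maximize $\sum_i \log v_i(g_i)$; both are solvable in polynomial time by min-cost / max-weight matching. Taking the best value over all configurations yields $W^*_{\mathrm{global}}$.

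The crux is the third step: deciding whether the global optimum is attainable by an \emph{EFX} allocation, for which I would show that the EFX constraints decouple over the matching. The key observation is that EFX (and EFX$_0$) toward any singleton bundle is automatic, since deleting its only good leaves the empty set; thus the only substantive constraints point toward the (at most $c$) non-singleton bundles. For a fixed configuration, the constraints among the surplus agents are determined outright and either hold or eliminate the configuration. For a singleton agent $i$, the requirement $v_i(A_i) \ge v_i(A_j \setminus \{g\})$ over all relevant $g$ and all non-singleton bundles $A_j$ collapses to a single per-agent threshold $\tau_i = \max_j \max_{g \in A_j} v_i(A_j \setminus \{g\})$ (with the inner max restricted to positively valued $g$ in the EFX case), which is independent of the matching. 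Hence enforcing EFX is exactly deleting from the bipartite graph every edge $(i,g)$ with $v_i(g) < \tau_i$, and a best EFX-respecting completion is again a single min-cost / max-weight matching on the pruned graph.

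Finally I would combine the two tracks: for each configuration I compute both the unconstrained optimal completion value and the EFX-restricted optimal completion value; the global optimum is the maximum of the former over all configurations, and a globally optimal EFX allocation exists precisely when some configuration's EFX-restricted value equals $W^*_{\mathrm{global}}$, in which case its matching realizes the desired allocation. Correctness follows because any positive, globally optimal EFX allocation realizes one of the enumerated configurations together with a completion respecting exactly these thresholds, so the search cannot miss it; Example~\ref{example:compatibility} shows the ``certify none exists'' branch is genuinely needed. The main obstacle is the EFX decoupling of the third step --- verifying that the otherwise global EFX condition reduces, for singleton agents, to matching-independent per-agent thresholds while the finitely many surplus-agent constraints are checked directly; the remaining comparison of $p$-mean values across configurations is a routine (exact, since valuations are rational) computation.
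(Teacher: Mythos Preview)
Your proposal is correct and follows essentially the same approach as the paper: enumerate the $O(n^{c}m^{2c})$ possible heavy configurations, complete each via a weighted bipartite perfect matching (with the appropriate $p$-dependent objective), and in a parallel pass enforce EFX by the per-agent threshold $\tau_i$ that prunes edges in the matching graph, then compare the unconstrained and EFX-restricted optima. The only minor slip is calling the final comparison ``exact'': for non-integer $p<0$ the weights $v_i(g)^p$ (and for $p=0$ the $\log v_i(g)$) need not be rational, so one argues via sufficient precision rather than exact arithmetic, as the paper does.
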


\subsection{Optimization among EFX Allocations} 
When EFX is not compatible with global $p$-mean welfare maximization, the natural fallback is the constrained optimization problem: maximize $p$-mean welfare subject to EFX.
This problem is meaningful even when no globally $p$-mean optimal allocation satisfies EFX, because in our few surplus goods setting, EFX allocations are guaranteed to exist.
The two questions are related but different:
Section~\ref{subsec:global_compatibility} asks whether the welfare optimum can be achieved without sacrificing EFX; here we ask what welfare is achievable given that we insist on EFX.\footnote{This distinction is exactly what later drives the price of EFX bounds we obtain as well: the numerator is the global welfare optimum, and the denominator is the best welfare attainable under EFX.}

Here, we also obtain a clean complexity split at $p=0$: for $p\le 0$ and $c\in \{1,2,3\}$ we can compute a $W_p$-maximizing EFX$_0$ (and hence EFX) allocation in polynomial time, whereas for $p\in(0,1]$, the corresponding optimization problem is NP-hard; as we show with the following results.

\begin{theorem}\label{thm:efx0-efx-welfare-opt}
    Fix any $p \leq 0$ and $c\in\{1,2,3\}$.
    Given an instance $\mathcal{I} = (N,G, \mathbf{v})$ with $m = n + c$, one can in polynomial time compute an \emph{EFX$_0$} (or \emph{EFX}) allocation that maximizes the $p$-mean welfare among all \emph{EFX$_0$} (resp., \emph{EFX}) allocations.
\end{theorem}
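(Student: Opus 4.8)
The plan is to exploit the structural rigidity that $p \le 0$ forces, together with the fact that EFX constraints are vacuous toward singleton bundles, to reduce the constrained optimization to polynomially many bipartite matching problems.

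\emph{Step 1 (structural skeleton).} First I would observe that, since $p \le 0$, any allocation $\mathcal{A}$ with $W_p(\mathcal{A}) > 0$ must give every agent positive value, hence at least one good. With $m = n+c$ goods split among $n$ agents so that each receives at least one, the total surplus satisfies $\sum_{i}(|A_i| - 1) = c$, so at most $c$ bundles are non-singleton and these collectively contain at most $2c \le 6$ goods. The \emph{non-singleton skeleton} --- the choice of which goods lie in non-singleton bundles, their grouping, and the owner of each group --- is therefore a constant-sized object, and can be enumerated in time $O(m^{2c}\, n^{c})$, which is polynomial for fixed $c \le 3$.

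\emph{Step 2 (EFX collapses to edge admissibility).} The key observation is that EFX (and EFX$_0$) is automatically satisfied toward any singleton bundle $\{g\}$, since removing $g$ leaves the empty set. Thus, once the skeleton is fixed, the only binding constraints concern envy toward the (at most $c$) fixed non-singleton bundles. For each agent $i$ these yield a fixed lower threshold $\tau_i := \max_{j} \max_{g \in A_j :\, v_i(g) > 0} v_i(A_j \setminus \{g\})$ (with the inner max taken over all $g \in A_j$ in the EFX$_0$ variant), and EFX is equivalent to $v_i(A_i) \ge \tau_i$ for every $i$. For the non-singleton owners this is a fixed check; for a singleton agent $i$ receiving good $g$ it says simply that $g$ is admissible for $i$ iff $v_i(g) > 0$ and $v_i(g) \ge \tau_i$. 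Hence the EFX requirement becomes a restriction on which edges are permitted in the bipartite graph between the remaining agents and the remaining goods.

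\emph{Step 3 (completion by matching).} Given a fixed feasible skeleton, the remaining goods must be matched one-to-one to the remaining agents --- the counts agree because $|S| = c + (\text{number of non-singleton bundles})$, so completeness forces a perfect matching. Since the contributions $v_j(A_j)^p$ of the non-singleton owners are fixed, maximizing $W_p$ reduces to optimizing the separable term $\sum_{\text{singleton } i} v_i(g_i)^p$ over admissible perfect matchings, a linear objective in the matching indicator. For $p < 0$, maximizing $W_p$ is equivalent to minimizing $\sum_i v_i(A_i)^p$, solved by a min-cost perfect matching with edge costs $v_i(g)^p$; for $p = 0$ it is the max-weight matching with edge weights $\log v_i(g)$ (the Nash product). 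I would then return the best allocation over all skeletons; if no branch admits a positive-welfare EFX allocation, the optimum is $W_p = 0$ and any EFX allocation (which exists and is computable by the cited envy-cycle/potential-function procedures) may be returned.

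\emph{Main obstacle.} The hard part will be Step 2: making precise that the nonlinear, globally coupled EFX condition collapses --- once the constant-size skeleton is fixed --- into per-agent lower-bound thresholds and hence into forbidden edges of a matching, so that the constrained problem inherits the tractability of linear assignment. Ancillary care is also needed to verify the counting identity guaranteeing a perfect matching, to handle the degenerate $W_p = 0$ fallback, and to confirm that the same threshold reduction treats EFX and EFX$_0$ uniformly.
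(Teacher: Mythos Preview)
Your proposal is correct and matches the paper's proof essentially line for line: the same skeleton enumeration of the at most $c$ non-singleton bundles (totaling at most $2c$ goods), the same collapse of EFX/EFX$_0$ into per-agent thresholds $\tau_i$ that become edge-admissibility constraints in a bipartite graph, and the same reduction of the residual problem to a min-cost (for $p<0$) or max-weight-with-log-weights (for $p=0$) perfect matching, with the $W_p=0$ fallback handled separately. The paper formalizes your Step~2 as a standalone lemma characterizing EFX$_0$ completions of a fixed heavy part via exactly the two conditions you identify (a fixed check among heavy owners, and $v_\ell(\pi(\ell))\ge\tau_\ell$ for light agents), and treats EFX by the same threshold modification you describe.
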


\begin{theorem} \label{thm:efx-pmean-opt-hard}
    Fix any $p\in(0,1]$ and integer $c \leq 3$. Given an instance $\mathcal{I} = (N, G, \mathbf{v})$ with $m = n + c \geq 1$, computing an \emph{EFX} (or \emph{EFX}$_0$) allocation that maximizes the $p$-mean welfare among all \emph{EFX} (resp., \emph{EFX}$_0$) allocations is NP-hard.
\end{theorem}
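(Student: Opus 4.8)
The plan is to derive NP-hardness by a Turing reduction from the compatibility decision problem of Theorem~\ref{thm:efx_global_p>0_hard}, which is already known to be NP-hard for both EFX and EFX$_0$ in exactly this regime ($m = n + c$ with $c \le 3$). The key observation linking the two problems is that any algorithm for the present optimization task returns precisely the constrained optimum $W_p^* = \max_{\mathcal{A} \in \Pi_\mathrm{EFX}(G)} W_p(\mathcal{A})$, whereas the compatibility question asks whether an EFX allocation can attain the global (unconstrained) $p$-mean optimum $\OPT_p := \max_{\mathcal{A} \in \Pi(G)} W_p(\mathcal{A})$. Since EFX allocations form a subset of all allocations, we always have $W_p^* \le \OPT_p$, and an EFX allocation achieves the global optimum if and only if $W_p^* = \OPT_p$. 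Thus the two problems are tightly coupled: solving the optimization problem yields the number $W_p^*$, and comparing it against $\OPT_p$ decides compatibility.

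Concretely, I would reuse the instances built in the proof of Theorem~\ref{thm:efx_global_p>0_hard}, together with the explicit globally $W_p$-optimal allocation that the construction exhibits; this allocation pins down $\OPT_p$ as a polynomial-time computable quantity. Given a hypothetical polynomial-time solver for the optimization problem, the reduction runs it on such an instance to obtain $W_p^*$ and then answers ``yes'' exactly when $W_p^* = \OPT_p$. By the defining property of the reduction in Theorem~\ref{thm:efx_global_p>0_hard}, this correctly decides the compatibility problem, so no polynomial-time optimization algorithm can exist unless $\mathrm{P} = \mathrm{NP}$.

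The main obstacle is making the comparison threshold $\OPT_p$ available without solving a second hard problem: for $p \in (0,1]$ the unconstrained $W_p$-maximization is itself NP-hard in general, so one cannot simply call a generic welfare optimizer to recover $\OPT_p$. This is precisely where the structure of the construction in Theorem~\ref{thm:efx_global_p>0_hard} is exploited---there the globally optimal value is witnessed by an explicit allocation produced by the construction itself, so $\OPT_p$ is known a priori and only its \emph{EFX feasibility} encodes the underlying combinatorial problem. Finally, because the reduction never inspects the internal definition of $\Pi_\mathrm{EFX}(G)$ and Theorem~\ref{thm:efx_global_p>0_hard} is stated for both notions, the argument transfers verbatim to EFX and to EFX$_0$, as required.
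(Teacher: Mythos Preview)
Your high-level strategy—use a hypothetical optimizer for $W_p^*$ to decide the compatibility question of Theorem~\ref{thm:efx_global_p>0_hard}—is sound in principle, but the step where you claim that ``$\OPT_p$ is known a priori'' does not hold on the instances from Theorem~\ref{thm:efx_global_p>0_hard}. In that construction, every global $\Phi_p$-maximizer must split the partition goods between agents $1$ and $2$, so for $p\in(0,1)$ the value
\[
\OPT_p \;=\; \max_{S\subseteq[k]}\Bigl\{\bigl(\textstyle\sum_{j\in S}a_j\bigr)^p+\bigl(T-\textstyle\sum_{j\in S}a_j\bigr)^p\Bigr\} \;+\;\text{(fixed terms)}
\]
depends on the most balanced achievable integer split of $(a_1,\dots,a_k)$, which is exactly the hard part of \textsc{Partition}. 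The ``explicit globally optimal allocation'' the construction exhibits exists only in the \emph{yes} case (it needs a set $S$ with $\sum_{j\in S}a_j=T/2$); in the \emph{no} case the construction gives you no optimal allocation and no value for $\OPT_p$. So the comparison $W_p^*\stackrel{?}{=}\OPT_p$ cannot be carried out in polynomial time as written.

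There are two clean fixes. First, compare $W_p^*$ not to $\OPT_p$ but to the \emph{computable} upper bound $B$ (the value obtained at the ideal split $U_1=U_2=T/2$); one checks that $W_p^*=B$ in a yes-instance and $W_p^*<B$ in a no-instance for every $p\in(0,1]$, which restores the reduction. Second—and this is what the paper actually does—reduce directly from \textsc{Partition}: run the optimizer, read off $U_1=\sum_{j:g_j\in A_1}a_j$ from the returned allocation, and output ``yes'' iff $U_1=T/2$. The paper's argument shows that any EFX-optimal allocation in a yes-instance must have $U_1=T/2$ (the EFX constraint toward the agent holding $\{x,y\}$ forces $U_1,U_2\ge T/2$), so no welfare comparison is needed at all. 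Either fix closes the gap; your proposal as stated does not.
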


The above show that the computational and structural difficulty of combining EFX with exact welfare maximization is governed by the welfare objective: for $p>0$ the combination is already intractable, while for $p\le 0$ it becomes tractable in sparse instances and remains tractable in our setting when we optimize within the space of EFX allocations.

\section{Price of EFX} \label{sec:price}
The previous section characterized when EFX is compatible with $p$-mean welfare maximization, and when we can (or cannot) efficiently optimize welfare within the space of EFX allocations. 
These results, however, do not answer a more basic quantitative question: when EFX forces us away from the $W_p$-optimum, \emph{how much} does it cost us?
This is especially important in the few surplus items setting, where existence is no longer the bottleneck: for $m \le n+3$, EFX (and EFX$_0$) allocations are guaranteed to exist, so welfare loss is driven purely by the strength of the fairness constraint rather than nonexistence.

To formalize this trade-off, we adopt the \emph{price of fairness} framework, that is well-studied in fair division \citep{bei2021price,bertsimas2011priceoffairness,caragiannis2012efficiencyoffairdivision,celine2023egalitarianPrice,li2024completelandscapeEF}, and study the \emph{price of EFX} (resp., \emph{price of EFX$_0$}), which is the worst-case multiplicative gap between the global $W_p$-optimum and the best $p$-mean welfare achievable by an EFX (resp., EFX$_0$) allocation.
This notion is a natural quantitative complement to the compatibility question from Section~\ref{subsec:global_compatibility}: the price is $1$ exactly on instances where an EFX allocation attains the global $W_p$-optimum, and it grows with the amount of welfare that must be sacrificed to enforce EFX.
Importantly, because EFX$_0$ is a stronger requirement than EFX, its feasible set is smaller; correspondingly, its price can only be larger, and we state our bounds for both notions.

Our bounds reveal a sharp transition at $p=0$, i.e., between objectives that tolerate zero utility ($p > 0$) and objectives that do not ($p \leq 0$).
For $p>0$, the global $W_p$-optimum may concentrate value on a small number of agents, and EFX can force substantial ``deconcentration'' even only when $c\leq 3$ surplus goods are present.
In contrast, for $p\le 0$, any allocation with an agent with zero utility has $p$-mean welfare $0$, so positive welfare necessarily requires giving every agent strictly positive value; here, we show that enforcing EFX incurs only a constant-factor loss depending on the surplus $c$ (and for Nash welfare, the loss vanishes with $n$).

We now formalize these guarantees.

\begin{definition}
    Fix a $p \leq 1$ and an integer $c \leq 3$ where $m = n + c \geq 0$.
    The price of EFX (resp., EFX$_0$) for an instance $\mathcal{I} = (N, G, \mathbf{v})$ is the ratio between the maximum $p$-mean welfare of any allocation for $\mathcal{I}$ (i.e., in $\Pi(G)$)
    and the maximum $p$-mean welfare of an EFX allocation for $\mathcal{I}$ (i.e., in $\Pi_\mathrm{EFX}(G)$):
    \begin{align*}
        \mathrm{PoEFX}^{\mathcal{I}}(p, c) & := \frac{\max_{\mathcal{A}\in \Pi(G)} W_p(\mathcal{A})}{\max_{\mathcal{A}\in \Pi_{\mathrm{EFX}}(G)} W_p(\mathcal{A})}, \quad \mathrm{PoEFX}_0^{\mathcal{I}}(p, c) & := \frac{\max_{\mathcal{A}\in \Pi(G)} W_p(\mathcal{A})}{\max_{\mathcal{A}\in \Pi_{\mathrm{EFX_0}}(G)} W_p(\mathcal{A})},
    \end{align*}
    where the ratio is defined to be $0$ if both the numerator and denominator is $0$.
    The (overall) price of EFX (resp., EFX$_0$) is obtained by taking the supremum over all instances $\mathcal{I}$:
    \begin{align*}
        \mathrm{PoEFX}(p,c) & := \sup_{\mathcal{I}} \mathrm{PoEFX}^{\mathcal{I}}(p,c), \quad 
        \mathrm{PoEFX}_0(p,c) & := \sup_{\mathcal{I}} \mathrm{PoEFX}_0^{\mathcal{I}}(p,c).
    \end{align*}    
\end{definition}
Naturally, it makes sense to study the above quantities only for cases where we know an EFX$_0$ (resp., EFX) allocation exists, otherwise the non-emptiness of $\Pi_{\mathrm{EFX}_0}(G)$ (resp., $\Pi_{\mathrm{EFX}}(G)$) is unclear.
In this regard, for the setting with few surplus items, we know that for $m \leq n+3$, EFX$_0$ (and hence EFX) allocations are guaranteed to exist, and we focus on these cases.

Note that the price depends on three parameters: the number of agents $n$, the surplus $c$, and the parameter $p$.
Theorems~\ref{thm:price_>0_lowerbound} and \ref{thm:price_>0_upperbound} show that for $p\in(0,1]$ the price can grow with the number of agents (already $\Omega(n)$ for constant $c$), while  Theorems~\ref{thm:price_leq0_lowerbound} and \ref{thm:pleq0_upperbound} show that for $p\le 0$ the price is uniformly bounded by $1+c$ (and by $1+c/n$ for Nash welfare).
Taken together, these results quantify when EFX is ``expensive'' (welfare can be polynomially worse) and when it is ``essentially free'' (welfare loss is at most constant, or even asymptotically negligible).

We first consider the case where $p > 0$, and provide the following lower bound.
\begin{theorem} \label{thm:price_>0_lowerbound}
    For any fixed $p \in (0, 1]$ and any integer $c \leq 3$ such that $m=n+c \geq 1$, 
    \begin{equation*}
    	\mathrm{PoEFX}_0(p,c) \geq \mathrm{PoEFX}(p,c) \geq \frac{n+c}{\lfloor (2n+c-1)/n\rfloor} - o(1).
    \end{equation*}
\end{theorem}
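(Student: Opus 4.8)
The plan is to exhibit a one-parameter family of instances (parameterized by a large value $H$) on which the ratio of the global $p$-mean optimum to the best EFX welfare tends to $\frac{n+c}{\lfloor(2n+c-1)/n\rfloor}$ as $H\to\infty$. The inequality $\mathrm{PoEFX}_0(p,c)\ge \mathrm{PoEFX}(p,c)$ is immediate and holds instance-wise: since EFX$_0$ implies EFX we have $\Pi_{\mathrm{EFX}_0}(G)\subseteq \Pi_{\mathrm{EFX}}(G)$, so the denominator defining $\mathrm{PoEFX}_0^{\mathcal I}$ is no larger than that of $\mathrm{PoEFX}^{\mathcal I}$ while the numerators coincide; taking suprema gives the claim. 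For the substantive bound, fix $H\gg 1$ and take $n$ agents and $m=n+c$ goods with $v_1(g)=H$ for every good $g$ and $v_i(g)=1$ for every $g$ and every $i\ge 2$. All goods are positively valued, so EFX and EFX$_0$ coincide on this instance and the same construction serves both prices.

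First I would lower-bound the numerator with a single allocation: giving all $m$ goods to agent $1$ yields utilities $(mH,0,\dots,0)$, so $W_p=\big((mH)^p/n\big)^{1/p}=mH\,n^{-1/p}$ (using $0^p=0$, valid since $p>0$), whence $\max_{\mathcal A\in\Pi(G)}W_p(\mathcal A)\ge mH\,n^{-1/p}$.

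The heart of the argument is an upper bound on $\max_{\mathcal A\in\Pi_{\mathrm{EFX}}(G)}W_p(\mathcal A)$, driven by a structural constraint on how many goods agent $1$ can hold. Let $\mathcal A$ be any EFX allocation and set $k:=|A_1|$. Because every agent $i\ge 2$ values each good at $1$, the EFX condition of $i$ toward $1$ forces $v_i(A_i)\ge v_i(A_1\setminus\{g\})=k-1$ for every $g\in A_1$, i.e.\ $|A_i|\ge k-1$. Summing gives $m=|A_1|+\sum_{i\ge 2}|A_i|\ge k+(n-1)(k-1)=nk-(n-1)$, which rearranges to $nk\le 2n+c-1$ and hence $k\le k^\star:=\lfloor(2n+c-1)/n\rfloor$. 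Consequently $v_1(A_1)\le k^\star H$, while $x^p\le x$ for integers $x\ge 1$ (and $0^p=0$) yields $\sum_{i\ge 2}v_i(A_i)^p\le\sum_{i\ge 2}|A_i|\le m$. Therefore every EFX allocation satisfies $W_p(\mathcal A)\le\big(\tfrac{(k^\star H)^p+(n+c)}{n}\big)^{1/p}$.

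Combining the two bounds, the price on this instance is at least $\frac{mH\,n^{-1/p}}{\big((k^\star H)^p+(n+c)\big)^{1/p}n^{-1/p}}=\frac{m}{k^\star}\big(1+\tfrac{n+c}{(k^\star H)^p}\big)^{-1/p}$, which tends to $\frac{m}{k^\star}=\frac{n+c}{\lfloor(2n+c-1)/n\rfloor}$ as $H\to\infty$; taking the supremum over $H$ (equivalently, absorbing the vanishing correction into the $-o(1)$) proves the theorem for both notions. The main obstacle is precisely the EFX structural bound $|A_1|\le k^\star$: the rest is routine arithmetic, but this step is what pins the denominator to exactly $\lfloor(2n+c-1)/n\rfloor$, and it crucially exploits that the ``poor'' agents value all goods uniformly, so that deleting a single good from $A_1$ still leaves them envious unless their own bundle is nearly as large. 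I would also verify the degenerate case $n=1$, where $k^\star=m$ and the bound correctly collapses to $1$, matching the fact that EFX is vacuous with a single agent.
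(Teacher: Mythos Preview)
Your proposal is correct and follows essentially the same approach as the paper: the instance with $v_1\equiv H$, $v_{i\ge 2}\equiv 1$ is just a rescaling of the paper's $v_1\equiv 1$, $v_{i\ge 2}\equiv \varepsilon$, and the key step---the EFX structural bound $|A_1|\le \lfloor(2n+c-1)/n\rfloor$ from summing the per-agent constraints $|A_i|\ge |A_1|-1$---is identical. Your argument is in fact slightly more economical, since you only upper-bound the EFX optimum (which suffices for a lower bound on the price), whereas the paper additionally constructs an explicit EFX allocation attaining $|A_1|=k^\star$ to pin down the EFX optimum exactly.
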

Intuitively, the above lower bound is driven by a highly asymmetric instance in which one agent values every good much more than everyone else.
For $p>0$, the welfare objective can benefit from concentrating all goods on that agent, because leaving other agents at (near) zero utility may be optimal.
EFX, however, acts as a strong ``anti-hoarding'' constraint: if a single agent accumulates many goods that others value even slightly, then after removing any good from that bundle, the remainder can still look too attractive and triggers envy.
In the constructed family, this forces the high-value agent's bundle size to be bounded by a function of $(n,c)$ (roughly, a constant when $c$ is constant), which in turn gives us a linear welfare gap between the global $W_p$-optimum and the best EFX allocation.

Next, we provide upper bounds for the same setting.

\begin{theorem} \label{thm:price_>0_upperbound}
    For any $p\in(0,1]$ and any integer $c \leq 3$ with $m=n+c \geq 1$,
    \begin{equation*}
        \mathrm{PoEFX}(p,c) \leq \mathrm{PoEFX}_0(p,c) \leq (n+c) \cdot n^{1/p-1}.
    \end{equation*}
\end{theorem}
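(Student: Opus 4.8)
The plan is to prove the stronger bound $\mathrm{PoEFX}_0(p,c) \le (n+c)\,n^{1/p-1}$; the first inequality $\mathrm{PoEFX}(p,c) \le \mathrm{PoEFX}_0(p,c)$ is immediate because every EFX$_0$ allocation is EFX, so $\Pi_{\mathrm{EFX}_0}(G) \subseteq \Pi_{\mathrm{EFX}}(G)$ and the denominator defining $\mathrm{PoEFX}_0$ is no larger than that defining $\mathrm{PoEFX}$. Throughout, fix an instance, let $\mathcal{A}^*$ be a global $W_p$-maximizer, and write $v_{\max} := \max_{i \in N,\, g \in G} v_i(g)$, attained by some agent $i^*$ on some good $g^*$.

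For the numerator I would bound the global optimum by utilitarian welfare. Since $0 < p \le 1$, the power-mean inequality gives $W_p(\mathcal{A}) \le W_1(\mathcal{A}) = \frac{1}{n}\sum_{i} v_i(A_i)$ for every allocation, and because each good contributes at most $v_{\max}$ to the utilitarian welfare, $\sum_i v_i(A_i^*) \le m\,v_{\max}$. Hence $W_p(\mathcal{A}^*) \le m\,v_{\max}/n$.

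For the denominator the key step is to exhibit a single EFX$_0$ allocation $\mathcal{B}$ in which some agent's bundle is worth at least $v_{\max}$ to that agent. Given such a $\mathcal{B}$, the elementary bound $W_p(\mathcal{B}) = \left(\frac{1}{n}\sum_i v_i(B_i)^p\right)^{1/p} \ge \left(\frac{1}{n}\left(\max_i v_i(B_i)\right)^p\right)^{1/p} = \max_i v_i(B_i)/n^{1/p} \ge v_{\max}/n^{1/p}$ (valid since $p>0$) lower-bounds the best EFX$_0$ welfare. Combining the two estimates then yields
\[
\mathrm{PoEFX}_0(p,c) \le \frac{m\,v_{\max}/n}{v_{\max}/n^{1/p}} = \frac{m}{n}\,n^{1/p} = m\,n^{1/p-1} = (n+c)\,n^{1/p-1},
\]
as desired.

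The main obstacle is the existence of such a high-value EFX$_0$ allocation $\mathcal{B}$. When $m \le n$ (i.e.\ $c \le 0$) this is immediate: assign the goods injectively with $g^* \mapsto i^*$, leaving any extra agents empty; every nonempty bundle is then a singleton, so EFX$_0$ holds trivially (removing the unique good from any envied singleton leaves the empty set), and $v_{i^*}(B_{i^*}) = v_{\max}$. The substantive case is $1 \le c \le 3$, where completeness forces some bundle to contain at least two goods and EFX$_0$ becomes a genuine constraint. Here I would start $i^*$ with the singleton $\{g^*\}$ (which no other agent can EFX$_0$-envy, since removing its only good empties it) and complete the allocation on the remaining $n-1$ agents and $n-1+c$ goods using the guaranteed existence of EFX$_0$ allocations in the few-surplus regime ($m \le n+3$); the delicate point is ensuring that $i^*$ itself does not EFX$_0$-envy any multi-good bundle, which I would try to control using the fact that the surplus is at most three (so only a bounded number of bundles can exceed size one, and every size-two bundle is automatically safe for $i^*$ because deleting one good leaves a single good of value at most $v_{\max}$). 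Handling the residual size-$\ge 3$ bundles that can arise when $c \in \{2,3\}$—either by a bounded case analysis on bundle-size profiles or by a local modification that draws a good into $i^*$'s bundle while preserving EFX$_0$—is where the real work lies, and I expect this completion argument to be the crux of the proof.
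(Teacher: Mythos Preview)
Your overall outline matches the paper exactly: bound the numerator by $W_p \le W_1 \le m\,v_{\max}/n$ via the power-mean inequality, bound the denominator from below by exhibiting an EFX$_0$ allocation in which some agent receives value at least $v_{\max}$, and divide. The $m \le n$ case is also handled identically (singletons are trivially EFX$_0$).

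The gap is precisely the step you flag as ``the crux'': for $c \in \{2,3\}$ you have not actually produced the required EFX$_0$ allocation, and the separate-then-patch strategy you sketch---allocate the remaining $n-1+c$ goods among the other $n-1$ agents via the black-box existence result, then repair $i^*$'s envy toward any size-$\ge 3$ bundle---does not obviously close. The black-box EFX$_0$ allocation among the other agents gives no control over $i^*$'s envy toward a bundle of size $3$ or $4$, and ``drawing a good into $i^*$'s bundle'' can create new EFX$_0$ violations from the donor agent toward $i^*$ (whose bundle is now size $2$) or from third parties; the bounded surplus alone does not make this terminate.

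The paper sidesteps this by a different construction that never needs to patch. It starts from the partial allocation in which only $i^*$ holds $\{g^*\}$ and every other agent is empty (trivially EFX$_0$), and then inserts the remaining $m-1$ goods one at a time, after each insertion applying Mahara's potential-based EFX$_0$-restoring exchanges from the $m \le n+3$ existence proof. This incremental procedure maintains EFX$_0$ \emph{globally}---including all of $i^*$'s constraints---at every step, so there is nothing to repair at the end. The only additional point is that $g^*$ is never moved away from $i^*$; the paper argues this holds because a singleton bundle sits at the bottom of every other agent's envy comparison (removing its only good leaves $\varnothing$), so any exchange that would touch it can be re-oriented or tie-broken away. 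The resulting complete allocation $\widehat{\mathcal{A}}$ is EFX$_0$ with $g^* \in \widehat{A}_{i^*}$, hence $v_{i^*}(\widehat{A}_{i^*}) \ge v_{\max}$, and your ratio computation then finishes verbatim.
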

The upper bound complements our lower bound by showing that even when EFX is costly, one can always guarantee a nontrivial baseline welfare under EFX$_0$ (hence also under EFX).
This gives us a general worst-case bound in terms of $(n,c,p)$, and explains why the price can scale with $n$ even with constant surplus: the numerator can grow with the total number of goods/agents, whereas the denominator can be certified using only a single highly valuable item.

Next, we consider the case where $p \leq 0$. Our lower bound is as follows.
\begin{theorem} \label{thm:price_leq0_lowerbound}
    For any $p \leq 0$ and any integer $c \leq 3$ with $m =n + c \geq 1$, if $c < 0$ (i.e., $m < n$), then $$\mathrm{PoEFX}(p,c) = \mathrm{PoEFX}_0(p,c) = 0;$$ whereas if $c \in \{0,1,2,3\}$, then 
    \begin{align*}
        \mathrm{PoEFX}_0(p,c) & \geq \mathrm{PoEFX}(p,c) \geq \frac{1}{\max_{\gamma \in \{1,2\}} \Big( \frac{\gamma}{c+1} \cdot \Big(1+\tfrac{c+1-\gamma}{n-1}\Big)^{n-1} \Big)^{1/n}}.
    \end{align*}
\end{theorem}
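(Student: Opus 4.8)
The plan is to handle the two regimes of the statement separately, with the $c<0$ case being immediate and the $c\in\{0,1,2,3\}$ case driven by an explicit worst-case instance. Since EFX$_0$ is the stronger requirement, $\Pi_{\mathrm{EFX}_0}(G)\subseteq\Pi_{\mathrm{EFX}}(G)$, so $\max_{\mathrm{EFX}_0}W_p\le\max_{\mathrm{EFX}}W_p$ and hence $\mathrm{PoEFX}_0^{\mathcal I}\ge\mathrm{PoEFX}^{\mathcal I}$ on every instance; it therefore suffices to lower-bound $\mathrm{PoEFX}$ and the first inequality comes for free. For the \emph{degenerate regime} $c<0$ we have $m<n$, so every allocation leaves at least $n-m>0$ agents with the empty bundle, giving some $v_i(A_i)=0$. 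By the convention for $p\le0$ in the definition of $W_p$, every allocation then has $W_p=0$; both numerator and denominator of $\mathrm{PoEFX}^{\mathcal I}(p,c)$ vanish on every instance, so by definition the ratio is $0$, and taking the supremum gives $\mathrm{PoEFX}(p,c)=\mathrm{PoEFX}_0(p,c)=0$. (This is the same positivity phenomenon that forced singleton bundles in Theorem~\ref{prop:pleq0_polytime}.)

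For the \emph{surplus regime} $c\in\{0,1,2,3\}$ I would exhibit a single family of instances, parameterized by $n$, built around one distinguished ``surplus-loving'' agent, together with $n-1$ ``base'' goods (one essentially owned by each remaining agent) and $c+1$ ``surplus'' goods that the distinguished agent values highly. The valuations must be tuned toward two competing goals. First, in the \emph{unconstrained} optimum the surplus goods should \emph{concentrate} on the distinguished agent, producing a value profile proportional to $(c+1,1,\dots,1)$ and, at $p=0$, Nash product $c+1$, while every other agent still retains a positive (unit) value so that $W_p>0$ for all $p\le0$. Second, EFX should act as an \emph{anti-hoarding} constraint forbidding the distinguished agent from retaining more than $\gamma$ surplus goods, where the admissible hoarding level is $\gamma\in\{1,2\}$ (this is exactly where the ``remove any good'' strength of EFX bites: once a bundle of several surplus goods is held, deleting one still leaves a remainder that an owner of a single base good envies).

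Given the anti-hoarding cap, in any EFX allocation the distinguished agent contributes at most $\gamma$, and the remaining $c+1-\gamma$ surplus goods must be spread among the other $n-1$ agents. For the Nash objective, the product of those $n-1$ values is maximized, by AM--GM, when the redistributed value is spread as evenly as possible, giving at most $\gamma\cdot\bigl(1+\tfrac{c+1-\gamma}{n-1}\bigr)^{n-1}$; the best EFX Nash product is therefore at most $\max_{\gamma\in\{1,2\}}\gamma\,\bigl(1+\tfrac{c+1-\gamma}{n-1}\bigr)^{n-1}$, the maximum appearing because we must upper-bound the EFX optimum over \emph{both} admissible hoarding levels. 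Dividing the global optimum $c+1$ by this bound and taking $n$-th roots yields exactly the claimed expression at $p=0$. I emphasize that the crux here is producing a genuine \emph{upper} bound on the best EFX welfare (no EFX allocation beats the AM--GM bound), not merely exhibiting one bad EFX allocation, since it is the denominator $\max_{\mathrm{EFX}}W_p$ that must be controlled.

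Finally I would argue that the same instance certifies the (identical, $p$-independent) bound for every $p\le0$: because the two relevant profiles are ordered, with the global profile more concentrated than the EFX-forced profile, the $p$-mean ratio on this family is controlled by and no smaller than its geometric-mean ($p=0$) value; alternatively one tailors the instance per $p$ so that the ratio is $p$-invariant. I expect the main obstacle to be the joint tuning of goal two: making the other agents' marginal value for the surplus goods \emph{small} enough that the unconstrained optimum genuinely concentrates, yet \emph{large} enough that EFX is violated once the distinguished agent holds more than $\gamma$ of them. These requirements pull in opposite directions, and reconciling them simultaneously for each $c\le3$—while certifying that $\gamma\in\{1,2\}$ really does capture the EFX-optimal hoarding level—is the delicate part of the construction; the transfer from $p=0$ to all $p\le0$ is the secondary point requiring care.
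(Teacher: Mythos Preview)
Your overall architecture matches the paper's proof: the $c<0$ case is exactly as you say, and for $c\in\{0,1,2,3\}$ the paper also builds one distinguished agent plus $n-1$ private goods and $c+1$ surplus goods, lower-bounds the unconstrained $W_p$, argues that EFX caps the distinguished agent's surplus at $\gamma\in\{1,2\}$, and finishes with AM--GM on the remaining $n-1$ coordinates.

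Two specific intuitions you state, however, point in the wrong direction and would make the construction harder than it needs to be. First, the tension you describe (other agents must value surplus \emph{small} enough for concentration yet \emph{large} enough for EFX to bite) is not resolved by finding a sweet spot. The paper instead makes the distinguished agent value \emph{only} the surplus goods (at $1/(c+1)$ each) and the other agents value surplus goods \emph{more} than their private goods (at $1+\varepsilon$). Concentration in the benchmark allocation is then forced not by comparative magnitudes but by the positivity requirement of $W_p$ for $p\le 0$: the distinguished agent must receive something, and surplus is all she cares about. With the other agents valuing surplus at $1+\varepsilon$, the EFX cap $\gamma\le 2$ is immediate: three surplus goods in the distinguished bundle leave at least two after a removal, worth $2(1+\varepsilon)$ to any other agent, exceeding that agent's own at most $2+\varepsilon$.

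Second, your proposed extension to $p<0$ via ``the more-concentrated profile makes the $p$-mean ratio no smaller than the $p=0$ ratio'' is not a valid step: both numerator and denominator move when $p$ changes, and there is no monotonicity of the \emph{ratio} in general. The paper sidesteps this entirely by normalizing the distinguished agent's per-good value to $1/(c+1)$, so the benchmark profile is exactly $(1,1,\dots,1)$ and hence $W_p(\mathcal A^*)=1$ for every $p\le 0$. Then one simply uses the power-mean inequality $W_p\le W_0$ on the EFX side, and the $p=0$ computation carries over verbatim.
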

Intuitively, for $p \leq 0$, any allocation achieving positive welfare must give every agent strictly positive value, ruling out the kind of extreme concentration that drives the $p>0$ lower bound.
In the few surplus items setting, this already imposes a strong structural constraint on welfare maximizers: there are only $c$ ``extra'' goods beyond one per agent, so even the global $W_p$-optimum cannot give too many goods to any single agent without ``starving'' other agents and collapsing welfare to $0$.

Finally, we provide the following upper bounds, also for the case where $p \leq 0$.
\begin{theorem} \label{thm:pleq0_upperbound}
    For any $p \leq 0$ any integer $c \leq 3$ with $m=n+c \geq 1$, if $c < 0$ (i.e., $m < n$), then $$\mathrm{PoEFX}(p,c) = \mathrm{PoEFX}_0(p,c) = 0;$$
    whereas if $c \in \{0,1,2,3\}$, then
    \begin{equation*}
        \mathrm{PoEFX}(p,c) \leq \mathrm{PoEFX}_0(p,c) \leq 1 + c.
    \end{equation*}
    Furthermore, when $p = 0$ and $c \in \{0,1,2,3\}$, then
    $$\mathrm{PoEFX}(0,c) \leq \mathrm{PoEFX}_0(0,c) \leq 1 + \frac{c}{n}.$$
\end{theorem}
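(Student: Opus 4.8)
The plan is to reduce everything to an upper bound on $\mathrm{PoEFX}_0$, since $\Pi_{\mathrm{EFX}_0}(G)\subseteq\Pi_{\mathrm{EFX}}(G)$ makes the EFX$_0$ denominator no larger and hence $\mathrm{PoEFX}(p,c)\le\mathrm{PoEFX}_0(p,c)$ automatically. First I would dispose of the degenerate cases. If $c<0$ then $m<n$, so by pigeonhole every complete allocation leaves some agent with the empty bundle; for $p\le 0$ the convention forces $W_p=0$ for all allocations, so both numerator and denominator vanish and the ratio is $0$ by definition. For $c\in\{0,1,2,3\}$, if the global $W_p$-optimum is $0$ then (since $W_p\ge 0$) every allocation, and in particular every EFX$_0$ allocation, has welfare $0$, so the ratio is again $0\le 1+c$. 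Thus I may fix an instance with $c\in\{0,1,2,3\}$ whose global optimum is strictly positive, and let $\mathcal{A}^*=(A_1^*,\dots,A_n^*)$ be a global $W_p$-maximizer. Positivity (under the $p\le 0$ convention) forces every $A_i^*\neq\varnothing$, so $\sum_i|A_i^*|=m=n+c$ with each $|A_i^*|\ge 1$; consequently $\max_i|A_i^*|\le 1+c$.

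Next I would pass from $\mathcal{A}^*$ to a singleton ``skeleton.'' For each $i$ set $g_i^*=\argmax_{g\in A_i^*}v_i(g)$ and $M_i=v_i(g_i^*)$; the $g_i^*$ are distinct since the $A_i^*$ are disjoint. Let $\mathcal{B}$ be the partial allocation giving agent $i$ exactly $g_i^*$, so its value vector is $(M_i)_{i\in N}$. Additivity gives $v_i(A_i^*)\le |A_i^*|\,M_i$. For general $p\le 0$ I would then use $\max_i|A_i^*|\le 1+c$ together with the monotonicity and positive homogeneity (degree $1$) of $W_p$ to conclude $W_p(\mathcal{A}^*)\le W_p\big((1+c)M_i\big)_i=(1+c)\,W_p(\mathcal{B})$. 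For the sharper Nash ($p=0$) bound I would instead invoke AM--GM on the bundle sizes: $\prod_i|A_i^*|\le\big(\tfrac1n\sum_i|A_i^*|\big)^n=(1+\tfrac cn)^n$, so that $\prod_i v_i(A_i^*)\le(1+\tfrac cn)^n\prod_i M_i$, and taking $n$-th roots yields $W_0(\mathcal{A}^*)\le(1+\tfrac cn)\,W_0(\mathcal{B})$.

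It then remains to produce a \emph{complete} EFX$_0$ allocation $\mathcal{C}$ that dominates $\mathcal{B}$ in the sense $v_i(C_i)\ge M_i$ for every $i$. Given such a $\mathcal{C}$, monotonicity of $W_p$ gives $\max_{\mathcal{A}\in\Pi_{\mathrm{EFX}_0}(G)}W_p(\mathcal{A})\ge W_p(\mathcal{C})\ge W_p(\mathcal{B})$, and combining with the previous paragraph yields $\mathrm{PoEFX}_0^{\mathcal{I}}(p,c)\le 1+c$ in general and $\le 1+\tfrac cn$ for $p=0$, as claimed. The plan for $\mathcal{C}$ is a \emph{value-preserving completion}: start from $\mathcal{B}$, which is itself EFX$_0$ (every held bundle is a singleton, so deleting its only good leaves $\varnothing$), and insert the $c\le 3$ leftover goods to reach completeness while (i) preserving EFX$_0$ and (ii) never dropping any agent below its floor $M_i$. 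When $c=0$ this is immediate, as $\mathcal{B}$ is already complete and EFX$_0$ with value vector exactly $(M_i)_i$, giving $\mathrm{PoEFX}_0\le 1$.

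I expect step (iii), the completion, to be the main obstacle rather than the power-mean estimates. Inserting a good into a bundle can destroy EFX$_0$ and may force reallocation, and the crux is to repair such violations without pushing any agent below $M_i$. The attack I would take is to adapt the envy-cycle/champion procedures used to establish EFX$_0$ existence for $c\le 3$ \citep{amanatidis2020multiple,mahara2023extension}, but initialized at the skeleton $\mathcal{B}$ and augmented with the bookkeeping invariant that each agent always retains a good worth at least $M_i$ to them; because the surplus is a small constant, the reachable bundle structures are limited enough to verify that the floor is maintained by a direct case analysis on $c\in\{1,2,3\}$. The delicate point—and where I expect the bulk of the work—is showing that every reallocation triggered along an envy cycle can be arranged to respect all floors simultaneously.
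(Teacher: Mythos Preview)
Your proposal is correct and follows essentially the same route as the paper: dispose of the degenerate cases, take a global $W_p$-maximizer, extract a best-good-per-agent skeleton, bound the welfare ratio via $\max_i|A_i^*|\le 1+c$ (and AM--GM on bundle sizes for $p=0$), and then complete the skeleton to a full EFX$_0$ allocation while preserving each agent's floor value. The paper handles the completion step exactly as you anticipate---by invoking \citet{mahara2023extension}'s incremental construction starting from the singleton skeleton and asserting that the EFX$_0$-restoring exchanges can be taken as envy-cycle eliminations that are coordinatewise value-nondecreasing---so your instinct that this step is where the real work hides, and that it ultimately rests on the $c\le 3$ existence machinery, is on the mark.
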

The above upper bound shows that this inherent balancing makes EFX comparatively cheap: starting from an global $W_p$-optimizer, one can retain, for each agent, a most valuable item and then allocate the remaining $c$ goods while preserving EFX$_0$ (and hence EFX), losing at most a factor of $1+c$ in $p$-mean welfare.
For Nash welfare ($p=0$), the bound strengthens to $1+c/n$, reflecting an even tighter alignment: with only constant surplus, the geometric mean is already strongly discouraging unequal allocations, so enforcing EFX changes the optimum only marginally as $n$ grows.
Overall, the $p\le 0$ case is one where strong fairness and inequality-averse efficiency objectives are naturally compatible up to small (and often vanishing) losses.

\section{EFX + Pareto-Optimality} \label{sec:efx+po}
Sections~\ref{sec:efx+welfare} and \ref{sec:price} studied quantitative efficiency under EFX through generalized-mean (or $p$-mean) welfare: we asked when EFX is compatible with welfare maximization and, when it is not, how large the worst-case welfare gap can be.
While this perspective is essential for understanding fairness-efficiency tradeoffs, and is the main focus of our work, we want to also briefly consider a more basic desideratum that is standard in fair division (and economics, more broadly): avoid outcomes that are unambiguously wasteful.
In particular, even when an allocation achieves a reasonable welfare guarantee, it may still be \emph{Pareto dominated} (i.e., there could exist another allocation that weakly improves every agent and strictly improves some agent).
Ruling out such dominated outcomes is arguably a weak meaningful notion of economic efficiency, and it is commonly studied in fair division.

This motivates the following question, which is logically orthogonal to the welfare questions above:
\emph{given that EFX allocations are guaranteed to exist in the few surplus items setting, can we also insist that the chosen EFX allocation be Pareto-optimal?}
Importantly, this is a qualitative compatibility question: unlike prior sections, we are not fixing a particular welfare objective.
This is useful both conceptually and practically: PO is objective-free, and it captures the minimal requirement that no value is ``wasted''.
It is defined as follows.
\begin{definition}[PO]
    An allocation $\mathcal{A} = (A_1,\dots,A_n)$ is \emph{Pareto-optimal (PO)} if there does not exist another allocation $\mathcal{A}' = (A'_1,\dots, A'_n)$ such that $v_i(A'_i)\geq v_i(A_i)$ for all $i \in N$, and $v_i(A'_i) > v_i(A_i)$ for some $i \in N$.
\end{definition}
A central theme in fair division is whether fairness and efficiency can be achieved simultaneously.
Even for the weaker EF1 notion, achieving EF1 and PO together is algorithmically delicate; no polynomial-time algorithm is known for general additive valuations (outside restricted domains such as binary valuations~\citep{halpern2020binary}), and the decision problem for EF + PO is already $\Sigma_2^P$-complete~\citep{dekeijzer2009efpo_sigmahard}.
One might hope that moving from EF to the weaker EFX condition makes the picture easier, especially in the few surplus items setting where EFX existence is guaranteed.

Our next result shows that this hope is unfounded, the existence question remains NP-hard in our setting, and in fact the stronger EFX$_0$ variant pushes the problem to the second level of the polynomial hierarchy.

\begin{theorem}\label{thm:PO_EFX}
    Fix any integer $c \leq 3$. 
    Given an instance $\mathcal{I} = (N,G, \mathbf{v})$ with $m=n+c \geq 1$, determining whether there exists:
    (i) an allocation that is both \emph{EFX} and \emph{PO} is NP-hard; and
    (ii) an allocation that is both \emph{EFX$_0$} and \emph{PO} is $\Sigma_2^P$-complete.
\end{theorem}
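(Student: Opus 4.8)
The plan is to prove both parts via reductions, treating membership and hardness separately for part (ii).

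\paragraph{Part (i): NP-hardness of EFX + PO.}
First I would reduce from a known NP-complete problem such as \textsc{Partition} or \textsc{3-Dimensional Matching}. The key design goal is to build an instance with $m = n + c$ (for the desired $c \le 3$) in which the EFX constraint is easy to satisfy structurally, but PO forces a combinatorial packing decision that encodes the ``yes''-instance. A natural approach is to create a small number of ``high-value'' agents whose valuations are tuned so that any PO allocation must partition a designated block of goods in a balanced way, while ``padding'' agents absorb the remaining goods so that completeness and the surplus count $m = n+c$ are met. The technical heart is to show the equivalence: a solution to the combinatorial instance yields an allocation that is simultaneously EFX and PO, and conversely, any EFX+PO allocation reveals such a solution. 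I would verify PO by exhibiting that no Pareto-improving reallocation exists (often easiest by showing the allocation maximizes utilitarian welfare, which implies PO under additive valuations), and verify EFX directly by bounding each agent's envy after removing any single positively-valued good.

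\paragraph{Part (ii): $\Sigma_2^P$-completeness of EFX$_0$ + PO.}
For membership in $\Sigma_2^P$, I would write the statement as $\exists \mathcal{A}\, \forall \mathcal{A}'$: there exists an allocation $\mathcal{A}$ such that for all allocations $\mathcal{A}'$, $\mathcal{A}$ is EFX$_0$ (a polynomial-time checkable property, so it can be folded into the existential guess) and $\mathcal{A}'$ does not Pareto-dominate $\mathcal{A}$. Since EFX$_0$ is verifiable in polynomial time and the PO condition is naturally a $\forall$ (coNP) predicate over the inner allocation, the whole decision problem sits at the second level of the polynomial hierarchy. For hardness, the cleanest route is to mirror the known $\Sigma_2^P$-completeness of EF + PO from \citet{dekeijzer2009efpo_sigmahard}, adapting their reduction to the EFX$_0$ setting and, crucially, to the few-surplus constraint $m = n + c$ with $c \le 3$. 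The adaptation must exploit that EFX$_0$ (unlike EFX) treats zero-valued goods as ``droppable,'' which typically makes the envy constraint behave more like full envy-freeness on the support of each valuation; I would lean on this to transplant the quantifier-alternation gadget from their EF construction.

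\paragraph{Main obstacle.}
The hard part will be the $\Sigma_2^P$-hardness reduction in part (ii), specifically forcing it into the few-surplus regime $m = n + c$ with $c \le 3$. Existing hierarchy-hardness reductions for EF + PO are free to use arbitrarily many surplus goods to build their quantifier gadgets; compressing the alternating $\exists/\forall$ structure into an instance with only a constant surplus of goods over agents is the delicate step, since the surplus directly controls how much ``slack'' is available for encoding the inner universal quantifier. I expect to need a careful gadget in which most agents are pinned to essentially-unique singleton assignments (consuming the bulk of the $n$ goods one-for-one), leaving only $O(1)$ genuinely contested goods whose placement encodes the $\exists$-choice, while the PO requirement against all reallocations simulates the $\forall$-adversary. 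Ensuring that this compression preserves both the EFX$_0$ feasibility and the intended logical equivalence—and that zero-valued goods can be parked without breaking PO—will be where most of the technical effort lies.
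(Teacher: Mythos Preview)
Your plan diverges from the paper's and is less concrete in a way that matters. The paper does \emph{not} build a fresh \textsc{Partition} gadget; it reduces from the already-known NP-hardness of deciding EFX + PO existence (for general additive instances, due to Garg and Murhekar) and then simply \emph{pads} to hit the target surplus $m=n+c$. Two small lemmas suffice: adding a zero-valued dummy agent decreases $m-n$ by $1$ while preserving EFX + PO existence, and adding one fresh agent who alone values two fresh goods increases $m-n$ by $1$ while preserving EFX + PO existence. Iterating these yields any desired $c\le 3$. Your proposed ``PO forces a balanced packing'' mechanism is not obviously sound: under additive valuations PO typically only forces goods to land with agents who value them positively, not to be balanced; balance would have to come from the EFX side, and you have not explained how EFX and PO interact to encode the source instance. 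A direct reduction may be possible, but the paper's padding route is both simpler and clearly correct.

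\paragraph{Part (ii).}
Your membership argument is fine and matches the paper. For hardness you correctly start from the $\Sigma_2^P$-completeness of EF + PO, but you have misidentified the main obstacle and are missing the key idea. Squeezing into the few-surplus regime is \emph{not} the hard step: the paper's construction adds $m$ dummy agents (valuing everything $0$) and $n$ private goods $p_1,\dots,p_n$ with $v'_i(p_i)=1$ and $v'_j(p_i)=0$ for $j\ne i$, so the new instance has $|N'|=n+m=|G'|$ and thus $c=0$ automatically; the other $c\le 3$ values are then obtained by the same padding gadgets as in part (i). The genuinely delicate step, which your proposal does not contain, is showing that EFX$_0$ in the new instance implies EF in the original one. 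The mechanism is: first scale each $v_i$ so that all bundle values are even integers; then, for any EFX$_0$ + PO allocation $\mathcal{A}'$, PO forces $p_i\in A'_i$, and applying the EFX$_0$ inequality to the removal of $p_j$ from $A'_j$ gives $v_i(A_i)+1\ge v_i(A_j)$, which by evenness upgrades to $v_i(A_i)\ge v_i(A_j)$, i.e., EF. Your vague remark that ``EFX$_0$ treats zero-valued goods as droppable'' gestures in the right direction but does not capture this private-good-plus-parity trick, which is the crux of the reduction.
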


The above result provides a sharp complement to Sections~\ref{sec:efx+welfare} and \ref{sec:price}.
In our setting, EFX existence is not the bottleneck; nevertheless, refining EFX to also satisfy even the weakest global efficiency notion becomes computationally intractable.
The $\Sigma_2^P$-completeness for EFX$_0$ is especially striking: EFX$_0$ differs from EFX only in whether zero-valued goods are allowed to be removed when certifying lack of envy, yet this small modeling choice elevates the problem beyond NP.
From a modeling perspective, this highlights that zero marginal values can fundamentally change the algorithmic landscape of fairness-efficiency combinations.
From a computational perspective, $\Sigma_2^P$-hardness rules out (under standard assumptions) generic polynomial-time approaches and, in particular, implies that the problem is unlikely to admit a polynomial-size integer linear program encoding.

\section{Other Settings} \label{sec:other_settings}
Our analysis has focused on the few surplus items setting $m=n+c \geq 1$ with $c\le 3$, where EFX (indeed EFX$_0$) allocations are guaranteed to exist and the bottleneck shifts from existence to welfare tradeoffs and computational barriers. The results in Sections~\ref{sec:efx+welfare} to \ref{sec:efx+po} also highlight two pivotal modeling features: surplus items, which makes EFX constraints actually binding, and zero marginal values, which can qualitatively change feasibility and complexity.

In this section, we give two complementary extensions that further isolate these effects. 
In Section~\ref{subsec:NMU} we assume \emph{nonzero marginal utilities} (NMU), under which EFX and EFX$_0$ coincide; this additional structure gives us stronger algorithmic guarantees for welfare optimization within the space of EFX allocations than are possible in general. 
In Section~\ref{subsec:constant_surplus} we return to the compatibility question from Section~\ref{subsec:global_compatibility}: whether the global $W_p$-optimum can be attained subject to EFX, but now from a parameterized perspective: we treat the surplus $c=m-n$ as a parameter and show that for every fixed $p\le 0$ the compatibility decision can be solved in time $(n+m)^{\mathcal{O}(c)}$.

Together, these results help disentangle which phenomena are intrinsic to EFX itself and which arise from sparsity ($m \leq n$) or zero-valued goods, and they further sharpen the boundary between tractable and intractable combinations of fairness and welfare in the few surplus items setting.

\subsection{Nonzero Marginal Utilities} \label{subsec:NMU}
A common technical obstacle in showing the existence and compatibility of EFX in conjunction with welfare/efficiency properties in general (and indeed, in our setting) is the presence of
\emph{zero marginal values} (items that an agent is indifferent to receiving).
Following \citet{plaut2020almost}, we say that a valuation $v_i:2^G\to \mathbb{R}_{\ge 0}$ has \emph{nonzero marginal utilities (NMU)} if for every
bundle $S\subseteq G$ and every good $g\in G\setminus S$, $v_i(S\cup\{g\}) - v_i(S) > 0$.
In particular, given additive valuations this is equivalent to $v_i(g)>0$ for all agents $i$ and goods $g$.
Also note that under NMU, every good is positively valued by every agent, so EFX and EFX$_0$ coincide.

Then, we show that we can compute an EFX (equivalently, EFX$_0$) allocation that maximizes $p$-mean welfare for any $p \leq 1$ in our setting, with the following result.

\begin{theorem} \label{thm:nmu-efx-pmean}
    Fix any $p\le 1$ and integer $c\le 3$. 
    Given an instance $\mathcal{I}=(N,G,\mathbf{v})$ satisfying NMU with $m=n+c\ge 1$,
    an allocation that is both EFX (equivalently, EFX$_0$) and maximizes the $p$-mean welfare among all EFX
    (equivalently, EFX$_0$) allocations always exists and is polynomial-time computable.
\end{theorem}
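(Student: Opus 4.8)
The plan is to exploit the single feature that distinguishes this NMU setting from the general $p\in(0,1]$ case of Theorem~\ref{thm:efx-pmean-opt-hard}: under NMU every good is positively valued by every agent, so the EFX constraint bites on \emph{every} good rather than only on positively-valued ones. In the hard instances behind Theorem~\ref{thm:efx-pmean-opt-hard}, a single agent can hoard arbitrarily many goods that everyone else values at $0$ without ever triggering envy, so bundle sizes are unbounded; NMU removes this escape hatch and rigidly bounds the combinatorial structure of every EFX allocation, which is what lets us reduce the whole problem to enumeration plus bipartite matching.

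First I would prove the structural lemma. For $c\ge 1$ I claim every EFX allocation assigns a non-empty bundle to each agent: if some agent $j$ had $A_j=\varnothing$, then the EFX condition of $j$ toward any other agent $i$ would force $0=v_j(\varnothing)\ge v_j(A_i\setminus\{g\})$ for all $g\in A_i$, and since NMU gives $v_j(A_i\setminus\{g\})>0$ whenever $|A_i|\ge 2$, all bundles would be singletons, allocating at most $n-1$ goods and contradicting $m=n+c\ge n+1$. Hence $\sum_i(|A_i|-1)=c$, so at most $c\le 3$ agents hold a ``large'' bundle (of size $\ge 2$), and these large bundles jointly contain at most $2c\le 6$ goods. (For $c\le 0$ the identical argument shows every EFX allocation is merely a partial matching of goods to distinct agents, which is subsumed by the matching step below.)

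Next I would set up the enumeration. Since at most $c$ bundles are large and they occupy $O(1)$ goods, I would enumerate all candidate large-bundle configurations: choose the constant-size set of goods they use, partition it into the large bundles, and assign these bundles to agents; as $c$ is fixed this is only $(n+m)^{O(c)}$ configurations, i.e.\ polynomially many. The key observation making each configuration tractable is that the EFX constraints \emph{decompose}. An envied singleton (or empty) bundle is always EFX-safe, since deleting its single good leaves $\varnothing$; therefore the only binding constraints are those of some agent $i$ envying a large-bundle holder $j$, namely $v_i(A_i)\ge v_i(B_j)-\min_{g\in B_j}v_i(g)$. For a fixed configuration the large-versus-large constraints are fully determined and can be checked (discarding the configuration if any fails), while for a singleton agent $i$ the constraint becomes a per-agent admissibility list $D_i$ of goods $g$ with $v_i(g)\ge \max_j\big(v_i(B_j)-\min_{g'\in B_j}v_i(g')\big)$, independent of the rest of the singleton assignment.

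Finally I would complete each surviving configuration by a single bipartite matching between the remaining (singleton) agents and the remaining goods, placing an edge $(i,g)$ exactly when $g\in D_i$; configurations admitting no such perfect matching are infeasible and discarded. With the large-bundle utilities now fixed, maximizing $W_p$ over the completion is a monotone transformation of optimizing $\sum_i v_i(A_i)^p$ (or $\sum_i\log v_i(A_i)$ at $p=0$), so it reduces to a max- or min-weight perfect matching with edge weight $v_i(g)^p$ (respectively $\log v_i(g)$), solvable in polynomial time; taking the best $W_p$ over all configurations yields a $W_p$-maximal EFX allocation, whose existence is guaranteed since EFX allocations exist in this regime. I expect the main obstacle to be the structural lemma together with verifying that the matching captures \emph{exactly} the EFX completions: one must confirm that assigning a singleton never introduces a fresh violation for a third party (it cannot, as singleton bundles are always safe to be envied) and that the sets $D_i$ faithfully encode all binding EFX constraints, so that feasible matchings and EFX completions coincide.
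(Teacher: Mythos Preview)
Your proposal is correct and follows essentially the same approach as the paper: prove the NMU structural lemma that every EFX allocation has no empty bundle (when $c\ge 0$), deduce that at most $c$ bundles are large and occupy at most $2c$ goods, enumerate the heavy part, encode the remaining EFX constraints as per-light-agent thresholds, and finish with a weighted bipartite matching. The paper's proof differs only in presentation: it explicitly separates the case $m<n$ (showing EFX allocations are exactly size-$m$ matchings and handling $p\le 0$ versus $p>0$ separately), whereas you fold this into the matching step; both arguments are equivalent.
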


\subsection{EFX + Global $p$-Mean Optimality with Fixed $c$} \label{subsec:constant_surplus}
Section~\ref{subsec:global_compatibility} resolves the compatibility question in our main setting where $c\in\{1,2,3\}$: for every fixed $p\le 0$, Theorem~3.4 gives a polynomial-time algorithm that either finds an EFX allocation attaining the global $p$-mean optimum or certifies that no such allocation exists. 
We now present a parameterized generalization of this idea to any arbitrary (fixed) surplus $c$.

The point of this parameterized extension is primarily conceptual (and not meant to supplant Theorem~\ref{thm:decide-find-efx-global-pmean-maximizer}): it isolates the underlying reason behind Theorem~\ref{thm:decide-find-efx-global-pmean-maximizer} as to why small surplus enables algorithmic verification on the $p\le 0$ side and shows that a similar approach scales with the surplus parameter. 
This viewpoint lets us certify compatibility even outside the $c\le 3$ setting (where EFX existence is not guaranteed), while keeping the paper's main narrative centered on the setting in which existence is known and welfare tradeoffs are meaningful.

We show that for every fixed $p \leq 0$, the decision problem is XP with respect to $c$, with the following result.

\begin{theorem} \label{thm:other_fixedc}
    Fix any $p \le 0$. For every integer $c$, there is an algorithm running in time $(n+m)^{\mathcal{O}(c)}$ that, given an instance $\mathcal{I}=(N,G,\mathbf{v})$ with $m=n+c \ge 1$ and $\max_{\mathcal{A}\in \Pi(G)} W_p(\mathcal{A}) > 0$, 
    either outputs an EFX (or EFX$_0$) allocation that also maximizes the $p$-mean among all allocations, or certifies that no such allocation exists.
\end{theorem}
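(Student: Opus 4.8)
The plan is to exploit the structural rigidity that $p \le 0$ imposes: since $\max_{\mathcal{A}\in\Pi(G)} W_p(\mathcal{A}) > 0$ and any allocation leaving some agent at value $0$ has $W_p = 0$, every globally optimal allocation gives each agent strictly positive value, and hence (by additivity and the standing assumption that each good is positively valued by someone) at least one good. Writing $M = \{i : |A_i|\ge 2\}$ for the set of \emph{overloaded} agents, the identity $\sum_{i}(|A_i|-1) = m-n = c$ forces $|M|\le c$ and $\sum_{i\in M}|A_i| = c + |M| \le 2c$. Thus in any optimal allocation at most $c$ agents hold multi-good bundles, these bundles together contain at most $2c$ goods, and every remaining agent receives exactly one good. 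This is precisely the phenomenon that makes the $c\le 3$ case of Theorem~\ref{thm:decide-find-efx-global-pmean-maximizer} tractable; here I simply let the bound scale with $c$.

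Next I would enumerate every \emph{overloaded configuration}: a choice of $M\subseteq N$ with $|M|\le c$, a set $S\subseteq G$ of exactly $c+|M|$ goods, and a partition of $S$ into bundles $\{A_i\}_{i\in M}$ (each of size $\ge 2$). The number of configurations is $O(n^{c}\, m^{2c}\, c^{2c}) = (n+m)^{\mathcal{O}(c)}$. The observation that reconciles EFX with this enumeration is that a singleton bundle is never EFX-envied: if $A_j=\{g\}$ then $A_j\setminus\{g\}=\varnothing$, so $v_i(A_j\setminus\{g\})=0\le v_i(A_i)$ for every $i$, and identically for EFX$_0$. Consequently every potential EFX (or EFX$_0$) violation is directed at one of the $\le c$ overloaded bundles, which are \emph{fixed} by the configuration. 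I can therefore check directly, in polynomial time, that the fixed overloaded bundles are pairwise EFX-compatible, positively valued, and envy-free toward one another; and for each remaining agent $i$ I compute a single threshold $T_i := \max_{j\in M} t_{i,j}$, where $t_{i,j} := \max_{g\in A_j,\, v_i(g)>0} v_i(A_j\setminus\{g\})$ for EFX (omitting the restriction $v_i(g)>0$ for EFX$_0$). A leftover agent $i$ assigned good $h$ avoids EFX-envy toward all overloaded bundles exactly when $v_i(h)\ge T_i$.

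With the configuration fixed, completing the allocation is a bipartite assignment of the $n-|M|$ remaining goods to the $n-|M|$ leftover agents. I would solve this as a weighted perfect matching: allow edge $(i,h)$ only when $v_i(h)>0$ and $v_i(h)\ge T_i$, with weight $v_i(h)^{p}$ for $p<0$ (minimizing $\sum_i v_i(A_i)^p$, which maximizes $W_p$ since $1/p<0$) or weight $\log v_i(h)$ for $p=0$ (maximizing the Nash product). Because the overloaded bundles' contributions are fixed, the best EFX completion of a configuration is exactly a minimum-weight (resp.\ maximum-weight) perfect matching on the allowed edges, computable in polynomial time; dropping the edge restriction gives the best \emph{unconstrained} completion. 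Taking the maximum over all configurations yields both $\OPT := \max_{\mathcal{A}\in\Pi(G)} W_p(\mathcal{A})$ and the best EFX-feasible value $\OPT_{\mathrm{EFX}}$. I output a realizing allocation iff $\OPT_{\mathrm{EFX}}=\OPT$, and otherwise certify that no EFX allocation attains the global optimum. Correctness follows because every positive-welfare allocation decomposes into exactly one configuration together with a matching, so an EFX allocation attaining $\OPT$, if one exists, is discovered within its own configuration.

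The main obstacle I anticipate is rigorously justifying that the EFX constraints decompose in this local fashion with no coupling missed: I must argue that constraints among the fixed overloaded bundles are configuration-checkable, that each constraint from a leftover agent toward an overloaded bundle collapses to the single lower bound $v_i(h)\ge T_i$ on that agent's assigned value, and that all constraints directed at singleton bundles are vacuous, so that the leftover agents interact only through the matching. A secondary technical point is the exact comparison of $W_p$ values when $p<0$, where $v_i(h)^p$ may be irrational; since $p$ is a fixed rational and all valuations are rational, these are bounded-degree algebraic numbers, and the comparisons needed to test $\OPT_{\mathrm{EFX}}=\OPT$ can be carried out exactly without affecting the $(n+m)^{\mathcal{O}(c)}$ bound.
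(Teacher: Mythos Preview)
Your proposal is correct and follows essentially the same approach as the paper: enumerate the at most $(n+m)^{\mathcal{O}(c)}$ ``heavy'' (your ``overloaded'') configurations, reduce the EFX constraints to pairwise checks among heavy bundles plus per-agent value thresholds for the singletons, and complete each configuration via a weighted bipartite perfect matching, comparing the best EFX-feasible value against the unconstrained optimum. Your treatment of the exact numerical comparison for irrational $p$-powers is a nice addition that the paper leaves implicit in this particular proof.
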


We note that an analogous result does not extend to $p \in (0,1]$.
For every fixed $p\in(0,1]$, the compatibility problem is NP-hard already for $c=0$ (see Theorem~\ref{thm:efx_global_p>0_hard}).
Consequently, unless $\mathrm{P}=\mathrm{NP}$, there is no XP (and hence no FPT) algorithm parameterized solely by $c$ for this range of $p$.

\section{Conclusion}
In this work, we study the compatibility between EFX and generalized-mean objectives in the setting with few surplus items where EFX is guaranteed to exist.
Our complexity characterization reveal a sharp transition at $p=0$, establishing a dichotomy between the $p \leq 0$ and $p \in (0,1]$ case, both for the compatibility and optimization variants of the problem.
We further quantify the efficiency loss of enforcing EFX using the ``price of EFX'' framework, providing upper and lower bounds.
Finally, we demonstrate that adding even the weakest, objective-free efficiency requirement (Pareto-optimality) makes the problem intractable, while additional structure such as nonzero marginal utilities restores tractability.

Overall, our results delineate when EFX is computationally expensive versus when it is essentially aligned with the welfare objectives, and they motivate approximation and parameterized approaches beyond $c \leq 3$.

\bibliographystyle{plainnat}
\bibliography{abb,bib}

\clearpage

\appendix
\begin{center}
\Large
\textbf{Appendix}
\end{center}

\vspace{2mm}
    
\section{Omitted Proofs in Section~\ref{sec:efx+welfare}}
\subsection{Proof of Theorem~\ref{thm:efx_global_p>0_hard}}
    Fix $p\in(0,1]$. Since $x\mapsto x^{1/p}$ is strictly increasing on $\mathbb R_{\ge 0}$, maximizing
    \begin{equation*}
        W_p(A)=\Bigl(\frac{1}{n}\sum_{i\in N} v_i(A_i)^p\Bigr)^{1/p}
    \end{equation*}
    is equivalent to maximizing
    \begin{equation*}
        \Phi_p(A):=\sum_{i\in N} v_i(A_i)^p.
    \end{equation*}
    We split our analysis into two cases, depending on the value of $c$.
    
    \medskip
    \noindent\textbf{Case 1: $c\in\{0,1,2,3\}$.}
    We reduce from the NP-hard \textsc{Partition} problem.
    Let $\{a_1,\dots,a_k\}$ be a \textsc{Partition} instance with total sum $T:=\sum_{j=1}^k a_j$;
    assume $T$ is even (otherwise multiply all $a_j$ by $2$).
    
    Let $n:=k+3$ and define the goods set
    \[
    G=\{g_1,\dots,g_k\}\cup\{x,y\}\cup\{z_0,z_1,\dots,z_c\},
    \]
    so $m=k+2+(c+1)=k+3+c=n+c$.
    
    Choose an integer constant $\Lambda\ge 2$ (depending only on the fixed $p$) such that
    \begin{equation}\label{eq:Lambda_cond}
    (\Lambda T)^p-(\Lambda T/2)^p>(T/2)^p,
    \end{equation}
    equivalently $\Lambda^p(2^p-1)>1$.
    
    Define additive valuations as follows (all unspecified values are $0$):
    \begin{itemize}
    \item For each $j\in[k]$, $v_1(g_j)=v_2(g_j)=a_j$, and $v_i(g_j)=0$ for all $i\notin\{1,2\}$.
    \item $v_1(x)=v_1(y)=v_2(x)=v_2(y)=T/2$, and $v_3(x)=v_3(y)=\Lambda T/2$.
    \item For each $t\in\{0,1,\dots,c\}$, $v_4(z_t)=\Lambda T$ and $v_i(z_t)=0$ for all $i\neq 4$.
    \item Agents $5,6,\dots,k+3$ value every good at $0$.
    \end{itemize}

    \begin{claim}\label{clm:structure}
        Let $\mathcal{A}$ maximize $\Phi_p$ over all allocations. Then:
    (i) $z_0,\dots,z_c\in A_4$;
    (ii) $x,y\in A_3$;
    (iii) each $g_j$ is allocated to agent $1$ or $2$.
    \end{claim}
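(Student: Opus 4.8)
The plan is to prove all three parts of Claim~\ref{clm:structure} by a single \emph{local exchange} argument. Fix any maximizer $\mathcal{A}$ of $\Phi_p$ (one exists since there are finitely many allocations), and for each statement I would show that a violation admits a one-good reassignment that strictly increases $\Phi_p$, contradicting maximality. Because $\Phi_p$ is a fixed objective, the three parts can be handled independently and in any order, each via the same template: locate a misplaced good, move it to an agent who values it, and check that the net change in $\Phi_p$ is positive.

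Parts (i) and (iii) are the easy cases, since the relevant goods are positively valued by a single agent (resp.\ a single pair). Each $z_t$ is valued only by agent~$4$; if some $z_t\notin A_4$, its current holder values it at $0$, so reassigning it to agent~$4$ raises $v_4(A_4)$ by $\Lambda T>0$ and hence strictly raises $v_4(A_4)^p$ (as $p>0$), leaving every other term unchanged---a strict improvement, contradiction. This gives (i). The identical argument applies to each $g_j$, whose only positive valuers are agents~$1,2$ (we may assume $a_j>0$ by the standing assumption that every good is positively valued by someone): a misplaced $g_j$ can be moved to agent~$1$ for a strict gain, yielding (iii).

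The hard part will be (ii), because $x,y$ are valued by agents~$1,2$ (at $T/2$ each) in addition to agent~$3$ (at $\Lambda T/2$ each), so relocating them to agent~$3$ can incur a genuine loss. I would argue by contradiction, splitting on how many of $x,y$ agent~$3$ holds. The key quantitative tool is concavity of $t\mapsto t^p$ for $0<p\le 1$: removing a good worth $T/2$ from any agent decreases that agent's $p$-th-power term by at most $(T/2)^p$ (the bound is tightest when the agent's value equals $T/2$ and strictly smaller otherwise), so the loss from pulling $x$ or $y$ away from agents~$1,2$ is \emph{at most} $(T/2)^p$ regardless of how many $g_j$'s they already hold. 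If agent~$3$ holds neither, moving $x$ to agent~$3$ gives gain $(\Lambda T/2)^p>(T/2)^p$ (since $\Lambda\ge 2$), so the net change is strictly positive; if agent~$3$ holds exactly one, say $x$, then moving $y$ to agent~$3$ gives gain $(\Lambda T)^p-(\Lambda T/2)^p$, which exceeds $(T/2)^p$ precisely by the defining inequality \eqref{eq:Lambda_cond}. In both cases $\Phi_p$ strictly increases, contradicting maximality, so both $x,y\in A_3$.

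The main obstacle is therefore isolating and bounding the loss term in (ii): I must guarantee that detaching $x$ or $y$ from agents~$1,2$ never costs more than $(T/2)^p$ (this is exactly where concavity, i.e.\ diminishing marginal contribution to $\Phi_p$, is needed), and then rely on the calibration of $\Lambda$ through \eqref{eq:Lambda_cond} to ensure agent~$3$'s marginal gain dominates in every case. Once (i)--(iii) are in place, the structure of any maximizer collapses to a split of $\{g_1,\dots,g_k\}$ between agents~$1$ and~$2$, which is precisely the input \textsc{Partition} instance, and the reduction can be completed from there.
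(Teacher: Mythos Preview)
Your proposal is correct and follows essentially the same approach as the paper: parts (i) and (iii) are handled by the identical ``move a zero-valued good to someone who values it'' argument, and part (ii) by bounding the loss via subadditivity/concavity (at most $(T/2)^p$) and the gain via the calibration \eqref{eq:Lambda_cond}. The only cosmetic difference is that the paper treats (ii) in one shot---assuming $x\notin A_3$, it uses that $v_3(A_3)\le \Lambda T/2$ together with the fact that $t\mapsto (t+\Lambda T/2)^p-t^p$ is nonincreasing to lower-bound the gain uniformly by $(\Lambda T)^p-(\Lambda T/2)^p$, whereas you split on whether agent~$3$ already holds the other special good; both arguments are valid and rely on the same inequality \eqref{eq:Lambda_cond}.
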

    
    \begin{proof}
    (i) and (iii) are immediate because if a good $g$ is allocated to an agent $\ell$ with $v_\ell(g)=0$ and there exists an agent $i$ with $v_i(g)>0$, then reassigning $g$ from agent~$\ell$ to agent~$i$ strictly increases $\Phi_p$ for any $p>0$.

    Thus, we focus on showing (ii), i.e., that $x\in A_3$ (the argument for $y$ is identical).
    Suppose $x\notin A_3$, and let $\ell\neq 3$ be such that $x\in A_\ell$.
    Let $\mathcal{A}'$ be the allocation obtained from $\mathcal{A}$ by moving $x$ from agent~$\ell$'s bundle to agent~$3$,
    Then
    \begin{align*}
        \Phi_p(\mathcal{A}')-\Phi_p(\mathcal{A}) & =(v_3(A_3)+v_3(x))^p-v_3(A_3)^p  + (v_\ell(A_\ell)-v_\ell(x))^p-v_\ell(A_\ell)^p.
    \end{align*}
    Here $v_3(x)=\Lambda T/2$ and, since agent $3$ only values $\{x,y\}$ and $x\notin A_3$, we have $v_3(A_3)\le \Lambda T/2$.
    For $0<p\le 1$, $t\mapsto t^p$ is concave, hence the increment
    $h(t):=(t+\delta)^p-t^p$ is nonincreasing in $t$ for each $\delta>0$; thus
    \[
    (v_3(A_3)+\Lambda T/2)^p-v_3(A_3)^p \ \ge\ (\Lambda T)^p-(\Lambda T/2)^p.
    \]
    Also, for $0<p\le 1$, subadditivity $(a+b)^p\le a^p+b^p$ implies
    $(v_\ell(A_\ell)-v)^p-v_\ell(A_\ell)^p\ge -v^p$ for $v=v_\ell(x)$ (since $v_\ell(A_\ell)\ge v$).
    Because $v_\ell(x)\in\{0,T/2\}$,
    \[
    (v_\ell(A_\ell)-v_\ell(x))^p-v_\ell(A_\ell)^p \ \ge\ - (T/2)^p.
    \]
    Combining,
    \[
    \Phi_p(A')-\Phi_p(A)\ \ge\ (\Lambda T)^p-(\Lambda T/2)^p-(T/2)^p\ >\ 0
    \]
    by~\eqref{eq:Lambda_cond}, contradicting optimality. Hence $x\in A_3$, and similarly $y\in A_3$.
    \end{proof}

    Now consider any allocation satisfying Claim~\ref{clm:structure}.
    Let
    \begin{equation*}
        U_1:=\sum_{j:g_j\in A_1} a_j,\qquad U_2:=\sum_{j:g_j\in A_2} a_j,
    \end{equation*}
    so $U_1+U_2=T$. 
    Then agent $3$ has utility $\Lambda T$ (from $\{x,y\}$), agent $4$ has utility $(c+1)\Lambda T$ (from $\{z_0,\dots,z_c\}$), and all other agents contribute $0$ to $\Phi_p$.
    Thus,
    \begin{equation*}
        \Phi_p(A)=U_1^p+U_2^p+(\Lambda T)^p+\bigl((c+1)\Lambda T\bigr)^p,
    \end{equation*}
    so maximizing $\Phi_p$ reduces to maximizing $U_1^p+U_2^p$ under $U_1+U_2=T$.
    If $p\in(0,1)$, strict concavity of $t^p$ implies the unique maximizer is $U_1=U_2=T/2$; if $p=1$, every split is $\Phi_1$-maximizing.

    If the \textsc{Partition} instance is a yes-instance, pick $S\subseteq[k]$ with $\sum_{j\in S}a_j=T/2$ and allocate $\{g_j:j\in S\}$ to agent $1$, $\{g_j:j\notin S\}$ to agent $2$, $\{x,y\}$ to agent $3$, and $\{z_0,\dots,z_c\}$ to agent $4$.
    This allocation is $\Phi_p$-maximizing by the discussion above, and it is EFX: agents $\ge 5$ value everything at $0$; agent $4$'s goods are valued at $0$ by others; agent $3$ values only $\{x,y\}$ and holds both; agents $1,2$ each have value $T/2$ and, after removing either $x$ or $y$ from agent $3$'s bundle, value the remainder at exactly $T/2$; and the EFX relationship between agents $1$ and $2$ hold because removing any one partition good reduces the other's value by at least $0$.

    Conversely, suppose there exists an allocation that is both EFX and $\Phi_p$-maximizing.
    By Claim~\ref{clm:structure}, $A_3=\{x,y\}$ and all $g_j$ are split between agents $1$ and $2$, so $U_1+U_2=T$. 
    EFX for agent $1$ toward agent $3$ after removing $x$ gives
    \begin{equation*}
        U_1=v_1(A_1)\ge v_1(A_3\setminus\{x\})=v_1(\{y\})=T/2,
    \end{equation*}
    and similarly $U_2\ge T/2$. Hence $U_1=U_2=T/2$, which gives us a perfect partition.
    Thus the constructed instance is a yes-instance iff the \textsc{Partition} instance is a yes-instance, proving NP-hardness
    for each $c\in\{0,1,2,3\}$.

\medskip
\noindent\textbf{Case 2: $c<0$.}
Let $d:=-c\in\mathbb N$. We reduce from the $c=0$ setting (already NP-hard by Case~1).
Given any instance $\mathcal I_0=(N_0,G,\mathbf v)$ with $|G|=|N_0|$, form $\mathcal I=(N,G,\mathbf v')$
by adding $d$ dummy agents $D$ with identically-zero valuations and keeping the goods unchanged:
$N:=N_0\cup D$, $G:=G$, $v'_i:=v_i$ for $i\in N_0$, and $v'_d(\cdot)\equiv 0$ for all $d\in D$.
Then $|G|=|N_0|=|N|+c$.

Because every good has positive value for some original agent (as assumed throughout the paper),
no $\Phi_p$-maximizing allocation in $\mathcal I$ assigns any good to a dummy agent:
moving such a good to an original agent who values it positively strictly increases $\Phi_p$ (since $p>0$).
Therefore the set of $\Phi_p$-maximizers in $\mathcal I$ (restricted to original agents) coincides with that in $\mathcal I_0$.
Moreover, adding dummy agents does not affect the EFX property among original agents, and all EFX inequalities involving dummy agents hold trivially because dummy valuations are $0$.
Hence $\mathcal{I}$ admits an EFX $\Phi_p$-maximizer iff $\mathcal I_0$ does, so the $c<0$ setting is also NP-hard.

Combining the two cases completes the proof.
    
\subsection{Proof of Theorem~\ref{prop:pleq0_polytime}}
Since valuations are additive, $v_i(\varnothing)=0$ for all $i$.
If an allocation gives every agent at most one good, then for any $i,j\in N$ either $A_j=\varnothing$
(in which case the EFX$_0$ condition is vacuous) or $A_j=\{g\}$ is a singleton, and then
$A_j\setminus\{g\}=\varnothing$, so
\[
v_i(A_i) \ge 0 = v_i(\varnothing) = v_i(A_j\setminus\{g\}).
\]
Hence any allocation with $|A_i|\le 1$ for all $i$ is EFX$_0$.
We split the remainder of this proof into two cases. 

\medskip\noindent
\textbf{Case 1: $m<n$.}
In every allocation, by the pigeonhole principle, some agent $i$ receives $A_i=\varnothing$, and thus $v_i(A_i)=0$. 
Since $p\leq 0$, this implies $W_p(\mathcal{A})=0$ for every allocation $\mathcal{A}$, so the global optimum equals $0$.
Therefore any allocation that assigns each good to a distinct agent (and leaves the remaining agents empty) is EFX$_0$, and maximizes $W_p$ globally. 
Such an allocation is trivial to construct in $O(m)$ time.

\medskip\noindent
\textbf{Case 2: $m=n$.}
Build the bipartite graph $H=(N,G;E)$ where $(i,g)\in E$ iff $v_i(g)>0$.

\smallskip\noindent
\emph{Subcase 2(a): $H$ has no perfect matching.}
Suppose for contradiction that there exists a complete allocation $\mathcal{A}$ with $v_i(A_i)>0$ for all $i$.
Then each agent must receive at least one good (otherwise her utility would be $0$).
Since $m=n$, this forces every agent to receive exactly one good. Let $\pi:N\to G$ map each
agent to her unique good, so $A_i=\{\pi(i)\}$ for all $i$.
Because $v_i(A_i)=v_i(\pi(i))>0$ for all $i$, every edge $(i,\pi(i))$ lies in $E$, and hence $\pi$ is a
perfect matching of $H$, contradiction. Therefore, in every allocation $\mathcal{A}$, there exists
$i$ with $v_i(A_i)=0$, so (by the convention) $W_p(A)=0$ for all allocations and the global
optimum is $0$. 
Any singleton allocation (i.e., any bijection between agents and goods)
is EFX$_0$ (as shown earlier) and globally optimal.

\smallskip\noindent
\emph{Subcase 2(b): $H$ has a perfect matching.}
Then there exists an allocation $\mathcal{A}$ in which every agent receives one good of strictly
positive value, so $W_p(\mathcal{A})>0$ and therefore the global optimum
$\OPT:=\max\{W_p(\mathcal{B}): \mathcal{B} \in \Pi(G)\}$ satisfies $\OPT>0$.
Let $\mathcal{A}^*$ be a complete allocation with $W_p(\mathcal{A}^*)=\OPT$. If $v_i(A_i^*)=0$ for some
$i$, then $W_p(\mathcal{A}^*)=0$ by convention, contradicting the fact that $\OPT>0$.
Thus, $v_i(A_i^*)>0$ for all $i$, so each agent receives at least one good.
Together with $m=n$, this implies
each agent receives exactly one good. Thus $\mathcal{A}^*$ corresponds to a perfect matching $\pi:N\to G$ in $H$.

It remains to find a perfect matching $\pi$ maximizing $W_p$. 
Since every globally optimal
allocation corresponds to a perfect matching in $H$, it suffices to optimize over perfect matchings of $H$.

\begin{itemize}
\item If $p=0$ (i.e., Nash welfare), then for a matching $\pi$ we have
\begin{equation*}
    W_0(\pi) = \Big(\prod_{i\in N} v_i(\pi(i))\Big)^{1/n}.
\end{equation*}
Since $x\mapsto x^{1/n}$ and $\log(\cdot)$ are strictly increasing on $(0,\infty)$, maximizing $W_0$
over perfect matchings is equivalent to maximizing
$\sum_{i\in N} \log v_i(\pi(i))$ over perfect matchings in $H$.

\item If $p<0$, then for a matching $\pi$ we have
\begin{equation*}
    W_p(\pi) = \Big(\tfrac{1}{n}\sum_{i\in N} v_i(\pi(i))^p\Big)^{1/p}.
\end{equation*}
Since $1/p<0$, the map $x\mapsto x^{1/p}$ is strictly decreasing on $(0,\infty)$, so maximizing $W_p$
is equivalent to minimizing $\sum_{i\in N} v_i(\pi(i))^p$ over perfect matchings in $H$.
\end{itemize}

In both cases above ($p = 0$ and $p < 0$), we obtain a standard assignment problem (maximum-weight or minimum-cost perfect
matching) on the bipartite graph $H$, which can be solved in polynomial time (e.g., by min-cost flow algorithms; when one prefers the Hungarian method, one may complete the graph by adding sufficiently bad dummy edges). 
Output the corresponding singleton allocation; it is EFX$_0$ and globally $W_p$-optimal by construction.

\medskip\noindent
Note that we can assume valuations and $p$ to be rational, and thus the objectives use weights $\log v_i(g)$ (for $p=0$) or $v_i(g)^p$ (for $p<0$) that are computable
reals, and computing them to sufficiently many bits of precision (polynomial in the input size) suffices to recover an exact optimal matching.

\subsection{Explanation of Example~\ref{example:compatibility}}
Observe that maximizing $W_0$ is equivalent to maximizing the Nash product $\prod_{i\in N} v_i(A_i)$, and under our convention any maximizer must satisfy $v_i(A_i)>0$ for all $i\in N$.
First, $v_2(A_2)>0$ forces $g_4\in A_2$, since agent~2 values only $g_4$ positively.
Next, suppose (toward contradiction) that $g_1\notin A_1$. Then $v_1(A_1)>0$ forces $g_2\in A_1$ and hence
$v_1(A_1)\le v_1(g_2)=1$, while agent~3 can obtain utility at most
$v_3(A_3)\le v_3(\{g_1,g_3\})=3$. Therefore, $\prod_{i\in N} v_i(A_i)\le 1\cdot 5\cdot 3=15$.
On the other hand, $\mathcal{A}^*$ gives utilities $(v_1(A_1^*),v_2(A_2^*),v_3(A_3^*))=(6,5,1)$ and Nash product
$6\cdot 5\cdot 1=30>15$. Hence every Nash-optimal allocation must have $g_1\in A_1$.

Given $g_1\in A_1$ and $g_4\in A_2$, the remaining goods are $g_2$ and $g_3$. Allocating both to agent~3 gives Nash product $5\cdot 5\cdot (1+\tfrac{1}{10})=\tfrac{55}{2}<30$, and allocating $g_2$ to agent~3 and $g_3$ to agent~1 gives Nash product $5\cdot 5\cdot \tfrac{1}{10}=\tfrac{5}{2}<30$. Thus $\mathcal{A}^*$ is the unique
Nash-optimal allocation.

Finally, $\mathcal{A}^*$ violates EFX: for $i=3$, $j=1$, and $g=g_2\in A^*_1$ we have $v_3(g)>0$ but $v_3(A^*_3)=v_3(\{g_3\})=1 < v_3(A^*_1\setminus\{g_2\})=v_3(\{g_1\})=2$.

\subsection{Proof of Theorem~\ref{thm:decide-find-efx-global-pmean-maximizer}}
    We present the argument for EFX$_0$, and detail at the end of the proof how it can be adapted to work for EFX.
    
    Define $\OPT:=\max_{\mathcal{A}\in \Pi(G)} W_p(\mathcal{A})$ and $\OPT_{\mathrm{EFX}_0}:=\max_{\mathcal{A}\in \Pi_{\mathrm{EFX}_0}(G)} W_p(\mathcal{A})$.
    
    We give a polynomial-time procedure that outputs an allocation $\mathcal{A}'\in \Pi_{\mathrm{EFX}_0}(G)$
    with $W_p(\mathcal{A}')=\OPT$ if and only if $\OPT_{\mathrm{EFX}_0}=\OPT$, and otherwise
    outputs `no'. The procedure has three steps.
    
    \smallskip
    \noindent\textbf{Step 1: Compute $\OPT$ and an optimizer $\mathcal{A}^{\mathrm{opt}}\in\Pi(G)$.}
    Construct the bipartite graph $B=(N,G;E)$ where $(i,g)\in E$ iff $v_i(g)>0$, and compute a maximum
    matching $M$ in $B$.
    If $|M|<n$, then no allocation can give every agent strictly positive utility: indeed, if some allocation
    $\mathcal{A}$ satisfies $v_i(A_i)>0$ for all $i$, then for each $i$ we can select a good $g_i\in A_i$ with
    $v_i(g_i)>0$, and the edges $\{(i,g_i)\}_{i\in N}$ would form a matching of size $n$, a contradiction.
    Thus, every allocation $\mathcal{A}$ has $v_i(A_i)=0$ for some $i$, and we have
    $W_p(\mathcal{A})=0$ for all $\mathcal{A}$. 
    Therefore, $\OPT=0$; fix any allocation and denote it by $\mathcal{A}^{\mathrm{opt}}$, and proceed to Step~2.
    
    Assume henceforth that $|M|=n$. 
    Then, there exists an allocation with $v_i(A_i)>0$ for all $i$, so $\OPT>0$. 
    Let $\mathcal{A}^*$ attain $\OPT$. 
    Necessarily, $v_i(A^*_i)>0$ for all $i \in N$ (otherwise $W_p(\mathcal{A}^*)=0<\OPT$), and each agent receives at least one good.
    
    Let $c:=m-n$. For any allocation $\mathcal{A}$ with $|A_i|\ge 1$ for all $i \in N$, define
    \begin{equation*}
        H(\mathcal{A}):=\{ i\in N : |A_i|\ge 2\} \quad \text{and} \quad S(\mathcal{A}):=\bigcup_{i\in H(\mathcal{A})} A_i.
    \end{equation*}
    Since $\sum_{i\in N}(|A_i|-1)=m-n=c$ and agents not in $H(\mathcal{A})$ have $|A_i|=1$, it follows that
    \begin{equation*}
        \sum_{i\in H(\mathcal{A})} (|A_i|-1)=c,
    \end{equation*}
    so $|H(\mathcal{A})|\le c$. 
    Moreover, the disjointness of bundles gives us
    \begin{align*}
        |S(\mathcal{A})|=\sum_{i\in H(\mathcal{A})}|A_i|
    =\sum_{i\in H(\mathcal{A})}\bigl(1+(|A_i|-1)\bigr) =|H(\mathcal{A})|+c \le 2c.
    \end{align*}
    In particular, $|S(\mathcal{A}^*)|\leq 2c$.
    
    We now enumerate all possibilities for the non-singleton bundles in $\mathcal{A}^*$.
    For each $\kappa\in\{1,\dots,c\}$, enumerate all triples $(H,S,(B_i)_{i\in H})$ such that:
    (i) $H\subseteq N$ with $|H|=\kappa$;
    (ii) $S\subseteq G$ with $|S|=\kappa+c$;
    (iii) $(B_i)_{i\in H}$ is a labeled partition of $S$ with $|B_i|\ge 2$ for all $i\in H$.
    Fix such a triple and let $L:=N\setminus H$ and $R:=G\setminus S$. Then $|L|=|R|$.
    
    Discard the current triple if $v_i(B_i)=0$ for some $i\in H$.
    Also discard it if there is no bijection $\pi:L\to R$ with $v_\ell(\pi(\ell))>0$ for all $\ell\in L$
    (equivalently, if the bipartite graph on $L\times R$ containing edges $(\ell,g)$ with $v_\ell(g)>0$
    has no perfect matching).
    
    For a fixed surviving triple, optimizing $W_p$ over feasible completions reduces to an assignment problem
    on $L\times R$:
    \begin{itemize}
        \item If $p=0$, then (under positive utilities) maximizing $W_0$ is equivalent to maximizing
        \[
            \sum_{\ell\in L}\log v_\ell(\pi(\ell))
        \]
        over feasible bijections $\pi:L\to R$.
        \item If $p<0$, then (under positive utilities) maximizing $W_p$ is equivalent to minimizing
        \[
            \sum_{\ell\in L} v_\ell(\pi(\ell))^p
        \]
        over feasible bijections $\pi:L\to R$.
    \end{itemize}
    Solve the resulting assignment instance, form the corresponding allocation, and keep the best one across
    all iterations; denote the best found allocation by $\mathcal{A}^{\mathrm{opt}}$.
    
    We now prove the correctness of Step 1.
    The triple $\bigl(H(\mathcal{A}^*),S(\mathcal{A}^*),(A^*_i)_{i\in H(\mathcal{A}^*)}\bigr)$ appears in the enumeration.
    In that iteration, the bijection induced by $\mathcal{A}^*$ is feasible. The assignment subroutine returns an optimal
    completion for this fixed triple, hence produces an allocation with $W_p$ value at least $W_p(\mathcal{A}^*)=\OPT$.
    Since $\OPT$ is the global maximum, we obtain $W_p(\mathcal{A}^{\mathrm{opt}})=\OPT$.
    
    \smallskip
    \noindent\textbf{Step 2: Compute $\mathrm{OPT_{EFX_0}}$ and an optimizer $\mathcal{A}^{\mathrm{opt_0}}\in \Pi_{\mathrm{EFX_0}}(G)$.}
    We compute $\mathrm{OPT_{EFX_0}}$ via the same enumeration procedure as in Step~1, but we restrict
    to completions that satisfy EFX$_0$.
    
    For each enumerated triple $(H,S,(B_i)_{i\in H})$, let $L:=N\setminus H$ and $R:=G\setminus S$
    (so $|L|=|R|$). For each $i\in H$ and agent $x\in N$, define
    \[
    \tau_x(B_i):=\max_{g\in B_i} v_x(B_i\setminus\{g\}),
    \]
    and for each light agent $\ell\in L$ define
    \[
    \tau_\ell := \max_{i\in H} \tau_\ell(B_i),
    \]
    with the convention $\tau_\ell=0$ if $H=\varnothing$.
    
    \medskip
    \noindent\textbf{Claim.}
    Consider an allocation $\widetilde{\mathcal{A}}$ that gives each heavy agent $i\in H$ the bundle $B_i$ and gives each
    light agent $\ell\in L$ a singleton $\{\pi(\ell)\}$ for some bijection $\pi:L\to R$.
    Then $\widetilde{\mathcal{A}}$ is EFX$_0$ if and only if:
    \begin{enumerate}
    \item[(i)] for all distinct $a,i\in H$, we have $v_a(B_a)\ge \tau_a(B_i)$; and
    \item[(ii)] for every $\ell\in L$, we have $v_\ell(\pi(\ell))\ge \tau_\ell$.
    \end{enumerate}
    
    \noindent\emph{Proof of the claim.}
    Fix such $\widetilde{\mathcal{A}}$. The EFX0 constraints are
    $v_x(\widetilde{A}_x)\ge v_x(\widetilde{A}_y\setminus\{g\})$ for all $x,y\in N$ and all $g\in \widetilde{A}_y$.
    If $y\in L$ then $\widetilde{A}_y$ is a singleton and $\widetilde{A}_y\setminus\{g\}=\varnothing$, so these inequalities
    are automatic. Thus only $y\in H$ matters, i.e., $y=i$ with bundle $B_i$.
    Fix $i\in H$ and $x\in N$. The constraints from $x$ toward $i$ are
    $v_x(\widetilde{A}_x)\ge v_x(B_i\setminus\{g\})$ for all $g\in B_i$, equivalently
    $v_x(\widetilde{A}_x)\ge \max_{g\in B_i} v_x(B_i\setminus\{g\})=\tau_x(B_i)$.
    If $x=a\in H$ then $\widetilde{A}_x=B_a$, giving condition (i). If $x=\ell\in L$ then $\widetilde{A}_x=\{\pi(\ell)\}$,
    so we require $v_\ell(\pi(\ell))\ge \tau_\ell(B_i)$ for all $i\in H$, i.e.,
    $v_\ell(\pi(\ell))\ge \max_{i\in H}\tau_\ell(B_i)=\tau_\ell$, which is (ii).
    Conversely, (i) and (ii) imply all the EFX$_0$ inequalities by the same case analysis. \hfill$\square$
    
    \medskip
    Now, discard the current triple if $v_i(B_i)=0$ for some $i\in H$ (since for $p\le 0$ this forces
    welfare $0$), or if condition (i) of the claim fails.
    
    Next, build a bipartite graph on $L\times R$ that contains an edge $(\ell,g)$ if and only if
    $v_\ell(g)>0$ and $v_\ell(g)\ge \tau_\ell$.
    If there is no perfect matching, discard this iteration.
    
    For a fixed surviving triple, optimizing $W_p$ over feasible bijections $\pi:L\to R$ reduces to an
    assignment problem:
    \begin{itemize}
    \item If $p=0$, maximizing $W_0$ is equivalent (under positive utilities) to maximizing
    $\sum_{\ell\in L}\log v_\ell(\pi(\ell))$ over feasible perfect matchings.
    \item If $p<0$, maximizing $W_p$ is equivalent (under positive utilities) to minimizing
    $\sum_{\ell\in L} v_\ell(\pi(\ell))^p$ over feasible perfect matchings.
    \end{itemize}
    Solve the resulting assignment instance, form the corresponding allocation $\widetilde{\mathcal{A}}$, and keep the
    best one across all iterations; denote the best found EFX$_0$ allocation by $\mathcal{A}^{\mathrm{opt_0}}$ and its value by
    $\mathrm{OPT_{EFX_0}}$.
    
    If no iteration gives us a feasible completion with all agents receiving positive utility, then
    $\mathrm{OPT_{EFX_0}}=0$; in this case set $\mathcal{A}^{\mathrm{opt_0}}$ to be any EFX$_0$ allocation (which exists for $m\leq n+3$).

    We show the correctness of Step 2.
    Let $\widehat{\mathcal{A}}$ be an EFX$_0$ allocation achieving $\OPT_{\mathrm{EFX}_0}$. 
    Since $\OPT_{\mathrm{EFX}_0}>0$ implies every agent has positive utility, $\widehat{\mathcal{A}}$ has at most $c$ heavy agents and $|S(\widehat{\mathcal{A}})|\le 2c$. 
    Thus, its heavy part appears in the enumeration. 
    In that iteration, the light agent assignment induced by $\widehat{\mathcal{A}}$ satisfies conditions (i)–(ii), hence corresponds to a feasible perfect matching in the constructed graph. 
    The assignment solver returns an optimal completion for that iteration, with welfare at least $W_p(\widehat{\mathcal{A}})$. Taking the best over all iterations gives us welfare $\OPT_{\mathrm{EFX}_0}$.
    
    \smallskip
    \noindent\textbf{Step 3.}
    Since $\Pi_{\mathrm{EFX}_0}(G)\subseteq \Pi(G)$, we have $\OPT_{\mathrm{EFX}_0}\le \OPT$.
    If $\OPT_{\mathrm{EFX}_0}=\OPT$, output $\mathcal{A}^{\mathrm{opt}_0}$; it is $\mathrm{EFX}_0$ and globally $W_p$-optimal.
    Otherwise $\OPT_{\mathrm{EFX}_0}<\OPT$, and then no $\mathrm{EFX}_0$ allocation attains $\OPT$, so output `no'.
    The case $\OPT=0$ is covered as well, since then necessarily $\OPT_{\mathrm{EFX}_0}=0$.
    
    Finally, we prove the running time of the described procedure.
    For fixed $c$, Step~1 is polynomial: $|S|\le 2c$ gives $O(m^{2c})$ choices for $S$ and $O(n^c)$ choices for $H$,
    and the number of labeled partitions of a set of size at most $2c$ into at most $c$ parts of size at least $2$ is a constant
    (depending only on $c$). 
    Each iteration solves one assignment problem in polynomial time.
    Step~2 has the same enumeration size as Step~1 and solves one assignment instance per iteration (after filtering infeasible triples and edges), hence it is also polynomial for fixed $c$. The weight computations (logs for $p=0$, or $p$-powers for fixed $p<0$) are handled exactly as in Step~1.
    In Step~1, the assignment instances use edge weights $\log v_\ell(g)$ (when $p=0$) or $v_\ell(g)^p$ (when $p<0$);
    these are computable reals for fixed $p$ and rational input valuations, which we can assume without loss of generality.
    
    The proof for EFX is identical, except that whenever we define $\tau_x(B_i)=\max_{g\in B_i} v_x(B_i\setminus\{g\})$,
    we instead use $\tau_x^*(B_i):=\max_{g\in B_i:\ v_x(g)>0} v_x(B_i\setminus\{g\})$, with $\tau_x^*(B_i):=0$ if $\{g\in B_i: v_x(g)>0\}=\varnothing$,
    and we require $v_\ell(g)\ge \tau_\ell^*$ in the matching stage.

\subsection{Proof of Theorem~\ref{thm:efx0-efx-welfare-opt}}
    We present the proof/algorithm for EFX$_0$; at the end, we then mention how it can be adapted to work for EFX as well. 

    Let $c:=m-n$. Since $m \le n+3$, we have $c\in\{1,2,3\}$ and $m=n+c$.
    Call agents with $|A_i|\ge 2$ \emph{heavy}, and the other agents \emph{light}.
    The \emph{heavy part} of an allocation $\mathcal{A}$ is then the set of heavy agents together with the goods they hold (i.e., the non-singleton bundles).
    We now compute a $W_p$-optimal EFX$_0$ allocation by enumerating the
    heavy part and solving one assignment problem per enumeration.

    First, assume the nondegenerate case
    \begin{equation*}
        \OPT:=\max_{A\in\Pi_{\mathrm{EFX}_0}(G)} W_p(A) > 0.
    \end{equation*}
    Then every optimal allocation $\mathcal{A}^*$ satisfies $v_i(A^*_i)>0$ for all $i$, and therefore
    each agent receives at least one good.
    
    Let $\mathcal{A}$ be any allocation in which every agent receives at least one good.
    Define the set of heavy agents
    \begin{equation*}
        H(\mathcal{A}):=\{i:|A_i|\ge 2\}
    \end{equation*}
    and let $k:=|H(A)|$.
    Since $m=n+c$ and each of the $n$ agents receives at least one good, the number of
    extra goods beyond one per agent is exactly $c$, hence
    \begin{equation*}
        \sum_{i\in H(A)}\bigl(|A_i|-1\bigr)=c.
    \end{equation*}
    In particular, $k\le c$. Moreover, the total number of goods held by the heavy agents equals
    \begin{align*}
        \left|\bigcup_{i\in H(A)}A_i\right|
     =\sum_{i\in H(A)}|A_i| & =\sum_{i\in H(A)}\bigl(1+(|A_i|-1)\bigr) \\
    & 
    =k+c \le 2c \le 6.
    \end{align*}
    Thus, in any optimal $\mathcal{A}^*$ with $\OPT>0$, at most $c$ agents have non-singleton bundles, and the union of all non-singleton bundles contains at most $6$ goods.
    
    We enumerate all possibilities for the heavy part as follows.
    For each integer $k\in\{0,1,\ldots,c\}$:
    \begin{itemize}
    \item If $k=0$ and $c>0$, skip this $k$ (because with $m=n+c>n$ at least one bundle must be non-singleton).
    \item Enumerate all subsets $H\subseteq N$ with $|H|=k$.
    \item Enumerate all subsets $S\subseteq G$ with $|S|=k+c$.
    \item Enumerate all surjective assignments $\beta:S\to H$ such that each
    $B_i:=\beta^{-1}(i)$ has size $|B_i|\ge 2$ (equivalently, a labeled partition of $S$
    into $k$ parts of size at least $2$).
    \end{itemize}
    Fix one such triple $(H,S,(B_i)_{i\in H})$ from the enumeration.
    Let $L:=N\setminus H$ (light agents) and $R:=G\setminus S$ (remaining goods).
    Because $|G|=n+c$ and $|S|=k+c$, we have $|R|=n-k=|L|$.
    
    For each $i\in H$ and each agent $x\in N$, define
    \[
    \tau_x(B_i):=\max_{g\in B_i} v_x(B_i\setminus\{g\}).
    \]
    For each light agent $\ell\in L$, define the threshold
    \[
    \tau_\ell:=\max_{i\in H}\tau_\ell(B_i),
    \]
    with the convention $\tau_\ell=0$ if $H=\varnothing$.
    
    \begin{lemma} \label{lem:efx0-threshold-characterization}
    Consider an allocation $\widehat{\mathcal{A}}$ that gives each heavy agent $i\in H$ the bundle $B_i$
    and gives each light agent $\ell\in L$ a singleton $\{\pi(\ell)\}$ for some bijection $\pi:L\to R$.
    Then $\widehat{\mathcal{A}}$ is \emph{EFX$_0$} if and only if:
    \begin{enumerate}
    \item[(i)] For all distinct $a,i\in H$, $v_a(B_a)\ge \tau_a(B_i)$.
    \item[(ii)] For every $\ell\in L$, $v_\ell(\pi(\ell))\ge \tau_\ell$.
    \end{enumerate}
    \end{lemma}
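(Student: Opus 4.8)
The plan is to unfold the EFX$_0$ definition pair-by-pair and show that every constraint either holds trivially or coincides with one of (i), (ii). Recall that $\widehat{\mathcal{A}}$ is EFX$_0$ precisely when, for every ordered pair of agents $(x,y)$ and every $g \in \widehat{A}_y$, we have $v_x(\widehat{A}_x) \ge v_x(\widehat{A}_y \setminus \{g\})$; since this is quantified over all $g$, it is equivalent to $v_x(\widehat{A}_x) \ge \max_{g \in \widehat{A}_y} v_x(\widehat{A}_y \setminus \{g\}) = \tau_x(\widehat{A}_y)$, where I extend the notation $\tau_x(\cdot)$ to an arbitrary nonempty bundle by the same formula used to define $\tau_x(B_i)$. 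Because in $\widehat{\mathcal{A}}$ no bundle is empty (every heavy agent holds $B_i$ and every light agent holds a singleton), the vacuous branch of the EFX$_0$ definition never triggers, so I may work with the reformulated inequality throughout.

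First I would dispose of envy directed at light agents. If $y=\ell'\in L$, then $\widehat{A}_{\ell'}=\{\pi(\ell')\}$ is a singleton, so removing its unique good leaves $\varnothing$, giving $\tau_x(\widehat{A}_{\ell'}) = v_x(\varnothing) = 0 \le v_x(\widehat{A}_x)$ for every $x$. Hence all constraints toward light agents hold automatically and contribute nothing. It therefore suffices to analyze envy toward heavy agents $y=i\in H$, where the constraint reads $v_x(\widehat{A}_x) \ge \tau_x(B_i)$.

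Next I would split on the envying agent $x$. If $x=a\in H$ then $\widehat{A}_a = B_a$, and the constraint becomes $v_a(B_a) \ge \tau_a(B_i)$: when $a=i$ this is automatic, since additivity and nonnegativity give $\tau_a(B_a) = \max_{g\in B_a} v_a(B_a\setminus\{g\}) \le v_a(B_a)$; and when $a\ne i$ it is exactly condition (i). If instead $x=\ell\in L$ then $\widehat{A}_\ell=\{\pi(\ell)\}$, and the constraint becomes $v_\ell(\pi(\ell)) \ge \tau_\ell(B_i)$, which, ranging over all $i\in H$, is equivalent to $v_\ell(\pi(\ell)) \ge \max_{i\in H}\tau_\ell(B_i) = \tau_\ell$, i.e.\ condition (ii) (and is vacuous, with $\tau_\ell=0$, when $H=\varnothing$).

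Collecting the cases, the entire family of EFX$_0$ inequalities is equivalent to (i) together with (ii), which establishes both directions of the biconditional simultaneously. I do not expect a genuine obstacle here: the proof is a finite case analysis over the types of the envied and envying agents. The only point requiring care is recognizing that the constraints toward light agents and the self-constraints among heavy agents (the $a=i$ case) are satisfied automatically, so that the nontrivial content is captured exactly by the thresholds $\tau_a(B_i)$ for distinct heavy pairs and by $\tau_\ell$ for each light agent.
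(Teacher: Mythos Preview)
Your proposal is correct and follows essentially the same approach as the paper's proof: both dispose of constraints toward light agents via the singleton observation, note that the $a=i$ self-constraint is automatic by monotonicity, and then identify the remaining heavy-to-heavy and light-to-heavy constraints with conditions (i) and (ii), respectively. Your write-up is slightly more explicit in reformulating EFX$_0$ through the threshold $\tau_x(\cdot)$ up front, but the logical structure and case analysis are identical.
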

    
    \begin{proof}
        Fix such $\widehat{\mathcal{A}}$. Then, suppose $\widehat{\mathcal{A}}$ is EFX$_0$: for all $x,y\in N$ and all $g\in \widehat{A}_y$,
        \[
        v_x(\widehat{A}_x)\ \ge\ v_x(\widehat{A}_y\setminus\{g\}).
        \]
        If $y\in L$, then $\widehat{A}_y$ is a singleton; removing its unique good gives us $\varnothing$,
        whose value is $0$ for every agent, so these inequalities are automatically satisfied.
        Thus, the only nontrivial inequalities are those with $y\in H$.
        
        Fix $i\in H$ and $x\in N$. The inequalities from $x$ toward $i$ are
        $v_x(\widehat{A}_x)\ge v_x(B_i\setminus\{g\})$ for all $g\in B_i$, i.e.,
        \[
        v_x(\widehat{A}_x)\ \ge\ \max_{g\in B_i} v_x(B_i\setminus\{g\})\ =\ \tau_x(B_i).
        \]
        The $x=i$ case is trivial since $v_a(B_a)\ge v_a(B_a\setminus\{g\})$ for all $g\in B_a$.
        If $x\in H$, then $\widehat{A}_x=B_x$ and this is exactly (i).
        If $x\in L$, then $\widehat{A}_x=\{\pi(x)\}$ and this becomes
        $v_x(\pi(x))\ge \tau_x(B_i)$ for all $i\in H$, i.e.,
        $v_x(\pi(x))\ge \max_{i\in H}\tau_x(B_i)=\tau_x$, which is (ii).
        Conversely, (i) and (ii) imply all EFX$_0$ inequalities by the same reasoning.
    \end{proof}
    
    Then, for the current enumerated $(H,S,(B_i)_{i\in H})$, first check condition (i) of the lemma.
    If it fails, discard this iteration.
    Now impose $\OPT>0$. 
    Since $p\le 0$ and $W_p=0$ whenever some agent gets utility $0$, we only need to consider completions where every agent has strictly positive utility.
    Thus, discard this iteration if $v_i(B_i)=0$ for some $i\in H$.
    
    Define a bipartite graph with left side $L$ and right side $R$; include an edge $(\ell,g)$ iff
    \[
    v_\ell(g)>0
    \quad\text{and}\quad
    v_\ell(g)\ge \tau_\ell.
    \]
    Any perfect matching $\pi:L\to R$ in this graph gives us a completion $\widehat{\mathcal{A}}$ that is EFX$_0$
    by the lemma and has all-light utilities positive.
    
    Among such feasible matchings, optimizing $W_p$ reduces to a standard assignment objective:
    \begin{itemize}
    \item If $p=0$ (Nash welfare), maximizing $W_0(\widehat{\mathcal{A}})$ is equivalent to maximizing
    $\sum_{i\in N}\log v_i(\widehat{A}_i)$. Since the heavy bundles are fixed, this is equivalent to maximizing
    $\sum_{\ell\in L}\log v_\ell(\pi(\ell))$ over perfect matchings (a maximum-weight perfect matching problem).
    \item If $p<0$, then all utilities are positive and
    \[
    W_p(\widehat{\mathcal{A}})
    =
    \left(\frac{1}{n}\sum_{i\in N} v_i(\widehat{A}_i)^p\right)^{1/p}.
    \]
    Because $1/p<0$, maximizing $W_p(\widehat{\mathcal{A}})$ is equivalent to minimizing $\sum_{i\in N} v_i(\widehat{A}_i)^p$.
    Again, the heavy part is constant in this iteration, so it suffices to minimize
    $\sum_{\ell\in L} v_\ell(\pi(\ell))^p$ over perfect matchings (a minimum-cost perfect matching problem).
    \end{itemize}
    As $p$ is fixed, edge weights/costs can be computed in polynomial time to the needed precision; for integer $p$
    they are rational.
    If no perfect matching exists, discard this iteration; otherwise compute an optimal perfect matching
    (e.g., Hungarian algorithm in $O(n^3)$) and record the resulting allocation $\widehat{\mathcal{A}}$.
    Finally, among all recorded allocations across all iterations, return one with maximum $W_p$.

    If the above enumeration results in no feasible completion (i.e., no iteration admits a perfect matching satisfying $v_\ell(g)>0$ and $v_\ell(g)\ge \tau_\ell$), then no EFX$_0$ allocation gives strictly positive utility to every agent. 
    This implies $\max_{\mathcal{A}\in\Pi_{\mathrm{EFX}_0}(G)} W_p(\mathcal{A})=0$; in this case we output any EFX$_0$ allocation (guaranteed to exist for $m\leq n+3$).
    
    Now, we prove correctness.
    Let $\mathcal{A}^*$ be an optimal EFX$_0$ allocation with $\OPT>0$.
    By Step~1, $H(\mathcal{A}^*)$ has size $k^*\le c$ and the union
    $S^*:=\bigcup_{i\in H(\mathcal{A}^*)}A^*_i$ has size $k^*+c\le 6$.
    Therefore the enumeration includes the iteration with
    $H=H(\mathcal{A}^*)$, $S=S^*$, and $B_i=A^*_i$ for all $i\in H(\mathcal{A}^*)$.
    In that iteration, $\mathcal{A}^*$ corresponds to a feasible perfect matching on $L\times R$
    (and satisfies the lemma's EFX$_0$ constraints by definition).
    The matching solver returns a completion whose objective (sum of logs for $p=0$, or sum of $p$-powers for $p<0$)
    is at least as good as $\mathcal{A}^*$'s, thus giving us welfare at least $W_p(\mathcal{A}^*)=\OPT$.
    Since $\OPT$ is the maximum, the algorithm returns an allocation attaining $\OPT$.
    
    Finally, we prove the polynomial runtime of the algorithm.
    Since $c\le 3$ is constant, we have that $k+c\le 6$.
    The number of choices of $S$ is $\binom{m}{k+c} = \mathcal{O}(m^{k+c}) = \mathcal{O}(m^6)$, and the number of labeled partitions of a set with size at most $6$ into $k\le 3$ parts of size $\ge 2$ is a constant.
    Thus the enumeration is polynomial, and each iteration solves one assignment instance in $\mathcal{O}(n^3)$ time.
    Hence the total running time is polynomial in $n$ and $m$.

    Now, we mention how the proof can be adapted to work for EFX.
    Note that the fact EFX$_0$ implies EFX alone does not justify that an EFX$_0$-optimal allocation is also EFX-optimal, because the feasible region is strictly larger for EFX.
    However, the same framework optimizes over EFX after a single change: whenever we define a threshold of the form
    \begin{equation*}
        \tau_x(B_i)=\max_{g\in B_i} v_x(B_i\setminus\{g\}),
    \end{equation*}
    replace it by
    \begin{equation*}
        \tau_x^*(B_i)
    := \max_{\substack{g\in B_i\\ v_x(g)>0}} v_x(B_i\setminus\{g\}),
    \end{equation*}
    with $\tau_x^*(B_i):=0$ if $\{g\in B_i:v_x(g)>0\}=\varnothing$.
    Then define $\tau_\ell^*:=\max_{i\in H}\tau_\ell^*(B_i)$, and in the matching stage require $v_\ell(g)\ge \tau_\ell^*$.
    Lemma~\ref{lem:efx0-threshold-characterization} and the remainder of the proof carry over, because EFX constraints are exactly the same as EFX$_0$ constraints except that they quantify only over goods with positive value to the envying agent.

\subsection{Proof of Theorem~\ref{thm:efx-pmean-opt-hard}}
    We provide the proof for EFX, and remark at the end why the same reduction holds for EFX$_0$ as well.

    Fix $p\in(0,1]$. Define
    \begin{equation*}
        \Phi_p(\mathcal{A}) := \sum_{i\in N} v_i(A_i)^p.
    \end{equation*}
    Since
    \begin{equation*}
        W_p(\mathcal{A})=\left(\frac{1}{n}\sum_{i\in N} v_i(A_i)^p\right)^{1/p} =\left(\frac{1}{n}\Phi_p(\mathcal{A})\right)^{1/p},
    \end{equation*}
    and $x\mapsto x^{1/p}$ is strictly increasing for $p>0$, maximizing $W_p$
    is equivalent to maximizing $\Phi_p$.
    
    We first show NP-hardness for the case where $c \in \{0,1,2,3\}$.
    We reduce from the NP-hard problem \textsc{Partition}. 
    Let $a_1,\dots,a_k$ be positive integers with total sum $T:=\sum_{j=1}^k a_j$, and assume $T$ is even (otherwise replace each $a_j$ by $2a_j$). Fix any constant $c\in\{0,1,2,3\}$.

    Let $n_0:=k+3$ and define agents
    \begin{equation*}
        N:=\{1,2,\dots,n_0\}\cup D \quad \text{and} \quad D:=\{d_1,\dots,d_c\}
    \end{equation*}
    (with $D=\varnothing$ when $c=0$), so $|N|=n_0+c$.
    Let $H:=\{x_1,y_1,\dots,x_c,y_c\}$ and define goods
    \begin{equation*}
        G:=\{g_1,\dots,g_k,x,y,z\}\cup H,
    \end{equation*}
    so
    \begin{equation*}
        m=|G|=k+3+2c=(n_0+c)+c=|N|+c.
    \end{equation*}
    Choose an integer constant $\Lambda \geq 2$ (depending only on fixed $p$) such that
    \begin{equation}\label{eq:Lambda}
        \Lambda^p(1-2^{-p})>2^{-p}.
    \end{equation}
    This is possible since $2^p-1>0$ for $p>0$, and \eqref{eq:Lambda} is equivalent to $\Lambda^p(2^p-1)>1$.

    Define agents' valuations over the goods as follows:
\begin{itemize}
    \item \textit{Partition goods.} For each $j\in[k]$, $v_1(g_j)=v_2(g_j)=a_j$ and $v_i(g_j)=0\ \ \forall i\in N\setminus\{1,2\}$.
    \item \textit{Special goods $x,y$.} $v_1(x)=v_1(y)=v_2(x)=v_2(y)=\tfrac{T}{2}$, $v_3(x)=v_3(y)=\tfrac{\Lambda T}{2}$, and $v_i(x)=v_i(y)=0$ for all $i\in N\setminus\{1,2,3\}$.
    \item \textit{Special good $z$.} For each $i\in\{4,\dots,n_0\}$ set $v_i(z)=\Lambda T$,
    and for $i\in\{1,2,3\}\cup D$ set $v_i(z)=0$.
    \item \textit{Dummy goods (only if $c\geq 1$).} For each $t\in[c]$, $v_{d_t}(x_t)=v_{d_t}(y_t)=1$, $v_{d_t}(g)=0  \text{ for all } g\in G\setminus\{x_t,y_t\}$, and for every original agent $i\in\{1,\dots,n_0\}$ and every $h\in H$ set $v_i(h)=0$.
\end{itemize}
All numbers have $\mathcal{O}(\log T)$ bits (since $\Lambda$ is a constant for fixed $p$), so the
construction is polynomial-time. Moreover, $\Pi_{\mathrm{EFX}}(G)\neq\varnothing$ in the
constructed instance (e.g., assign each good to a distinct agent except that each $d_t$
receives $\{x_t,y_t\}$; then every nonempty bundle is a singleton or a pair that all other
agents value at $0$, so EFX holds).

\medskip
\noindent\textbf{Claim.}
Let $\mathcal{A}$ be an allocation maximizing $\Phi_p$ over all allocations.
Then:
\begin{enumerate}
\item $x,y\in A_3$;
\item $z\in A_i$ for some $i\in\{4,\dots,n_0\}$;
\item for each $t\in[c]$, $\{x_t,y_t\}\subseteq A_{d_t}$;
\item each partition good $g_j$ is allocated to agent $1$ or agent $2$.
\end{enumerate}

\noindent\emph{Proof of the claim.}
(2)--(4) follow immediately because $\Phi_p$ is strictly increasing in each agent's
utility (as $p>0$) and each listed good has positive value only for the specified agent(s).
It remains to show (1), i.e., $x\in A_3$ (the argument for $y$ is identical).

Assume for contradiction that $x\notin A_3$, and let $\ell\neq 3$ be such that $x\in A_\ell$.
Let $\mathcal{A}'$ be obtained from $\mathcal{A}$ by moving good $x$ from agent~$\ell$ to agent~$3$.
Then
\begin{align*}
    \Phi_p(\mathcal{A}')-\Phi_p(\mathcal{A}) & = (v_3(A_3)+v_3(x))^p-v_3(A_3)^p \\
    & \quad \ + (v_\ell(A_\ell)-v_\ell(x))^p-v_\ell(A_\ell)^p.
\end{align*}
Here $v_3(x)=\Lambda T/2$, and because agent $3$ values only $x$ and $y$ and $x\notin A_3$,
we have $v_3(A_3)\le \Lambda T/2$. For $0<p\le 1$, the function $t\mapsto t^p$ is concave, hence
$t\mapsto (t+\delta)^p-t^p$ is nonincreasing in $t$ for each fixed $\delta>0$, so
\[
(v_3(A_3)+\Lambda T/2)^p-v_3(A_3)^p \;\ge\; (\Lambda T)^p-(\Lambda T/2)^p.
\]
Also $v_\ell(x)\in\{0,T/2\}$ and $v_\ell(A_\ell)\ge v_\ell(x)$. Using subadditivity
$(a+b)^p\le a^p+b^p$ for $0<p\le 1$ (apply it to $v_\ell(A_\ell)=(v_\ell(A_\ell)-v_\ell(x))+v_\ell(x)$), we get
\[
(v_\ell(A_\ell)-v_\ell(x))^p-v_\ell(A_\ell)^p \geq -\,v_\ell(x)^p \;\ge\; -(T/2)^p.
\]
Therefore,
\begin{align*}
    \Phi_p(\mathcal{A}')-\Phi_p(\mathcal{A}) & \geq (\Lambda T)^p-(\Lambda T/2)^p-(T/2)^p \\ 
    & = T^p\big(\Lambda^p(1-2^{-p})-2^{-p}\big)>0
\end{align*}
by \eqref{eq:Lambda}, contradicting optimality of $\mathcal{A}$. Hence $x\in A_3$, and similarly $y\in A_3$.
\qed

\medskip
For any allocation $\mathcal{A}$ satisfying the claim, define
\begin{equation*}
    U_1:=\sum_{j:\,g_j\in A_1} a_j \quad \text{and} \quad U_2:=\sum_{j:\,g_j\in A_2} a_j,
\end{equation*}
so $U_1+U_2=T$. Then agent $3$ has utility $\Lambda T$ from $\{x,y\}$, exactly one
agent in $\{4,\dots,n_0\}$ has utility $\Lambda T$ from $z$, each $d_t$ has utility $2$ from
$\{x_t,y_t\}$, and all remaining agents have utility $0$. Hence
\begin{equation}\label{eq:Phi-form}
\Phi_p(A)=c\cdot 2^p + 2(\Lambda T)^p + U_1^p+U_2^p.
\end{equation}
Since $t\mapsto t^p$ is concave for $0<p\le 1$, Jensen's inequality gives us
$U_1^p+U_2^p\le 2(T/2)^p$ under $U_1+U_2=T$. 
Define $B_c := c\cdot 2^p + 2(\Lambda T)^p + 2(T/2)^p$.
Let $\mathcal{A}^{\max}$ be a global maximizer of $\Phi_p$ over $\Pi(G)$. 
Now, by the claim, $\mathcal{A}^{\max}$ satisfies (1)--(4), and therefore $\Phi_p(\mathcal{A}^{\max})\le B_c$, i.e., $\max_{\mathcal{B}\in\Pi(G)}\Phi_p(\mathcal{B})\le B_c$.
Moreover, if $p\in(0,1)$, strict concavity implies equality iff $U_1=U_2=T/2$. If $p=1$, then $t\mapsto t$ is linear, so equality holds for all $U_1,U_2$ satisfying $U_1+U_2=T$ (for any $c$).

Now, suppose the \textsc{Partition} instance is a yes-instance, so there is $S_1\subseteq[k]$ with $\sum_{j\in S_1} a_j=T/2$ and $S_2=[k]\setminus S_1$. 
Define the allocation $\mathcal{A}^*$ as follows: $A^*_1=\{g_j:j\in S_1\}$, $A^*_2=\{g_j:j\in S_2\}$, $A^*_3=\{x,y\}$, $A^*_4=\{z\}$, $A^*_{d_t}=\{x_t,y_t\} \forall t\in[c]$, and give $\varnothing$ to every other original agent in $\{5,\dots,n_0\}$.
Then $U_1=U_2=T/2$, so \eqref{eq:Phi-form} gives $\Phi_p(\mathcal{A}^*)=B_c$.

We prove that $\mathcal{A}^*$ is EFX.
Dummy agents only value their own pair, so any comparison involving a dummy bundle is trivial.
Agent $3$ values only $x$ and $y$ and holds both, so she does not envy any other bundle.
Agents $1$ and $2$ each have utility $T/2$, and for each $g\in\{x,y\}$, $v_1(A^*_3\setminus\{g\})=T/2\le v_1(A^*_1)$ and $v_2(A^*_3\setminus\{g\})=T/2\le v_2(A^*_2)$,
so EFX holds with respect to agent $3$. Between agents $1$ and $2$, for any $g\in A^*_2$,
\[
v_1(A^*_2\setminus\{g\})=v_1(A^*_2)-v_1(g)\le T/2=v_1(A^*_1),
\]
and symmetrically for agent $2$ with respect to agent $1$. Finally $A^*_4\setminus\{z\}=\varnothing$,
so all remaining EFX inequalities involving agent $4$ hold:
for any $i\ge 5$, $A_i=\varnothing$ and $i$ values every bundle minus one good at $0$ (since the only good $i$ values positively is $z$, which is in a singleton bundle). Also, agent $4$ values all goods except $z$ at $0$, so she does not envy anyone.

Let $\mathcal{A}$ be an EFX allocation maximizing $\Phi_p$ over $\Pi_{\mathrm{EFX}}(G)$.
Since $\mathcal{A}^*$ is feasible with value $B_c$, we have $\max_{B\in\Pi_{\mathrm{EFX}}(G)}\Phi_p(B)\ge B_c$.
Also $\Pi_{\mathrm{EFX}}(G)\subseteq \Pi(G)$, so $\max_{B\in\Pi_{\mathrm{EFX}}(G)}\Phi_p(B)\le \max_{B\in\Pi(G)}\Phi_p(B)\le B_c$.
Hence $\Phi_p(\mathcal{A})=B_c$, so in particular $\mathcal{A}$ is also a global $\Phi_p$-maximizer and satisfies the claim. 
Thus $x,y\in A_3$ and all partition goods are split between agents $1$ and $2$.

Applying EFX for agent $1$ with respect to agent $3$ and the good $x\in A_3$ gives us
\begin{equation*}
    U_1=v_1(A_1)\ \ge\ v_1(A_3\setminus\{x\})=v_1(\{y\})=T/2,
\end{equation*}
and similarly $U_2\ge T/2$, so $U_1=U_2=T/2$. 
Thus, given an algorithm that outputs an EFX allocation maximizing $\Phi_p$, we run it on the constructed instance and compute $U_1=\sum_{j:g_j\in A_1} a_j$. Output `yes' iff $U_1=T/2$. By the argument above, this solves \textsc{Partition}.
This shows the optimization problem is NP-hard for each $c\in\{0,1,2,3\}$.

Next, we describe how to extend this result to any fixed negative $c$ (for the case of $m < n$).
Fix $c<0$ and let $t=-c$. Reduce from the $c=0$ construction by adding $t$ dummy agents with identically zero valuations and keeping the goods/valuations of the original agents unchanged. In a yes-instance, the EFX allocation $\mathcal{A}^*$ from the $c=0$ case remains EFX after adding the new agents (give them empty bundles), and it achieves the same value $B_0$. Also, the global upper bound $B_0$ on $\max_{\mathcal{B}\in\Pi(G)}\Phi_p(\mathcal{B})$ remains valid because adding zero agents does not increase $\Phi_p$ for any allocation. Hence the EFX-optimum in the augmented instance equals $B_0$, so any EFX-optimal allocation must also be a global maximizer and the earlier argument forces a perfect partition.

Note that in our reduction, we constructing an instance with special goods $x,y,z$ and (when $c>0$) dummy pairs $(x_t,y_t)$; in a yes-instance it exhibits an allocation $\mathcal{A}^*$ that (i) achieves the global upper bound $B_c$ for $\Phi_p$, and (ii) is EFX. 
In this construction, $\mathcal{A}^*$ is in fact EFX$_0$ as well: whenever an agent values every good in a compared bundle at $0$, removing any one of those goods still leaves the agent's value at $0$, so the extra EFX$_0$ inequalities are automatically satisfied. 
Therefore the EFX$_0$-feasible optimum also attains $B_c$, forcing any EFX$_0$-optimal solution to be a global $\Phi_p$-maximizer with the same forced structural properties. 
Finally, where applicable, we only removed goods from a nonempty bundle when the good is positively valued by agents $1$ and $2$, so the exact same inequalities are required under both EFX and EFX$_0$.
Hence an algorithm that optimizes over $\Pi_{\mathrm{EFX}}$ or over $\Pi_{\mathrm{EFX}_0}$ would decide \textsc{Partition}, proving NP-hardness for both cases.

\section{Omitted Proofs in Section~\ref{sec:price}}
\subsection{Proof of Theorem~\ref{thm:price_>0_lowerbound}}
    We prove the bound by exhibiting a family of instances.
    For $\varepsilon > 0$, let $\mathcal{I}_\varepsilon = (N, G, \mathbf{v}^\varepsilon)$ denote the instance defined below.
    Since $v_i^\varepsilon(g) > 0$ for all $i \in N$ and $g \in G$, we have $\Pi_\mathrm{EFX}(G) = \Pi_{\mathrm{EFX}_0}(G)$ in $\mathcal{I}_\varepsilon$, and thus $\mathrm{PoEFX}_0^{\mathcal{I}_\varepsilon}(p, c) = \mathrm{PoEFX}^{\mathcal{I}_\varepsilon}(p,c)$.

    Consider an instance with $n$ agents and $m=n+c$ goods for some fixed integer $c \leq 3$, and assume $m \geq 1$. 
    Let agents valuations be defined as follows (for simplicity we use $\mathbf{v}$ rather than $\mathbf{v}^\varepsilon$): for every $g \in G$,
    \begin{equation*}
    	v_1(g) = 1 \quad \text{and} \quad v_i(g) = \varepsilon \text{ for all $i \in N \setminus \{1\}$}.
    \end{equation*}
   	
   	Now, note that for any $p\in(0,1]$, the function $f(x) := x^p$ is strictly increasing.
   	
   	Let $\mathcal{B} = (B_1,\dots,B_n)$ be some allocation such that $|B_1| < m$. 
    Fix $\varepsilon > 0$ to be sufficiently small (so that the following inequality holds for every allocation $\mathcal{B}$ with $|B_1| < m$),
   	\begin{equation*}
   		\PM(\mathcal{B})^p = \frac{1}{n} \sum_{i \in N} v_i(B_i)^p < \frac{m^p}{n}
   	\end{equation*}
   	If we consider the allocation $\mathcal{B}' = (B'_1,\dots, B'_n)$ that allocates all $m$ goods to agent~$1$, 
   	\begin{equation*}
   		\PM(\mathcal{B}')^p = \frac{1}{n} \sum_{i \in N} v_i(B'_i)^p = \frac{m^p}{n} + 0 = \frac{m^p}{n} > \PM(\mathcal{B})^p.
   	\end{equation*}
    Thus, for $\mathcal{I}_\varepsilon$, allocating all $m$ goods to agent~$1$ would maximize the $p$-mean welfare; denote
   	\begin{equation} \label{eqn:lowerbound_W_opt_unconstrained}
   		\OPT_p(\mathcal{I}_\varepsilon) := \max_{A\in\Pi(G)} W_p(\mathcal{A}) = W_p(\mathcal{B}') = \frac{m}{n^{1/p}}
   	\end{equation}
    
    Next, we examine the structure of EFX$_0$ allocations for this instance.
    Consider any allocation $\mathcal{A} = (A_1,\dots, A_n)$.
    If $|A_1| = \varnothing$, then $|A_1| = 0 \leq \lfloor \frac{m+n-1}{n}\rfloor$ trivially. Otherwise, fix any $g \in A_1$ and apply EFX$_0$ for agents~$i$ and $1$ with this $g$.
    For any $i \geq 2$,
    \begin{align*}
    	& |A_i| \cdot \varepsilon = v_i(A_i) \geq v_i(A_1) - \varepsilon  =  (|A_1| - 1) \cdot \varepsilon \\
    	\iff\, & |A_i| \geq |A_1| - 1 \\
    	\iff\, & \sum_{i \in N \setminus \{1\}} |A_i| \geq (n-1) \cdot (|A_1|-1) \\
    	\iff\, & m - |A_1| \geq (n-1) \cdot |A_1| - (n-1) \\
    	\iff\, & n \cdot |A_1| \leq m + n - 1,
    \end{align*}
    where the first line is a necessary condition for EFX$_0$. Thus, in order for any allocation $\mathcal{A}$ to be EFX$_0$, we must have that $|A_1| \leq \frac{m+n-1}{n}$, or equivalently, since $|A_1|$ is an integer,
    \begin{equation}
    	|A_1| \leq \left\lfloor \frac{m+n-1}{n} \right\rfloor.
    \end{equation}
    In particular, if $m \leq n$ (i.e., $c \leq 0$), then $\frac{m+n-1}{n} < 2$  and hence $\lfloor \frac{m+n-1}{n} \rfloor = 1$, so every EFX$_0$ allocation $|A_1| \leq 1$.

	Next, we show that there exists an EFX$_0$ allocation $\mathcal{A}^* = (A^*_1,\dots, A^*_n)$ with $|A^*_1| = \left\lfloor \frac{m+n-1}{n} \right\rfloor$.
	Let $\mathcal{A}^*$ be an allocation satisfying the following conditions: for $i \geq 2$,
	\begin{enumerate}[(i)]
		\item $|A_i| \in \{\left\lfloor \frac{m+n-1}{n} \right\rfloor-1, \left\lfloor \frac{m+n-1}{n} \right\rfloor\}$,
		\item exactly $r:= m + n - 1 - n \cdot \left\lfloor \frac{m+n-1}{n} \right\rfloor$ agents have bundle size of exactly $\left\lfloor \frac{m+n-1}{n} \right\rfloor$, and the $n-1-r$ remaining agents have bundle size of exactly $\left\lfloor \frac{m+n-1}{n} \right\rfloor-1$, and
		\item $|A_1| = \left\lfloor \frac{m+n-1}{n} \right\rfloor$.
	\end{enumerate}
	Observe that given the constraints above,
	\begin{align*}
		A_1 + \sum_{i \in N \setminus \{1\}} |A_i| 
		& = \left\lfloor \frac{m+n-1}{n} \right\rfloor + r \cdot \left\lfloor \frac{m+n-1}{n} \right\rfloor + (n-1-r) \cdot \left(\left\lfloor \frac{m+n-1}{n} \right\rfloor - 1 \right) \\
		& = n \cdot \left\lfloor \frac{m+n-1}{n} \right\rfloor - (n-1-r) \\
		& = n \cdot \left\lfloor \frac{m+n-1}{n} \right\rfloor -n + 1 + \left( m + n - 1 - n \cdot \left\lfloor \frac{m+n-1}{n} \right\rfloor \right)\\
		& = m.
	\end{align*}
	Together with the fact the value of each agent for every item is the same, we can be sure that the allocation $\mathcal{A}^*$ is valid and exists.
	
	We now prove that $\mathcal{A}^*$ is EFX$_0$.
	For any $i \geq 2$, conditions (i) and (iii) gives us 
	\begin{equation*}
		|A_i| \geq \left\lfloor \frac{m+n-1}{n} \right\rfloor - 1 = |A_1|-1.
	\end{equation*}
	Thus, we get that
	\begin{align*}
		v_i(A_i) = \varepsilon \cdot |A_i| \geq \varepsilon \cdot (|A_1| - 1) = v_i(A_1 \setminus \{g\})
	\end{align*}
	 for any $g \in A_1$. Thus, any agent $i \in N \setminus \{1\}$ will not envy agent~$1$ after removing any item (in particular, the lowest-valued one) from agent~$1$'s bundle.
	 
	 For any two agents $i,j \geq 2$, $i \neq j$, condition (i) gives us $|A_i|,|A_j| \in \left\{ \left\lfloor \frac{m+n-1}{n} \right\rfloor - 1, \left\lfloor \frac{m+n-1}{n} \right\rfloor \right\}$.
	 Thus, we have that
	 \begin{equation*}
	 	|A_i| \geq |A_j| - 1 \quad \text{and} \quad |A_j| \geq |A_i| - 1.
	 \end{equation*}
	 Since $v_i(A_i) = v_j(A_i) = \varepsilon \cdot |A_i|$, $v_i(A_j) = v_j(A_j) = \varepsilon \cdot |A_j|$, and each agent $i \in N \setminus \{1\}$ values every good at $\varepsilon$, they will not envy each other after dropping any good.
	 
	 Finally, we have that for any $i \geq 2$ and any $g \in A_i$, we have
     \begin{equation*}
         v_1(A_i\setminus\{g\}) = |A_i|-1 \le |A_1| = v_1(A_1).
     \end{equation*}
	  Thus, agent~$1$ will not envy any agent $i$ up to any good.
	 
	Thus, $\mathcal{A}^*$ satisfies EFX$_0$.
	
	Consider any EFX$_0$ allocation $\mathcal{A} = (A_1,\dots, A_n)$. Then, note that $v_1(A_1) = |A_1|$ and $v_i(A_i) = \varepsilon \cdot |A_i|$ for all $i \geq 2$, which vanishes as $\varepsilon \rightarrow 0$.
	The only asymptotically non-negligible term in $\sum_{i \in N} v_i(A_i)^p$ is $|A_1|^p$.
	Thus, by taking $\varepsilon \rightarrow 0$, we get that
	\begin{align*}
		\max_{\mathcal{A}\in\Pi_{\mathrm{EFX}_0}(G)} W_p(\mathcal{A}) & =  \left( \frac{v_i(A_i)^p}{n}  \right)^{1/p} = \frac{|A_1|}{n^{1/p}} + o(1) = \frac{\left\lfloor \frac{m+n-1}{n} \right\rfloor}{n^{1/p}} + o(1)
	\end{align*}
    Combining (\ref{eqn:lowerbound_W_opt_unconstrained}) with the bound $|A_1|\le \left\lfloor\frac{m+n-1}{n}\right\rfloor$ for every EFX$_0$ allocation and the construction of an EFX$_0$ allocation achieving $|A_1|=\left\lfloor\frac{m+n-1}{n}\right\rfloor$, we obtain for $\mathcal{I}_\varepsilon$ that 
    \begin{align*}
        \mathrm{PoEFX}^{\mathcal{I}_\varepsilon}_0(p,c) =\frac{\OPT_p(\mathcal{I}_\varepsilon)}{\max_{\mathcal{A}\in\Pi_{\mathrm{EFX}_0}(G)} W_p(\mathcal{A})} =\frac{m}{\left\lfloor\frac{m+n-1}{n}\right\rfloor}-o(1).
    \end{align*}
    Taking the supremum over $\varepsilon > 0$ (equivalently, $\varepsilon \rightarrow 0^+$) gives us
    \begin{equation*}
        \mathrm{PoEFX}_0(p,c) \geq \frac{m}{\left\lfloor\frac{m+n-1}{n}\right\rfloor}-o(1) = \frac{n+c}{\left\lfloor\frac{2n+c-1}{n}\right\rfloor}-o(1).
    \end{equation*}
    Moreover, because all goods are positively valued  by all agents in $\mathcal{I}_\varepsilon$, EFX and EFX$_0$ coincide, so the same lower bound holds for $\mathrm{PoEFX}(p,c)$.

\subsection{Proof of Theorem~\ref{thm:price_>0_upperbound}}
    We first show the bound for EFX$_0$.
    Before starting with the proof of the upper bound, we first show a useful property.
    
    Let $m\le n+3$ and start from a partial EFX$_0$ allocation where only agent $i$ holds a singleton $\{g\}$ and all other agents hold empty bundles. 
    Then there must exist a consisting of $m-1$ single-item insertions, each followed by a (possibly empty) sequence of EFX$_0$-preserving exchanges that extends this partial allocation to a complete EFX$_0$ allocation in which agent $i$ keeps $g$.
    
    To see this, start from the partial allocation with only $g$ assigned to $i$ and every other agent has an empty bundle; this is trivially EFX$_0$. Add the remaining items one by one. For each insertion, apply \citet{mahara2023extension}'s potential‑based augmentation (for $m\le n+3$) to restore EFX$_0$. Because $\{g\}$ is a singleton and appears at the bottom of every other agent's envy‑comparison (removing its only item gives us the empty bundle), we can maintain an invariant that $g$ is never reallocated: if an exchange would move $g$, swap the orientation of the corresponding cycle (or tie‑break in the potential) to reassign a different item on that step.
    
    Let $v^*:=\max_{i \in N, \, g \in G}v_i(g)$.
    Also let $\OPT_p(\mathcal{I}) := \max_{\mathcal{A}\in \Pi(G)} W_p(\mathcal{A})$, and  $\OPT^{\mathrm{EFX_0}}_p(\mathcal{I}) := \max_{\mathcal{A}\in \Pi_{\mathrm{EFX0}}(G)}  W_p(\mathcal{A})$.
    
    For any allocation $\mathcal{A} = (A_1, \dots, A_n)$ and any $p \in (0, 1]$,
    \begin{align*}
        \PM(\mathcal{A}) \le\ W_1(\mathcal{A}) & = \frac{1}{n} \sum_{i \in N} v_i(A_i) \leq \frac1n\sum_{g\in G}\max_{i \in N} v_i(g) \leq \frac{mv^*}{n}.
    \end{align*}
    In particular, $\OPT_p(\mathcal{I}) \leq mv^*/n$.

    Next, let $(i^*, g^*) := \argmax_{i \in N, g \in G} v_i(g)$.

    If $m \leq n$ (equivalently, $c \leq 0$), construct a (complete) allocation $\widehat{\mathcal{A}}$ by allocating $g^*$ to agent $i^*$, and allocating  each remaining good in $G \setminus \{g^*\}$ to a distinct agent in $N \setminus \{i^*\}$ (possible since $m-1 \leq n-1$); leave all other agents' bundle empty.
    Then every bundle has size at most $1$, so $\widehat{\mathcal{A}}$ is EFX$_0$.
    Thus,
    \begin{equation*}
        W_p(\widehat{\mathcal{A}}) \geq \left( \frac{1}{n} \cdot (v^*)^p \right)^{1/p} = \frac{v^*}{n^{1/p}}.
    \end{equation*}
    Therefore, $\max_{\mathcal{A} \in \Pi_{\mathrm{EFX}_0}} W_p(\mathcal{A}) \geq \frac{v^*}{n^{1/p}}$ also holds when $m < n$.

    By the earlier augmentation property \citep{mahara2023extension}, starting from the partial EFX$_0$ allocation with only $g^*$ allocated to $i^*$, we can extend to a complete EFX$_0$ allocation $\widehat{\mathcal{A}}^*$ while keeping $g^*$ in $i^*$'s bundle.
    Then, we have that
    \begin{align*}
        \PM(\widehat{\mathcal{A}}^*) 
        = \left( \frac{1}{n} v_{i^*}(\widehat{A}^*_{i^*})^p + \sum_{j \in N \setminus \{i\}} v_j(\widehat{A}^*_j)^p \right)^{1/p}  \geq \left( \frac{1}{n} v_{i^*}(\widehat{A}^*_{i^*})^p \right)^{1/p} = \frac{v^*}{n^{1/p}},
    \end{align*}
    since $v_{i^*}(\widehat{A}^*_{i^*}) \geq v_{i^*}(g^*) = v^*$.
    Moreover, since $\max_{\mathcal{A} \in \Pi_{\mathrm{EFX}_0}(G)} \PM(\mathcal{A}) \geq \PM(\widehat{\mathcal{A}}^*)$, we get 
    \begin{align*}
        \mathrm{PoEFX}^\mathcal{I}_0(p,c)  = \frac{\OPT_p(\mathcal{I})}{\OPT^{\mathrm{EFX_0}}_p(\mathcal{I})}  \leq \frac{mv^* / n}{v^*/n^{1/p}} = (n+c) \cdot n^{1/p-1}.
    \end{align*}
    Since the above bound holds for every instance $\mathcal{I}$ with $m=n+c$, taking $\sup_\mathcal{I}$ gives us $\mathrm{PoEFX}_0(p,c) \leq (n+c)\cdot n^{1/p-1}$, as desired.

   Since $\Pi_{\mathrm{EFX_0}}(G)\subseteq \Pi_{\mathrm{EFX}}(G)$, we have $\OPT^{\mathrm{EFX}}_p(\mathcal{I}) = \max_{\mathcal{A} \in \Pi_\mathrm{EFX}(G)} W_p(\mathcal{A}) \geq \OPT^{\mathrm{EFX_0}}_p(\mathcal{I})$,
so the same upper bound holds for $\mathrm{PoEFX}(p,c)$ as well.

\subsection{Proof of Theorem~\ref{thm:price_leq0_lowerbound}}
    We first prove the bound for $\mathrm{PoEFX}_0(p,c)$, and at the end, describe how this can be extended to hold for $\mathrm{PoEFX}(p,c)$ as well.

    We first consider the case where $c < 0$ (i.e., $m < n$).
    Then, in every allocation $\mathcal{A} = (A_1,\dots, A_n)$, there exists an agent $i \in N$ with $A_i = \varnothing$.
    Since $p \leq 0$, this means $W_p(\mathcal{A}) = 0$ for every allocation $\mathcal{A} \in \Pi(G)$.
    Thus, $\max_{\mathcal{A} \in \Pi(G)} W_p(\mathcal{A}) = 0$.
    Since $\Pi_\mathrm{EFX}(G) \subseteq \Pi(G)$, we also have that $\max_{\mathcal{A} \in \Pi_\mathrm{EFX}(G)} W_p(\mathcal{A}) = 0$ (and similarly for EFX$_0$).
    Therefore, $\mathrm{PoEFX}(p,c) = \mathrm{PoEFX}_0(p,c) = 0$.
    We may therefore assume $c \geq 0$, i.e., $c \in \{0,1,2,3\}$ for the remainder of the proof.

    Fix $n\ge 2$ and $c\in\{0,1,2,3\}$, where $m=n+c$. Let the set of goods be $G=\{p_2,\dots,p_n\} \cup S$, where $|S|=c+1$.
    Let agents' valuation functions be defined as follows:
    \begin{itemize}
	\item For agent~$1$: $v_1(g)=\frac{1}{c+1}$ for every $g\in S$, and $v_1(p_i)=0$ for $i\geq 2$.
        \item For each $i\in\{2,\dots,n\}$: $v_i(p_i)=1$ and $v_i(g)=1 + \varepsilon$ for every $g\in S$ and some small $\varepsilon > 0$; all other goods have value $0$ to agent~$i$.
    \end{itemize}
        
    Define the allocation $\mathcal{A^*} = (A_1^*,\dots, A_n^*)$ as follows: allocate all $c+1$ goods in $S$ to agent~$1$ and each good $p_i$ to agent $i\geq 2$.
    Then, under $\mathcal{A}^*$, every agent receives utility $1$.
    Thus, $W_p(\mathcal{A}^*) = 1$ for every $p < 0$, and $W_0(\mathcal{A}^*) = \left( \prod_{i=1}^n 1\right)^{1/n} = 1$.
    Thus, letting $\OPT_p(\mathcal{I}) := \max_{\mathcal{A}\in\Pi(G)} W_p(\mathcal{A})$, we have that $\OPT_p(\mathcal{I}) \geq 1$ for all $p \leq 0$.
    
    Now, let $\mathcal{A} = (A_1, \dots, A_n)$ be any EFX$_0$ allocation. 
    For agents $i\ge 2$, the only private good they value is $p_i$, and it contributes at most $1$ to at most one agent's utility. 
    Thus, the total contribution of the set $\{p_2,\dots,p_n\}$ to $\sum_{i=2}^n v_i(A_i)$ is at most $n-1$, regardless of how those goods are allocated.
    Then, observe that 
    \begin{equation*}
        |A_1 \cap S| \leq 2.
    \end{equation*}
    To see why, suppose that $|A_1 \cap S| \geq 3$. Then, since $c \leq 3$, there are at most $c+1-|A_1 \cap S| \leq 1$ left in $S$ that can be allocated to agents in $N\setminus \{1\}$. Thus, for any $i \in N \setminus \{1\}$ and $g \in A_1$,
    \begin{equation*}
        v_i(A_i) \leq v_i(p_i) + 1+\varepsilon = 2 + \varepsilon < 2 + 2\varepsilon \leq v_i(A_1 \setminus \{g\}),
    \end{equation*}
    contradicting the assumption that $\mathcal{A}$ is an EFX$_0$ allocation.
    Thus, in every EFX$_0$ allocation $\mathcal{A}$, $|A_1 \cap S| \in \{0,1,2\}$. 
    We only need to upper bound $\max_{\mathcal{A}\in\Pi_{\mathrm{EFX}_0}(G)}$ $W_p(\mathcal{A})$. If $\max_{\mathcal{A}\in\Pi_{\mathrm{EFX}_0}(G)}$ $W_p(\mathcal{A})=0$, while $\OPT_p(\mathcal{I}) > 0$, then $\mathrm{PoEFX}_0^\mathcal{I}(p,c) = +\infty$ and the theorem holds trivially.
    Otherwise, fix an EFX$_0$ allocation $\mathcal{A}$ with $W_p(\mathcal{A})>0$, which implies $v_i(A_i) > 0$ for all $i$, and hence $\gamma \geq 1$.
    Thus, $|A_1 \cap S| \in \{1,2\}$.

    Next, we get an upper bound for the maximum $p$-mean welfare for any EFX$_0$ allocation when $p = 0$ (i.e., Nash welfare).
    Let $\gamma = |A_1 \cap S|$ and fix some $\gamma \in \{1,2\}$ where $\gamma \leq c+1$.
    Then, the total number of items from $S$ that is allocated to agents in $N \setminus \{1\}$ is $c+1-\gamma$.

    Each good in $S$ contributes exactly $1+\varepsilon$ to the utility of any agent $i\ge 2$, so the total contribution of the t goods of S that are not given to agent 1 to the sum $\sum_{i=2}^n v_i(A_i)$ is exactly $t(1+\varepsilon)$. Each private good $p_j$ contributes value 1 to at most one agent's utility (only agent j values it), so the total contribution of $\{p_2,\dots,p_n\}$ to $\sum_{i=2}^n v_i(A_i)$ is at most $n-1$. Thus,
    \begin{equation*}
        \sum_{i = 2}^n v_i(A_i) \leq (n-1) + t(1+\varepsilon).
    \end{equation*}
    Then,
    \begin{align*}
        \prod_{i=2}^n v_i(A_i) & \leq \left(\frac{(n-1)+t(1+\varepsilon)}{n-1}\right)^{n-1} =\left(1+\frac{t(1+\varepsilon)}{n-1}\right)^{n-1}.
    \end{align*}
    Also, $v_1(A_1)=\gamma/(c+1)$. Thus, we get that 
    \begin{align*}
        W_0(\mathcal{A}) & = \left(v_1(A_1)\cdot\prod_{i=2}^n u_i\right)^{1/n}  \leq \left(\frac{\gamma}{c+1}\cdot\left(1+\frac{(c+1-\gamma)(1+\varepsilon)}{n-1}\right)^{n-1}\right)^{1/n}.
    \end{align*}
    Then,
    \begin{align*}
        \max_{\mathcal{A} \in \Pi_{\mathrm{EFX}_0}(G)} W_0(\mathcal{A}) 
        & \leq \Big( \frac{\gamma}{c+1} \cdot \Big(1+\tfrac{c+1-\gamma}{n-1}(1+\varepsilon)\Big)^{n-1} \Big)^{1/n} \\
        & \leq \max_{\gamma \in \{1,2\}} \Big( \frac{\gamma}{c+1} \cdot \Big(1+\tfrac{c+1-\gamma}{n-1}(1+\varepsilon)\Big)^{n-1} \Big)^{1/n} \\
         & \underset{\varepsilon \rightarrow 0}{\rightarrow} \max_{\gamma \in \{1,2\}} \Big( \frac{\gamma}{c+1} \cdot \Big(1+\tfrac{c+1-\gamma}{n-1}\Big)^{n-1} \Big)^{1/n},
    \end{align*}
    since $\varepsilon>0$ can be chosen arbitrarily small (and rational), the supremum price over instances is at least the limit as $\varepsilon\to 0^+$.

    Also note that for any $p \leq 0$,
    \begin{equation*}
        \max_{\mathcal{A} \in \Pi_{\mathrm{EFX}_0}(G)}\PM(\mathcal{A}) \leq \max_{\mathcal{A} \in \Pi_{\mathrm{EFX}_0}(G)}W_0(\mathcal{A}).
    \end{equation*}
    Combining this with the above, we have that
    \begin{equation*}
        \max_{\mathcal{A} \in \Pi_{\mathrm{EFX}_0}(G)}\PM(\mathcal{A}) \leq \max_{\gamma \in \{1,2\}} \Big( \frac{\gamma}{c+1} \cdot \Big(1+\tfrac{c+1-\gamma}{n-1}\Big)^{n-1} \Big)^{1/n}.
    \end{equation*}
    Consequently, for the constructed instance $\mathcal{I}$, we have that
    \begin{align*}
        \mathrm{PoEFX}_0^\mathcal{I}(p,c) =\frac{\OPT_p(\mathcal{I})}{\max_{\mathcal{A} \in \Pi_{\mathrm{EFX}_0}(G)} \PM(\mathcal{A})}.        
    \end{align*}
    Since $\OPT_p(\mathcal{I}) \geq 1$, and we have shown that 
    \begin{align*}
         \max_{\mathcal{A}\in\Pi_{\mathrm{EFX0}}(G)} W_p(\mathcal{A})
        & \leq \max_{\gamma\in\{1,2\}} \left( \frac{\gamma}{c+1}\cdot\left(1+\frac{c+1-\gamma}{n-1}\right)^{n-1} \right)^{1/n},
    \end{align*}
    it follows that
    \begin{equation*}
        \mathrm{PoEFX}_0^\mathcal{I}(p,c) \geq \frac{1}{\max_{\gamma \in \{1,2\}} \Big( \frac{\gamma}{c+1} \cdot \Big(1+\tfrac{c+1-\gamma}{n-1}\Big)^{n-1} \Big)^{1/n}}.
    \end{equation*}
    Taking the supremum over all instances with $m=n+c$, we get the bound on $\mathrm{PoEFX}_0(p,c)$ as desired.
    
    Now, we show that the same construction also gives the stated lower bound for $\mathrm{PoEFX}^\mathcal{I}(p,c)$.
    Indeed, fix $p\leq 0$ and consider the instance defined above.
    Let $\mathcal{A}=(A_1,\dots,A_n)$ be an EFX allocation maximizing $W_p$ among EFX allocations.
    If $W_p(\mathcal{A})=0$, then $\mathrm{PoEFX}^\mathcal{I}(p,c)=\frac{\OPT_p(\mathcal{I})}{0}=+\infty$ and the bound is trivial.
    Otherwise $W_p(\mathcal{A}) > 0$, implying $v_i(A_i)>0$ for all $i \in N$, and in particular agent~$1$ must receive at least one good from $S$.
    Let $\gamma:=|A_1\cap S|\ge 1$.
    
    We claim that $\gamma\leq 2$ also holds under EFX.
    Suppose for contradiction that $\gamma\ge 3$.
    Since $|S|=c+1\le 4$, at most one good from $S$ remains outside $A_1$.
    Hence every agent $i\in\{2,\dots,n\}$ receives at most one good from $S$, so
    \[
    v_i(A_i)\ \le\ v_i(p_i) + (1+\varepsilon)\ =\ 2+\varepsilon.
    \]
    On the other hand, for any $g\in A_1\cap S$ we have $|A_1\cap S|\ge 3$ and thus
    $A_1\setminus\{g\}$ contains at least two goods from $S$, giving
    \[
    v_i(A_1\setminus\{g\})\ \ge\ 2(1+\varepsilon)\ =\ 2+2\varepsilon.
    \]
    Because $v_i(g)=1+\varepsilon>0$ for every $i\ge 2$ and every $g\in S$, these are precisely
    the relevant EFX inequalities (with envying agent $i$ and envied agent $1$),
    and we obtain $v_i(A_i) < v_i(A_1\setminus\{g\})$, a contradiction.
    Therefore $\gamma\in\{1,2\}$ for any EFX allocation with positive $p$-mean welfare.
    
    With $\gamma\in\{1,2\}$, the remainder of the argument bounding $W_0(\mathcal{A})$ (and hence
    bounding $W_p(\mathcal{A})$ for all $p\le 0$ via $W_p(\mathcal{A})\le W_0(\mathcal{A})$) applies similarly, so we get the
    same upper bound on $\max_{\mathcal{A}\in\Pi_{\mathrm{EFX}}(G)} W_p(\mathcal{A})$ as in the $\mathrm{EFX_0}$ case.
    Combining this with $\OPT_p(\mathcal{I}) \ge 1$, gives us the same lower bound:
    \[
    \mathrm{PoEFX}^\mathcal{I}(p,c) \ge
    \frac{1}{\max_{\gamma\in\{1,2\}}
    \left(\frac{\gamma}{c+1}\cdot\left(1+\frac{c+1-\gamma}{n-1}\right)^{n-1}\right)^{1/n}}.
    \]
    Again, taking the supremum over all instances with $m = n+c$, we get the bound on $\mathrm{PoEFX}(p,c)$ as desired.
    
\subsection{Proof of Theorem~\ref{thm:pleq0_upperbound}}
Let $\mathcal{A}^{\mathrm{opt}} = (A^{\mathrm{opt}}_1,\ldots,A^{\mathrm{opt}}_n)$ be an allocation maximizing $W_p$ over $\Pi(G)$.
    Write
$\OPT_p(\mathcal{I}) := W_p(\mathcal{A}^{\mathrm{opt}}) = \max_{\mathcal{A}\in \Pi(G)} W_p(\mathcal{A})$.
 
 If $\OPT_p(\mathcal{I})=0$, then since $p\le 0$, we have $W_p(A)=0$ for every allocation $\mathcal{A}$. 
 Hence both the numerator and denominator are $0$, and therefore $\mathrm{PoEFX}^\mathcal{I}(p,c)=\mathrm{PoEFX}_0^\mathcal{I}(p,c)=0$.
This case includes $m<n$ (i.e., $c<0$), since some agent must receive $\varnothing$ and thus have utility $0$.
Therefore, assume $\OPT_p(\mathcal{I})>0$. 
Then every agent must receive strictly positive utility, so in particular $m\ge n$ (equivalently $c\ge 0$) and $|\mathcal{A}^{\mathrm{opt}}_i|\ge 1$ for all $i\in N$.

Consequently,
    \begin{equation*}
        \sum_{i \in N} |A^{\mathrm{opt}}_i| = n + c \implies \max_{i \in N} |A_i^{\mathrm{opt}}| \leq 1 + c.
    \end{equation*}
    For each $i \in N$, let $g_i \in A_i^{\mathrm{opt}}$ be a single item of maximum value to agent~$i$ inside $A_i^{\mathrm{opt}}$, and define $b_i := v_i(g_i)$.
    By additivity and maximality, for every $i \in N$, we have that
    \begin{equation} \label{eqn:uppbound_pleq0_b_i}
        b_i \geq \frac{v_i(A_i^{\mathrm{opt}})}{|A_i^{\mathrm{opt}}|}.
    \end{equation}
    Then, define
    \begin{equation*}
        \widehat{W}_p = 
        \begin{cases}
            \left(\frac{1}{n} \sum_{i \in N} b_i^p \right)^{1/p} & \text{ for } p < 0, \\
            \left( \Pi_{i \in N} b_i \right)^{1/n} & \text{ for } p = 0.
        \end{cases}
    \end{equation*}
    Using (\ref{eqn:uppbound_pleq0_b_i}) and the fact that $f(x) = x^p$ is a decreasing function for all $p < 0$ (and $\log$ is increasing), we get
    \begin{equation*}
        \widehat{W}_p \geq
        \begin{cases}
            \frac{W_p(\mathcal{A}^{\mathrm{opt}})}{\max_{i \in N} |A_i^{\mathrm{opt}}|} & \text{ for } p < 0, \\
            \frac{W_0(\mathcal{A}^{\mathrm{opt}})}{\left( \Pi_{i \in N} |A_i^{\mathrm{opt}}|\right)^{1/n}} & \text{ for } p = 0.
        \end{cases}
    \end{equation*}

    Consider the partial allocation in which each agent $i \in N$ receives only $\{g_i\}$. This allocation is EFX$_0$ because every nonempty bundle is a singleton (and dropping its only item leaves the empty set).
    The key observation here is that we can allocate the remaining $c$ items in such a way that $\max_{\mathcal{A} \in \Pi_\textsf{EFX$_0$}(G)} W_p(\mathcal{A}) \geq \widehat{W}_p$.
    This is guaranteed by the constructive existence proof of \citet{mahara2023extension}  for $m \leq n+3$:
    they proved that for general (and hence additive) valuations, there exists an EFX$_0$ allocation when $m \leq n+3$ and the proof is incremental, adding items one-by-one while maintaining the EFX property via a lexicographic potential.
    Thus, starting from \emph{any} EFX$_0$ partial allocation (in particular, the one we constructed), we can augment it by inserting each of the remaining items one at a time while preserving EFX$_0$ until all goods are allocated.

    We can then apply this augmentation to the singleton EFX allocation $(\{g_i\})_{i \in N}$. Since the $p$-mean welfare for $p\le 0$ is nondecreasing in every coordinate, and because the EFX$_0$-restoring exchanges in the incremental construction can be taken as envy-cycle eliminations that do not decrease any agent's value, the utility profile is coordinatewise nondecreasing throughout the augmentation, and hence the $p$-mean welfare never decreases.
    Thus, there exists such a complete EFX allocation $\widehat{\mathcal{A}}$ such that 
    \begin{equation*}
        W_p(\widehat{\mathcal{A}}) \geq \widehat{W}_p.
    \end{equation*}
    Combining the above equations, we get that for $p < 0$,
    \begin{align*}
        \max_{\mathcal{A} \in \Pi_{\mathrm{EFX}_0}(G) }W_p(\mathcal{A}) & \geq W_p(\widehat{\mathcal{A}}) \geq \widehat{W}_p  \geq \frac{W_p(\mathcal{A}^{\mathrm{opt}})}{\max_{i \in N} |A_i^{\mathrm{opt}}|} \geq \frac{W_p(\mathcal{A}^{\mathrm{opt}})}{1+c}.
    \end{align*}
    Thus, 
    \begin{equation*}
        \max_{A\in \Pi_{\mathrm{EFX}_0}(G)} W_p(\mathcal{A}) \ge \frac{\OPT_p(\mathcal{I})}{1+c},
    \end{equation*}
    and therefore,
    \begin{equation*}
        \mathrm{PoEFX}_0^\mathcal{I}(p,c) := \frac{\OPT_p(\mathcal{I})}{\max_{\mathcal{A}\in \Pi_{\mathrm{EFX_0}}(G)} W_p(\mathcal{A})}
        \leq 1+c.
    \end{equation*}
    Taking the supremum over instances with $m=n+c$ gives $\mathrm{PoEFX}_0(p,c)\leq 1+c$.

    For $p = 0$, note that $\Pi_{i \in N} |A_i^{\mathrm{opt}}| \leq \left( \frac{n+c}{n} \right)^n$, giving us
    \begin{equation*}
        \max_{\mathcal{A} \in \Pi_{\mathrm{EFX}_0}(G)} W_0(\mathcal{A}) \geq \frac{W_0(\mathcal{A}^{\mathrm{opt}})}{(n+c)/n} = \frac{n}{n+c} \cdot W_0(\mathcal{A}^{\mathrm{opt}}).
    \end{equation*}
    Consequently,
    \begin{equation*}
        \max_{\mathcal{A}\in \Pi_{\mathrm{EFX}_0}(G)} W_0(\mathcal{A}) \ge \frac{n}{n+c} \OPT_0(\mathcal{I}),
    \end{equation*}
    and therefore,
    \begin{equation*}
        \mathrm{PoEFX}_0^\mathcal{I}(0,c) := \frac{\OPT_0(\mathcal{I})}{\max_{\mathcal{A}\in \Pi_{\mathrm{EFX}_0}(G)} W_0(\mathcal{A})}
        \le \frac{n+c}{n} = 1+\frac{c}{n}.
    \end{equation*}
    Taking the supremum over instances with $m=n+c$ gives $\mathrm{PoEFX}_0(0,c)\leq 1+c/n$.

    Since $\Pi_{\mathrm{EFX}_0}(G) \subseteq \Pi_{\mathrm{EFX}}(G)$, the same upper bounds also hold for $\mathrm{PoEFX}(p,c)$ and $\mathrm{PoEFX}(0,c)$.

\section{Omitted Proofs in Section~\ref{sec:efx+po}}
\subsection{Proof of Theorem~\ref{thm:PO_EFX}}
    We first prove the NP-hardness result for EFX + PO.
    We begin with the following lemma.
    \begin{lemma} \label{lem:pad-zero-agent}
        Let $\mathcal{I}=(N,G,\mathbf{v})$ be an instance in which every good $g\in G$ has $v_i(g)>0$ for at least one
        agent $i\in N$. 
        Let $\mathcal{I}'=(N\cup\{d\},G,\mathbf{v}')$ be the instance obtained from $\mathcal{I}$ by adding a new agent $d$
        with $v'_d(g)=0$ for all $g\in G$ and $v'_i(\cdot)=v_i(\cdot)$ for all $i\in N$.
        Then $\mathcal{I}$ admits an allocation that is both \emph{EFX} and \emph{PO} if and only if $\mathcal{I}'$ does.
    \end{lemma}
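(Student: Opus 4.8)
The plan is to prove the two implications separately, with the central observation being that Pareto-optimality in $\mathcal{I}'$ forces the dummy agent $d$ to receive the empty bundle. This is precisely where the hypothesis that every good is positively valued by some agent in $N$ does the work: if $d$ held some good $g$, then picking any $i\in N$ with $v_i(g)>0$ and moving $g$ from $d$ to $i$ strictly increases $v_i$ (by additivity) while leaving every other agent's utility unchanged (since $v'_d\equiv 0$), contradicting PO. Once we know $d$'s bundle is empty, the two instances coincide on the original agents, and both EFX and PO transfer cleanly.

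For the forward direction ($\Rightarrow$), I would take an EFX+PO allocation $\mathcal{A}=(A_1,\dots,A_n)$ of $\mathcal{I}$ and pad it to $\mathcal{A}'=(A_1,\dots,A_n,\varnothing)$. EFX is immediate: $d$ values everything at $0$ and hence never envies, each original agent faces $A_d=\varnothing$ for which the EFX condition is vacuous, and all constraints among the original agents are unchanged. For PO, I would argue by contradiction. Any allocation $\mathcal{B}'=(B_1,\dots,B_n,B_d)$ of $\mathcal{I}'$ that Pareto-dominates $\mathcal{A}'$ can be turned into an allocation of $\mathcal{I}$ dominating $\mathcal{A}$ by reassigning the (zero-valued) goods in $B_d$ to some original agent; since valuations are additive and nonnegative this only weakly increases that agent's utility, so every weak-domination inequality is preserved and the strict improvement (which must be witnessed by an original agent, never $d$) survives, contradicting the PO of $\mathcal{A}$.

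For the reverse direction ($\Leftarrow$), I would start from an EFX+PO allocation $\mathcal{A}'=(A'_1,\dots,A'_n,A'_d)$ of $\mathcal{I}'$, first establish $A'_d=\varnothing$ via the observation above, and then define $\mathcal{A}=(A'_1,\dots,A'_n)$. This is a complete allocation of $\mathcal{I}$; EFX is inherited because the relevant EFX inequalities among the original agents are literally the same in both instances; and PO again follows by contradiction, since any dominating allocation $\mathcal{B}$ of $\mathcal{I}$ pads to $(B_1,\dots,B_n,\varnothing)$, which dominates $\mathcal{A}'$ in $\mathcal{I}'$ (the $d$-coordinate comparison is $0\ge 0$), contradicting the PO of $\mathcal{A}'$.

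The only real subtlety, and thus the step I would be most careful with, is the PO bookkeeping across the two instances: I must ensure that reassigning $d$'s goods never violates a weak-domination inequality, and that the dummy can never serve as the strict improver precisely because its utility is identically $0$. Everything else is routine, and notably the argument is a pure padding reduction---it uses neither the few-surplus-items structure nor any existence theorem, so it applies verbatim to both the EFX and (with the analogous vacuity observations) the EFX$_0$ variants.
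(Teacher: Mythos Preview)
Your proposal is correct and follows essentially the same argument as the paper: pad with $A'_d=\varnothing$ in the forward direction, use PO plus the ``every good is positively valued by some original agent'' hypothesis to force $A'_d=\varnothing$ in the reverse direction, and then verify that EFX and PO transfer coordinate-wise between the two instances. The only cosmetic difference is that the paper, when emptying $B'_d$ in the forward PO step, reassigns each good to an agent who values it positively rather than to an arbitrary original agent, but your weaker reassignment already suffices since the strict improver is necessarily an original agent.
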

    
    \begin{proof}
    ($\Rightarrow$)
    Let $ \mathcal{A}= (A_1,\dots, A_n)$ be an allocation in $\mathcal{I}$ that is EFX and PO.
    Define $\mathcal{A}'= (A'_1,\dots, A'_n, A'_d)$ in $\mathcal{I}'$ by $A'_i := A_i$ for $i\in N$ and $A'_d:=\varnothing$.
    
    We first show EFX.
    All constraints among original agents are unchanged.
    If $j=d$, then $A'_j=\varnothing$, so the EFX\ condition is vacuous.
    If $i=d$, then $v'_d(g)=0$ for every good, hence there is no $g\in A'_j$ with $v'_d(g)>0$,
    so the EFX\ condition is again vacuous.
    
    Next, we show PO.
    Suppose for contradiction that a complete allocation $\mathcal{B}'$ Pareto-dominates $\mathcal{A}'$ in $\mathcal{I}'$. 
    Agent $d$ cannot be strictly improved (since $v'_d \equiv 0$), thus, some $i \in N$ must be strictly improved.
    
    Moreover, we may assume without loss of generality that $B'_d=\varnothing$: if $g \in B'_d$, then by assumption there exists some agent $k \in N$ with $v_k(g)>0$. 
    Reassign $g$ from $d$ to such an agent $k$. This does not decrease any agent's utility (since $v'_d(g)=0$ and all valuations are nonnegative), and weakly increases agent $k$'s utility. 
    Repeating this for all goods in $B'_d$ gives us a complete allocation $\widehat{\mathcal{B}}$ with $\widehat{\mathcal{B}}_d=\varnothing$ that still Pareto-dominates $\mathcal{A}'$.
    
    Therefore, restricting $\widehat{\mathcal{B}}$ to the original agents $N$ gives a complete allocation of $G$ that Pareto-dominates $\mathcal{A}$ in $I$, contradicting the Pareto-optimality of $\mathcal{A}$. 
    Thus, $\mathcal{A}'$ is PO.
    
    ($\Leftarrow$)
    Let $\mathcal{A}'$ be a complete allocation in $\mathcal{I}'$ that is EFX and PO.
    By the same PO argument as above, $A'_d=\varnothing$ must hold (otherwise some original good is assigned to $d$,
    and moving it to an original agent who values it positively gives us a Pareto-improvement).
    Thus, restricting $\mathcal{A}'$ to the original agents gives us a complete allocation $\mathcal{A}$ of $G$ in $\mathcal{I}$.
    
    All inequalities between original agents are identical in $\mathcal{A}$ and $\mathcal{A}'$, and thus EFX trivially holds
    
    Next, we show PO. 
    If $\mathcal{A}$ were not PO in $\mathcal{I}$, let $\mathcal{B}$ be a Pareto-improvement of $\mathcal{A}$ in $\mathcal{I}$. 
    Extending $\mathcal{B}$ to $\mathcal{I}'$ by giving $d$ the empty bundle results in a Pareto-improvement of $\mathcal{A}'$ in $\mathcal{I}'$, contradicting PO of $\mathcal{A}'$.
    \end{proof}

    We then prove the following padding gadget that increases $m-n$ by $1$.
    \begin{lemma}\label{lem:pad-private-pair}
        Let $\mathcal{I}=(N,G,\mathbf{v})$ be an instance in which every good $g\in G$ has $v_i(g)>0$ for at least one
        agent $i\in N$.
        Let $\mathcal{I}^+=(N\cup\{d\},G\cup\{x,y\},\mathbf{v}^+)$ be obtained by adding a new agent $d$ and two new goods $x,y$
        such that $v^+_d(x)=v^+_d(y)=1$, $v^+_d(g)=0$ for all $g\in G$, and $v^+_i(x)=v^+_i(y)=0$ for all $i\in N$.
        Then $\mathcal{I}$ admits a complete allocation that is both EFX and PO if and only if $\mathcal{I}^+$ does.
    \end{lemma}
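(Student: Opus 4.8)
The plan is to mirror the argument for Lemma~\ref{lem:pad-zero-agent}, exploiting the fact that the new goods $x,y$ and the new agent $d$ form an isolated ``private'' block: $d$ values only $x$ and $y$ (each at $1$), while every original agent values $x$ and $y$ at $0$ and $d$ values every original good at $0$. The central structural claim driving both directions is that \emph{in any PO allocation of $\mathcal{I}^+$, agent $d$ receives exactly $\{x,y\}$}. Indeed, if either $x$ or $y$ were held by an original agent $i$ (who values it at $0$), moving it to $d$ (who values it at $1$) would be a Pareto improvement; and if $d$ held some original good $g\in G$ (worth $0$ to $d$), then by hypothesis some agent $k\in N$ has $v_k(g)>0$, so reassigning $g$ to $k$ would again Pareto-improve. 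Hence $A^+_d=\{x,y\}$, which cleanly decouples the new block from the allocation of $G$.

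For the ($\Rightarrow$) direction, given an EFX and PO allocation $\mathcal{A}=(A_1,\dots,A_n)$ of $\mathcal{I}$, I would define $\mathcal{A}^+$ by keeping $A^+_i:=A_i$ for $i\in N$ and setting $A^+_d:=\{x,y\}$. EFX is immediate: constraints among original agents are unchanged (since $x,y$ are held only by $d$ and original agents value them at $0$); every original agent's comparison toward $d$ is vacuous because no good in $A^+_d$ is positively valued by $i$; and $d$'s comparisons toward everyone are vacuous because $d$ values no original good. For PO, I would suppose a Pareto improvement $\mathcal{B}^+$ exists; since $d$ already attains its maximum possible utility $2$, $d$ cannot be strictly improved, so the strict gain falls on an original agent and moreover $\{x,y\}\subseteq B^+_d$. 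I would then ``clean up'' $\mathcal{B}^+$ by moving any original good out of $d$'s bundle to an agent who values it positively (weakly increasing all utilities), obtaining $\widehat{\mathcal{B}}^+$ with $\widehat{B}^+_d=\{x,y\}$; restricting $\widehat{\mathcal{B}}^+$ to $N$ then yields a Pareto improvement of $\mathcal{A}$ in $\mathcal{I}$, a contradiction.

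For the ($\Leftarrow$) direction, given an EFX and PO allocation $\mathcal{A}^+$ of $\mathcal{I}^+$, the structural claim forces $A^+_d=\{x,y\}$, so $G$ is partitioned among the original agents; restricting to $N$ gives a complete allocation $\mathcal{A}$ of $G$ in $\mathcal{I}$. EFX transfers directly because the relevant inequalities among original agents are identical. For PO, any Pareto improvement $\mathcal{B}$ of $\mathcal{A}$ in $\mathcal{I}$ would extend, by giving $d$ the bundle $\{x,y\}$, to a Pareto improvement of $\mathcal{A}^+$ in $\mathcal{I}^+$, contradicting PO of $\mathcal{A}^+$.

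The main obstacle is the PO verification, not EFX (which is essentially vacuous for the new block). Concretely, the delicate point is ensuring that a hypothetical Pareto-dominating allocation can be assumed, without loss of generality, to assign $d$ exactly $\{x,y\}$: this is where the hypothesis that every good is positively valued by some agent is essential, enabling the cleanup reassignment of original goods away from $d$ while preserving Pareto domination. This is precisely the role played by the analogous step in Lemma~\ref{lem:pad-zero-agent}; the present lemma differs only in that $d$ now has a nonempty ``target'' bundle $\{x,y\}$ that must be pinned down first.
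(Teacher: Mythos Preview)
Your proposal is correct and follows essentially the same approach as the paper's own proof: extend an EFX+PO allocation of $\mathcal{I}$ by giving $d$ the bundle $\{x,y\}$, and in the other direction use PO to pin $A^+_d=\{x,y\}$ before restricting. Your presentation is arguably cleaner in that you isolate the structural claim (any PO allocation of $\mathcal{I}^+$ gives $d$ exactly $\{x,y\}$) up front; the paper derives it inline in the $(\Leftarrow)$ direction and handles the $(\Rightarrow)$ PO step slightly differently (it only argues $B^+_d\subseteq\{x,y\}$ after cleanup rather than first observing $\{x,y\}\subseteq B^+_d$), but the substance is the same.
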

    
    \begin{proof}
        ($\Rightarrow$)
        Let $\mathcal{A}=(A_i)_{i\in N}$ be a complete EFX and PO allocation for $\mathcal{I}$.
        Define $\mathcal{A}^+$ for $\mathcal{I}^+$ by setting $A^+_i:=A_i$ for all $i\in N$ and $A^+_d:=\{x,y\}$.
        
        We first prove EFX.
        Among original agents, EFX holds because it held in $\mathcal{A}$ and $x,y$ have value $0$ to all original agents.
        For any original agent $i\in N$ and the new agent $d$, we have $v^+_i(x)=v^+_i(y)=0$,
        so there is no $g\in A^+_d$ with $v^+_i(g)>0$, and the EFX\ condition is vacuous.
        For $i=d$ and any other agent $j$, we have $v^+_d(g)=0$ for all goods $g$ not in $\{x,y\}$,
        so again there is no $g\in A^+_j$ with $v^+_d(g)>0$; EFX\ is vacuous.
        
        Next, we prove PO.
        Suppose for contradiction that a complete allocation $\mathcal{B}^+$ Pareto-dominates $\mathcal{A}^+$ in $\mathcal{I}^+$.
        Agent $d$ already receives all goods it values positively (namely $x$ and $y$), so $d$ cannot be strictly
        improved. Hence some original agent must be strictly improved. 
        Moreover, if $\mathcal{B}^+$ assigns any original good $g \in G$ to agent $d$, then since $v_d(g)=0$ and $g$ has positive value for at least one original agent, we can reassign $g$ from $d$ to such an agent without decreasing any agent's value; thus, w.l.o.g. we may assume $B^+_d \subseteq \{x, y\}$. Hence, restricting $\mathcal{B}^+$ to $N$ gives us an allocation of $G$ that Pareto-improves $\mathcal{A}$, contradicting that $\mathcal{A}$ is PO in $\mathcal{I}$.
        Since original agents value $x,y$ at $0$,
        this strict improvement must come from a different allocation of the original goods $G$ among the original
        agents, while not hurting any original agent. 
        However, that would contradict that $\mathcal{A}$ is PO in $\mathcal{I}$.
        Therefore $\mathcal{A}^+$ is PO.
        
        ($\Leftarrow$)
        Let $\mathcal{A}^+$ be a complete EFX\ and PO allocation in $\mathcal{I}^+$.
        By PO, both $x$ and $y$ must be allocated to agent $d$ (otherwise moving the missing good to $d$ strictly
        improves $d$ and does not decrease any other agent, since every other agent values $x,y$ at $0$).
        Also by the assumption on $G$ and PO, agent $d$ cannot receive any original good $g\in G$ (since some
        original agent values $g$ positively, moving $g$ to that agent is a Pareto-improvement).
        Thus $A^+_d=\{x,y\}$ and all original goods are allocated among original agents.
        Restricting $\mathcal{A}^+$ to $N$ gives a complete allocation $\mathcal{A}$ of $G$ in $\mathcal{I}$.
    
        We first prove EFX.
        For original agents, all relevant inequalities are identical in $\mathcal{A}$ and $\mathcal{A}^+$ because
        the only additional goods are $x,y$ and they have value $0$ to all original agents.
        
        Next, we prove PO.
        If $\mathcal{A}$ were not PO in $\mathcal{I}$, let $\mathcal{B}$ be a Pareto-improvement of $\mathcal{A}$ in $\mathcal{I}$.
        Extend $\mathcal{B}$ to an allocation $\widetilde{\mathcal{B}}$ for $\mathcal{I}^+$ by giving agent $d$ the bundle $\{x,y\}$.
        Then $\widetilde{\mathcal{B}}$ Pareto-dominates $\mathcal{A}^+$ in $\mathcal{I}^+$, contradicting PO of $\mathcal{A}^+$.
    \end{proof}
    
    Then, we reduce from the following NP-hard decision problem: given an instance $\mathcal{I}$, decide whether $\mathcal{I}$ admits a complete allocation that is both EFX and PO.
    This problem is NP-hard for additive valuations (even when each agent has only a constant number of distinct utility values); see, e.g., \citet{GargMurhekar2023FewValues}.
    
    Assume (w.l.o.g.) that every good has positive value for at least one agent.
    This is because we may delete any good $g$ with $v_i(g)=0$ for all agents $i$, since such goods do not affect EFX (they are never `positively valued' by any agent) and do not affect Pareto-optimality; we can always assign them arbitrarily.
    
    We transform $\mathcal{I}$ into an instance $\widehat{\mathcal{I}}$ with an equal number of goods and agents, and then increase the goods-agents gap to exactly $c$.
    
    We first balance to $m=n$ by adding agents/goods.
    \begin{itemize}
        \item If $m\ge n$, apply Lemma~\ref{lem:pad-zero-agent} exactly $(m-n)$ times, adding $(m-n)$ zero agents.
        This gives us an instance $\widehat{\mathcal{I}}$ with $|\widehat{N}|=m$ and $|\widehat{G}|=m$.
        \item If $m<n$, apply Lemma~\ref{lem:pad-private-pair} exactly $(n-m)$ times, adding $(n-m)$ gadgets
        (each gadget adds one agent and two goods). After $(n-m)$ applications we obtain an instance $\widehat{\mathcal{I}}$ with $|\widehat{N}| = n+(n-m)=2n-m$ and $|\widehat{G}| = m+2(n-m)=2n-m$,
        so again $|\widehat{N}|=|\widehat{G}|$.
    \end{itemize}
    By repeated application of Lemma~\ref{lem:pad-zero-agent} and Lemma~\ref{lem:pad-private-pair}, $\mathcal{I}$ admits an EFX and PO allocation if and only if $\widehat{\mathcal{I}}$ does.
    
    Next, we create the gap $c$.
    Starting from $\mathcal{I}_b$ (which satisfies $|G_b|=|N_b|$):
    \begin{itemize}
        \item If $c \ge 0$, apply Lemma~\ref{lem:pad-private-pair} exactly $c$ more times. Each application increases $|G|-|N|$ by $1$, so the resulting instance $\mathcal{I}^{(c)}$ satisfies $|G^{(c)}|-|N^{(c)}| = c$, i.e., $|G^{(c)}| = |N^{(c)}| + c$.
        \item If $c < 0$, apply Lemma~\ref{lem:pad-zero-agent} exactly $(-c)$ more times, adding $(-c)$ zero agents. Each application decreases $|G|-|N|$ by 1, so the resulting instance $I^{(c)}$ satisfies $|G^{(c)}|-|N^{(c)}| = c$, i.e., $|G^{(c)}| = |N^{(c)}| + c$.
    \end{itemize}
    Again by repeated application of Lemma~\ref{lem:pad-zero-agent} and Lemma~\ref{lem:pad-private-pair}, $\mathcal{I}_b$ admits an EFX and PO allocation if and only if $\mathcal{I}^{(c)}$ does.
    
    Note that the mapping described is polynomial-time and preserves yes/no instances.
    The construction adds $\mathcal{O}(|m-n|+|c|)$ agents and goods, and all new valuations are in $\{0,1\}$, so the encoding size grows only polynomially.
    Therefore, deciding whether there exists an EFX and PO allocation is NP-hard even under the promise $|G| = |N| + c$ for any fixed integer $c \leq 3$ (with $m = n + c \geq 1$).

    Next, we prove the $\Sigma_2^P$-completeness result for EFX$_0$ + PO.

    We first show that the problem is in $\Sigma_2^P$. To do so, we nondeterministically guess an allocation $\mathcal{A}=(A_1,\ldots,A_n)$. We can then check whether $\mathcal{A}$ satisfies EFX$_0$ in polynomial time by verifying, for each $i\neq j$ and each $g\in A_j$, that $v_i(A_i)\ge v_i(A_j\setminus\{g\})$.
    To check whether $\mathcal{A}$ is PO, we query an NP oracle with the question:
    \emph{``Is there a complete allocation $\mathcal{B}=(B_1,\ldots,B_n)$ such that $v_i(B_i)\ge v_i(A_i)$ for all $i\in N$, and $v_{i^*}(B_{i^*})>v_{i^*}(A_{i^*})$ for some $i^*\in N$?''}
    We accept if and only if the oracle answers ``no''. Hence the problem is in $\text{NP}^\text{NP}=\Sigma_2^P$.

    Next, we show $\Sigma_2^P$-hardness. We reduce from the problem of deciding whether there exists an allocation that is both EF and PO, which is $\Sigma_2^P$-complete for additive valuations \citep{dekeijzer2009efpo_sigmahard}.
    First, we can without loss of generality delete any good valued $0$ by all agents; this preserves EF and PO existence.
    Next, for each agent $i\in N$, we can also without loss of generality assume that all bundle value differences $v_i(S)-v_i(T)$ are even integers: let $L_i$ be the least common multiple of the denominators of $\{v_i(g): g\in G\}$ and scale $v_i$ by $2L_i$, so that every $v_i(g)$ is an even integer and thus every bundle value (and hence every difference of bundle values) is an even integer.\footnote{Multiplying each agent $i$'s valuation by a positive constant preserves EF and PO: for any two bundles $S,T$, $v_i(S)\ge v_i(T)$ iff $\alpha_i v_i(S)\ge \alpha_i v_i(T)$. Hence the instance is a yes-instance iff the scaled instance is.}

    We now describe the construction of a new instance $\mathcal{I}'=(N',G', \mathbf{v}')$:
    \begin{itemize}
        \item \emph{Agents:} $N'=[n+m]$, where agents $1,\ldots,n$ are the original ones, and agents
        $n+1,\ldots,n+m$ are dummy agents.
        \item \emph{Goods:} $G'=\{g_1,\ldots,g_m,p_1,\ldots,p_n\}$, where $g_1,\ldots,g_m$ are the original goods,
        and for each $i\in [n]$, $p_i$ is the private good for original agent $i$.
        \item \emph{Valuations:} for $i\le n$, set $v'_i(g)=v_i(g)$ for all $g\in G$, $v'_i(p_i)=1$, and
        $v'_i(p_j)=0$ for all $j\neq i$; for $i>n$, set $v'_i(g')=0$ for all $g'\in G'$.
    \end{itemize}

    We will prove that there exists an allocation $\mathcal{A}'$ that is EFX$_0$ and PO for $\mathcal{I}'$ if and only if there exists a complete allocation $\mathcal{A}$ that is EF and PO for $\mathcal{I}$.

    For the `if' direction, suppose there exists an allocation $\mathcal{A}=(A_1,\ldots,A_n)$ for $\mathcal{I}$ that is EF and PO. 
    Define an allocation $\mathcal{A}'$ for $\mathcal{I}'$ by setting $A'_i:=A_i\cup\{p_i\}$ for all $i\le n$ and $A'_i:=\varnothing$ for all $i>n$.

    We first show that $\mathcal{A}'$ is PO.
    Suppose for contradiction that there is a complete allocation $\mathcal{B}'=(B'_1,\ldots,B'_{n+m})$ that Pareto-improves $\mathcal{A}'$ in $\mathcal{I}'$. Consider the restriction of $\mathcal{B}'$ to the original goods: for each $i\le n$ define $B_i:=B'_i\cap G$, and reassign any original good in $G$ that is allocated in $\mathcal{B}'$ to a dummy agent to an arbitrary original agent (this weakly increases that original agent's utility and does not decrease anyone else's utility because dummy agents value all goods at $0$).
    After this reassignment, we obtain an allocation $\widetilde{\mathcal{B}}$ of the original goods among the original agents. 
    For every original agent $i\le n$, we have
    \begin{equation*}
        v_i(\widetilde{B}_i) \geq v_i(B'_i\cap G) \geq v'_i(B'_i)-1 \geq v'_i(A'_i)-1 = v_i(A_i),
    \end{equation*}
    where the second inequality uses that agent $i$ can gain at most $1$ utility from private goods (namely from $p_i$), and the last equality follows from $v'_i(p_i)=1$.
    Moreover, strict Pareto improvement of $\mathcal{B}'$ over $\mathcal{A}'$ implies that some original agent is strictly better in $\mathcal{B}'$, and since private-good utility cannot increase beyond $1$ (already attained in $\mathcal{A}'$), this strict improvement must come from original goods; hence $v_i(\widetilde{B}_i)>v_i(A_i)$ for some $i$.
    Therefore, $\widetilde{\mathcal{B}}$ Pareto-improves $\mathcal{A}$ in $\mathcal{I}$, contradicting that $\mathcal{A}$ is PO. Thus $\mathcal{A}'$ is PO.

    Next, we show that $\mathcal{A}'$ satisfies EFX$_0$.
    All EFX$_0$ inequalities involving a dummy agent hold trivially because every dummy agent values all goods at $0$ (and in $A'$ each dummy agent holds $\varnothing$).
    Thus, consider any two distinct original agents $i,j\in N$ and any good $g\in A'_j$.
    If $g=p_j$, then
    \begin{equation*}
        v'_i(A'_i)=v_i(A_i)+1 \ge v_i(A_j)= v'_i(A'_j\setminus\{p_j\}),
    \end{equation*}
    where the inequality follows because $\mathcal{A}$ is EF (so $v_i(A_i)\ge v_i(A_j)$).
    If $g\neq p_j$ (so $g\in A_j$), then using $v'_i(p_j)=0$,
    \begin{align*}
        v'_i(A'_j\setminus\{g\}) & = v_i(A_j\setminus\{g\}) \leq v_i(A_j) \leq v_i(A_i) < v_i(A_i)+1 = v'_i(A'_i).
    \end{align*}
    Thus $\mathcal{A}'$ is EFX$_0$.

    Next, we prove the `only if' direction.
    Suppose there exists an allocation $\mathcal{A}'=(A'_1,\ldots,A'_{n+m})$ for $\mathcal{I}'$ that is EFX$_0$ and PO.
    Since $\mathcal{A}'$ is PO, we must have $p_i\in A'_i$ for each $i\in [n]$, otherwise reallocating $p_i$ to agent $i$ strictly increases agent $i$'s utility and does not decrease any other agent's utility (all other agents value $p_i$ at $0$).
    Moreover, since each good in $G'$ has positive value to at least one original agent (after removing goods valued $0$ by all), no PO allocation can assign any good in $G'$ to a dummy agent; hence $A'_i=\varnothing$ for all $i>n$.
    
    Define $A_i:=A'_i\setminus\{p_i\}$ for each $i\in N$. 
    We show that $\mathcal{A}=(A_1,\ldots,A_n)$ is EF.
    Fix any two distinct agents $i,j\in [n]$. 
    Apply the EFX$_0$ inequality to agents $i$ and $j$ with the good $g=p_j\in A'_j$:
    \begin{equation*}
        v'_i(A'_i)\;\ge\; v'_i(A'_j\setminus\{p_j\}).
    \end{equation*}
    Using $v'_i(A'_i)=v_i(A_i)+1$ and $v'_i(A'_j\setminus\{p_j\})=v_i(A_j)$ (since $v'_i(p_j)=0$ for $i\neq j$), we obtain $v_i(A_i)+1\ge v_i(A_j)$.
    Since $v_i(A_i)$ and $v_i(A_j)$ are even integers (by the scaling), this implies $v_i(A_i)\ge v_i(A_j)$. 
    Hence $\mathcal{A}$ is EF.

    Finally, $\mathcal{A}$ must be PO: if not, then there exists an allocation $\mathcal{B}=(B_1,\ldots,B_n)$ of $G$ that Pareto-improves $\mathcal{A}$ in $\mathcal{I}$.
    Extend $\mathcal{B}$ to an allocation $\mathcal{B}'$ of $G'$ in $\mathcal{I}'$ by giving each original agent $i$ the bundle $B_i\cup\{p_i\}$ and giving all dummy agents the empty bundle. 
    Then $\mathcal{B}'$ Pareto-improves $\mathcal{A}'$ in $\mathcal{I}'$, contradicting the fact that $\mathcal{A}'$ is PO.

    This completes the reduction. 
    Note that in our construction $|N'|=n+m$ and $|G'|=m+n$, so $|G'|=|N'|$; in particular, $|G'|\le |N'|$. 
    Therefore, the problem is $\Sigma_2^P$-hard already on instances with $m=n$, and hence also on instances with $m\le n$.

    Next, we describe an extension to instances where $m = n + c$.
    Fix any constant $c\in\{1,2,3\}$. Starting from the instance $\mathcal{I}'=(N',G',v')$ above (where $|G'|=|N'|$), we build an instance $\mathcal{I}^{(c)}=(N^{(c)},G^{(c)},v^{(c)})$ with $|G^{(c)}|=|N^{(c)}|+c$ as follows.
    Let $D:=\{d_1,\ldots,d_c\}$ be $c$ new (dummy) agents, and let $H:=\{x_1,y_1,\ldots,x_c,y_c\}$ be $2c$ new goods. Define
    \begin{equation*}
        N^{(c)} := N'\cup D,\qquad G^{(c)} := G'\cup H,
    \end{equation*}
    so $|N^{(c)}|=|N'|+c$ and $|G^{(c)}|=|G'|+2c=(|N'|+c)+c=|N^{(c)}|+c$.
    Set valuations as follows:
    \begin{itemize}
        \item For every agent $a\in N'$ and every $g\in G'$, let $v^{(c)}_a(g):=v'_a(g)$, and for every $h\in H$ let $v^{(c)}_a(h):=0$.
        \item For each $t\in [c]$, let $v^{(c)}_{d_t}(x_t)=v^{(c)}_{d_t}(y_t)=1$, and $v^{(c)}_{d_t}(g)=0$ for all $g\in G^{(c)}\setminus\{x_t,y_t\}$.
    \end{itemize}

    We claim that $\mathcal{I}'$ admits an EFX$_0$ and PO allocation if and only if $\mathcal{I}^{(c)}$ admits an EFX$_0$ and PO allocation.
    Indeed, given any complete EFX$_0$ and PO allocation $\mathcal{A}'$ for $\mathcal{I}'$, extend it to $\mathcal{I}^{(c)}$ by additionally allocating
    $\{x_t,y_t\}$ to agent $d_t$ for each $t\in[c]$; EFX$_0$ holds because agents in $N'$ value all goods in $H$ at $0$ and each $d_t$ values all goods outside $\{x_t,y_t\}$ at $0$, and PO holds because any Pareto improvement would have to leave each $d_t$'s utility unchanged (so in particular keep $\{x_t,y_t\}$ with $d_t$), and would then induce a Pareto improvement of $\mathcal{A}'$ on the subinstance $(N',G')$.
    Conversely, let $\mathcal{A}^{(c)}$ be a complete EFX$_0$ and PO allocation for $\mathcal{I}^{(c)}$. 
    By PO, for each $t\in[c]$ we must have $x_t,y_t\in \mathcal{A}^{(c)}_{d_t}$, otherwise moving the missing good to $d_t$ strictly increases $d_t$'s utility and does not decrease any other agent's utility (no other agent values that good).
    Moreover, since every good in $G'$ has positive value to some agent in $N'$ while each $d_t$ values all of $G'$ at $0$, PO also implies that no good in $G'$ is allocated to any $d_t$.
    Therefore, restricting $\mathcal{A}^{(c)}$ to the agents in $N'$ gives us a complete allocation $A'$ of $G'$.
    This restriction remains EFX$_0$ (valuations on $G'$ are unchanged for agents in $N'$), and if it were not PO in $\mathcal{I}'$, a Pareto improvement in $\mathcal{I}'$ could be extended by keeping each $d_t$'s bundle $\{x_t,y_t\}$ fixed, contradicting PO of $\mathcal{A}^{(c)}$.
    
    Thus, the reduction above can be followed by this padding step, giving us $\Sigma_2^P$-hardness even when $m=n+c$ for each fixed $c\in\{1,2,3\}$. 

    To handle any fixed $c < 0$, start from the constructed instance $\mathcal{I}'$ (which satisfies $|G'| = |N'|$) and add $-c$ new dummy agents with valuation $0$ for every good, and no new goods. 
    The same argument as in Lemma~\ref{lem:pad-zero-agent} (and it continues to hold under EFX$_0$) shows that the resulting instance admits an EFX$_0$ and PO allocation if and only if $\mathcal{I}'$ does. The new instance satisfies $m = n + c$, establishing $\Sigma_2^P$-completeness for fixed negative $c$ as well.
    Together with membership in $\Sigma_2^P$, this gives $\Sigma_2^P$-completeness for these settings.

\section{Omitted Proofs in Section~\ref{sec:other_settings}}
\subsection{Proof of Theorem~\ref{thm:nmu-efx-pmean}}
    Under NMU and additivity we have $v_i(g)>0$ for all $i\in N$ and $g\in G$, thus, EFX and EFX$_0$ coincide.
    We therefore work with EFX$_0$ throughout.
    
    \paragraph{Case $m<n$ (i.e., $c<0$).}
    Every allocation has at least one empty bundle. 
    If $p\le 0$, then $W_p(\mathcal{A})=0$ for every
    allocation $\mathcal{A}$ and in this case, output any complete allocation that assigns each good to a distinct agent (possibly since $m < n$); this allocation is EFX$_0$ (and hence EFX) because all nonempty bundles are singletons).
    If $p>0$, note that in any EFX$_0$ allocation, every bundle must have size at most $1$: indeed, since $m<n$
    there exists an empty agent $i$, and if some agent $j$ had $|A_j|\ge 2$, then for any $g\in A_j$, we would have
    $A_j\setminus\{g\}\neq\varnothing$ and thus $v_i(A_j\setminus\{g\})>0=v_i(A_i)$, contradicting EFX$_0$.
    As such, EFX$_0$ allocations are exactly injective assignments of goods to agents (a matching of size $m$).
    For $p>0$, maximizing the $p$-mean welfare is equivalent to maximizing $\sum_{i\in N} v_i(A_i)^p$, and the unmatched agents contribute $0$ to this sum. 
    As a result, we can compute an optimal EFX$_0$ allocation by solving a maximum weight matching (which matches all goods) of size $m$ in the bipartite graph $N\times G$ with edge weights $w(i,g):=v_i(g)^p$.
    This is known to be polynomial-time solvable.
    
    \paragraph{Case $m\ge n$ (i.e., $c\ge 0$).}
    We first show a key structural consequence of NMU.
    
    \medskip\noindent\textbf{Claim.}
    If $m\ge n$, then every EFX$_0$ allocation $\mathcal{A}=(A_1,\dots,A_n)$ satisfies $A_i\neq\varnothing$ for all $i\in N$.
    
    \smallskip\noindent\emph{Proof of Claim.}
    Suppose for contradiction that $A_i=\varnothing$ for some $i\in N$.
    Since $m\ge n$, among the remaining $n-1$ bundles there are $m$ goods total, so some agent $j\neq i$ has $|A_j|\ge 2$.
    Pick any $g\in A_j$. 
    Then $A_j\setminus\{g\}\neq\varnothing$, and by NMU (equivalently $v_i(h)>0$ for all goods $h$)
    we have $v_i(A_j\setminus\{g\}) > 0 = v_i(A_i)$, violating EFX$_0$.
    \hfill$\square$
    
    \medskip
    Now, fix $m=n+c$ with $c\in\{0,1,2,3\}$.
    We call an agent \emph{heavy} if she receives at least two goods. 
    For an allocation $\mathcal{A}$, define
    \begin{equation*}
        H(\mathcal{A}):=\{i\in N:\ |A_i|\ge 2\} \quad\text{and}\quad k:=|H(\mathcal{A})|.
    \end{equation*}
    By Claim, every agent receives at least one good in any EFX$_0$ allocation, and thus the total number of ``extra'' goods beyond one per agent is exactly
    \begin{equation*}
        \sum_{i\in N} (|A_i|-1)=m-n=c.
    \end{equation*}
    Each heavy agent contributes at least $1$ to this sum, so $k\le c$. 
    Moreover, since bundles are disjoint,
    \begin{align*}
        \Bigl|\bigcup_{i\in H(A)} A_i\Bigr|
        =\sum_{i\in H(A)} |A_i|
        & =\sum_{i\in H(A)}(1+(|A_i|-1)) \\
        & = k+c \le 2c.
    \end{align*}
    In particular, when $c\le 3$, the union of all non-singleton bundles contains at most $2c\le 6$ goods.

    We now describe the algorithm.
    Since EFX$_0$ allocations exist for $m\le n+3$, the optimum
    \[
        \OPT:=\max_{\mathcal{A}\in \Pi_{\mathrm{EFX}_0}(G)} W_p(\mathcal{A})
    \]
    is well-defined. 

    We first enumerate the heavy part (i.e., allocation to the heavy agents).
    Then, for each $k\in\{0,1,\dots,c\}$ (with $k=0$ only possible if $c=0$), enumerate:
    \begin{enumerate}
    \item a set $H\subseteq N$ of heavy agents with $|H|=k$;
    \item a set $S\subseteq G$ of goods with $|S|=k+c$;
    \item a labeled partition $(B_i)_{i\in H}$ of $S$ such that $|B_i|\ge 2$ for all $i\in H$.
    \end{enumerate}
    Let $L:=N\setminus H$ and $R:=G\setminus S$. Since $|G|=n+c$ and $|S|=k+c$, we have $|R|=n-k=|L|$.
    Any completion of this partial allocation must therefore assign exactly one good from $R$ to each light agent in $L$.
    
    For each heavy agent's bundle $B_i$ and each agent $x\in N$, define
    \[
    \tau_x(B_i):=\max_{g\in B_i} v_x(B_i\setminus\{g\}), 
    \tau_\ell:=\max_{i\in H}\tau_\ell(B_i)\ \text{ for }\ \ell\in L,
    \]
    with the convention $\tau_\ell:=0$ if $H=\varnothing$.
    Consider any bijection $\pi:L\to R$ and the completion $\mathcal{A}^\pi$ that gives each heavy agent $i\in H$ the bundle $B_i$ and each light agent $\ell\in L$ the singleton $\{\pi(\ell)\}$.
    Then $\mathcal{A}^\pi$ is EFX$_0$ if and only if:
    \begin{enumerate}[(i)]
        \item for all distinct $a,i\in H$, $v_a(B_a)\ge \tau_a(B_i)$;
        \item for all $\ell\in L$, $v_\ell(\pi(\ell))\ge \tau_\ell$.
    \end{enumerate}
    Indeed, EFX$_0$ constraints where the envied bundle is a singleton are always satisfied since removing its only
    good leaves $\varnothing$. 
    Thus only envy toward heavy agents' bundles matters.
    Fix $i\in H$ and $x\in N$. 
    The EFX$_0$ constraints from $x$ toward $i$ are
    $v_x(A^\pi_x)\ge v_x(B_i\setminus\{g\})$ for all $g\in B_i$, equivalently $v_x(A^\pi_x)\ge\tau_x(B_i)$.
    If $x=a\in H$ then $A^\pi_x=B_a$, giving (i); if $x=\ell\in L$ then $A^\pi_x=\{\pi(\ell)\}$ and we obtain (ii) after taking the maximum over $i\in H$. 
    The converse follows by the same case analysis.
    
    We discard the current heavy part if condition (i) fails.
    
    Next, build a bipartite graph on $L\times R$ containing an edge $(\ell,g)$ iff $v_\ell(g)\ge \tau_\ell$.
    A perfect matching $\pi:L\to R$ corresponds to a feasible completion satisfying (ii), hence giving an EFX$_0$ allocation.
    
    Among all feasible perfect matchings, we optimize for the $p$-mean welfare. 
    Let $u_i$ denote agent $i$'s utility in the completed allocation.
    Within a fixed iteration, the heavy agent utilities $u_i=v_i(B_i)$ are constant, so optimizing for $p$-mean welfare reduces to optimizing over the light assignments only:
    \begin{itemize}
        \item If $p>0$, maximize $\sum_{\ell\in L} v_\ell(\pi(\ell))^p$ (maximum weight perfect matching with weights
        $w(\ell,g)=v_\ell(g)^p$).
        \item If $p=0$, maximize $\sum_{\ell\in L}\log v_\ell(\pi(\ell))$ (maximum weight perfect matching with weights
        $w(\ell,g)=\log v_\ell(g)$).
        \item If $p<0$, minimize $\sum_{\ell\in L} v_\ell(\pi(\ell))^p$ (minimum cost perfect matching with costs
        $c(\ell,g)=v_\ell(g)^p$).
    \end{itemize}
    These are equivalent transformations because the function $f(x)= x^{1/p}$ is strictly increasing for $p>0$ and strictly decreasing for $p<0$, and for $p=0$ we use the standard Nash welfare logarithmic objective.
    We solve the corresponding assignment problem (e.g., using the Hungarian algorithm or min-cost flow). 
    If no perfect matching exists, discard this heavy part. Otherwise, record the best completion for this heavy part, and finally output the best recorded allocation over all iterations.

    We now prove correctness.
    Let $\mathcal{A}^*$ be an EFX$_0$ allocation maximizing the $p$-mean welfare. 
    By the Claim, all bundles in $\mathcal{A}^*$ are nonempty.
    Then, $|H(\mathcal{A}^*)| \le c$, and let $S^*:=\bigcup_{i\in H(\mathcal{A}^*)} A^*_i$.
    As shown above, $|S^*|=|H(\mathcal{A}^*)|+c\le 2c$, and the heavy part $(H(\mathcal{A}^*),S^*,(A^*_i)_{i\in H(\mathcal{A}^*)})$ appears in the enumeration earlier. 
    In that iteration, the light agent assignment induced by $\mathcal{A}^*$ satisfies (ii) by definition of $\tau_\ell$, and thus corresponds to a perfect matching. 
    Then (i) holds because $\mathcal{A}^*$ is EFX$_0$.
    The matching subroutine returns an optimal completion for this fixed heavy part, thus giving us a welfare at least $W_p(\mathcal{A}^*)$.
    Taking the best over all iterations returns an EFX$_0$ allocation attaining $\OPT$.
    Finally, since EFX$_0$ and EFX coincide under NMU, the returned allocation is also EFX and is optimal among EFX allocations.

    Finally, we prove the running time of this algorithm.
    For fixed $c\le 3$, we have $|S|=k+c\le 2c\le 6$. 
    Thus, there are $\mathcal{O}(n^3)$ choices for $H$ and $\mathcal{O}(m^6)$ choices for $S$, and the number of labeled partitions of a set of size at most $6$ into at most $3$ parts of size at least $2$ is constant.
    Each iteration solves one assignment problem on $|L|= \mathcal{O}(n)$ vertices in polynomial time. Consequently, the overall running time is polynomial in the input size.

\subsection{Proof of Theorem~\ref{thm:other_fixedc}}
    For notational simplicity, we denote $\OPT_p(\mathcal{I}) := \max_{\mathcal{A}\in \Pi(G)} W_p(\mathcal{A})$ and $\OPT^{\mathrm{EFX}}_p(\mathcal{I}) := \max\{W_p(\mathcal{A}) : \mathcal{A} \in \Pi(G)\text{ is EFX}\}$,
    with the convention $\OPT^{\mathrm{EFX}}_p(\mathcal{I}):=-\infty$ if no EFX allocation exists.
    We describe an XP (with respect to $c$) algorithm that computes $\OPT_p(\mathcal{I})$ and $\OPT^{\mathrm{EFX}}_p(\mathcal{I})$ and then compares the two.

    We describe an XP-time algorithm (parameter $c$) that computes $\OPT_p(\mathcal{I})$ and the best $p$-mean welfare attainable by an EFX allocation, and then compares the two.
    
    Let $\Gamma=(N,G;E)$ be the bipartite graph with $(i,g)\in E$ iff $v_i(g)>0$.
    If $\Gamma$ has no matching of size $n$, then no allocation can give every agent strictly positive utility:
    indeed, if an allocation $\mathcal{A}$ satisfied $v_i(A_i)>0$ for all $i$, then for each agent $i$ we could pick some $g_i\in A_i$ with $v_i(g_i)>0$; disjointness of bundles would imply that $\{(i,g_i)\}_{i\in N}$ is a matching of size $n$, a contradiction.
    When $p\le 0$, our convention is that $W_p(A)=0$ for any allocation $\mathcal{A}$ with $v_i(A_i)=0$ for some agent $i$, and hence in the above degenerate case $\OPT_p(\mathcal{I})=0$.
    Since the theorem assumes $\OPT_p(\mathcal{I})>0$, we are in the nondegenerate case where $\Gamma$ has a matching of size $n$; in particular, every allocation attaining $\OPT_p(\mathcal{I})$ gives every agent strictly positive utility, and thus every agent receives at least one good.
    This also implies $m\ge n$ (equivalently, $c\ge 0$).
    
    Fix any allocation $\mathcal{A}=(A_1,\dots,A_n)$ in which every agent receives at least one good.
    Define the heavy agents and heavy goods by
    \[
        H(\mathcal{A}) := \{i\in N : |A_i|\ge 2\} \text{ and }
        S(\mathcal{A}) := \bigcup_{i\in H(\mathcal{A})} A_i.
    \]
    Since $m=n+c$ and $|A_i|\ge 1$ for all $i$, the number of ``extra'' goods beyond one per agent is
    \[
    \sum_{i\in N}(|A_i|-1) = m-n = c.
    \]
    Light agents (those with $|A_i|=1$) contribute $0$ to this sum, while each heavy agent contributes at least $1$, so $|H(\mathcal{A})|\le c$.
    Moreover, because bundles are disjoint,
    \begin{align*}
        |S(\mathcal{A})| =
        \sum_{i\in H(A)}|A_i|
        & = \sum_{i\in H(\mathcal{A})}(1+(|A_i|-1)) \\
        & = |H(A)| + \sum_{i\in H(A)}(|A_i|-1) \\
        & = |H(A)|+c \le 2c.        
    \end{align*}

    In particular, every allocation with $W_p(\mathcal{A})>0$ has at most $c$ heavy agents and at most $2c$ heavy goods.
    
    For a bundle $B\subseteq G$ and an agent $x\in N$, define the EFX threshold
    \begin{equation*}
        \tau_x(B) := \max_{g\in B: v_x(g)>0} v_x(B\setminus\{g\}),
    \end{equation*}
    with the convention $\tau_x(B):=0$ if $\{g\in B : v_x(g)>0\}=\varnothing$.
    For EFX$_0$, drop the filter $v_x(g)>0$ and take the maximum over all $g\in B$.
    
    Next, we show a characterization of EFX completions for a fixed assignment of goods to the heavy agents.
    Fix sets $H\subseteq N$ and $S\subseteq G$ with $|S|=|H|+c$, and a labeled partition $(B_i)_{i\in H}$ of $S$ such that $|B_i|\ge 2$ for all $i\in H$.
    Let $L:=N\setminus H$ and $R:=G\setminus S$; then $|L|=|R|$.
    For each $\ell\in L$, define $\tau_\ell:=\max_{i\in H}\tau_\ell(B_i)$, and set $\tau_\ell:=0$ if $H=\varnothing$.
    
    Consider a completion $\mathcal{A}^\pi$ that assigns each heavy agent $i\in H$ the bundle $B_i$, and assigns each light agent $\ell\in L$ a singleton $\{\pi(\ell)\}$ via a bijection $\pi:L\to R$.
    We claim that $\mathcal{A}^\pi$ is EFX if and only if:
    \begin{enumerate}
    \item[(i)] for all distinct $a,i\in H$, we have $v_a(B_a)\ge \tau_a(B_i)$; and
    \item[(ii)] for all $\ell\in L$, we have $v_\ell(\pi(\ell))\ge \tau_\ell$.
    \end{enumerate}
    
    \emph{Proof of the claim.}
    The EFX constraints are: for all agents $x,y\in N$ and all goods $g\in A^\pi_y$ with $v_x(g)>0$,
    \[
    v_x(A^\pi_x) \;\ge\; v_x(A^\pi_y\setminus\{g\}).
    \]
    If $y\in L$, then $A^\pi_y$ is a singleton, so $A^\pi_y\setminus\{g\}=\varnothing$ and the inequality becomes $v_x(A^\pi_x)\ge 0$, which holds by nonnegativity.
    Thus only constraints with $y\in H$ matter, i.e., where the compared bundle is some heavy bundle $B_i$.
    
    Fix $i\in H$ and $x\in N$.
    The family of constraints over all $g\in B_i$ with $v_x(g)>0$ is equivalent to the single inequality
    \[
    v_x(A^\pi_x) \;\ge\; \max_{g\in B_i:\,v_x(g)>0} v_x(B_i\setminus\{g\}) \;=\; \tau_x(B_i).
    \]
    If $x=a\in H$, then $A^\pi_x=B_a$, so these inequalities for all $i\in H$ are exactly condition (i).
    If $x=\ell\in L$, then $A^\pi_x=\{\pi(\ell)\}$ and $v_\ell(A^\pi_x)=v_\ell(\pi(\ell))$, so requiring the above for all $i\in H$ is equivalent to
    $v_\ell(\pi(\ell))\ge \max_{i\in H}\tau_\ell(B_i)=\tau_\ell$, which is condition (ii). \hfill$\square$

    Now, we first compute $\OPT_p(\mathcal{I})$.
    We enumerate all possible heavy parts of an allocation with positive welfare.
    For each integer $k\in\{0,1,\dots,\min(c,n)\}$, enumerate:
    \begin{enumerate}
    \item a set $H\subseteq N$ with $|H|=k$;
    \item a set $S\subseteq G$ with $|S|=k+c$;
    \item a labeled partition $(B_i)_{i\in H}$ of $S$ such that $|B_i|\ge 2$ for all $i\in H$.
    \end{enumerate}
    Note that if $c>0$ and every agent receives at least one good (equivalently, $W_p(\mathcal{A})>0$ for $p\le 0$), then necessarily $k\ge 1$; thus the $k=0$ case is only relevant when $c=0$.
    
    For a fixed triple $(H,S,(B_i))$, let $L:=N\setminus H$ and $R:=G\setminus S$, so $|L|=|R|$.
    Discard the triple if $v_i(B_i)=0$ for some $i\in H$ (since then every completion has welfare $0$ under our convention for $p\le 0$).
    
    Any completion with positive welfare must assign each $\ell\in L$ exactly one good from $R$, hence corresponds to a bijection $\pi:L\to R$ with $v_\ell(\pi(\ell))>0$ for all $\ell\in L$.
    For fixed heavy bundles, the heavy-agent contributions to the objective are constant, so optimizing the $p$-mean welfare over such completions reduces to an assignment problem on the bipartite graph $L\times R$:
    \begin{itemize}
    \item If $p=0$ (Nash welfare), maximize $\sum_{\ell\in L}\log v_\ell(g)$ over perfect matchings with $v_\ell(g)>0$.
    \item If $p<0$ (for any real $p$), maximizing $W_p$ over strictly positive utilities is equivalent to minimizing $\sum_{i\in N} v_i(A_i)^p$ (since the function $f(x) = x^{1/p}$ is decreasing for $p<0$), which for fixed heavy bundles reduces to minimizing $\sum_{\ell\in L} v_\ell(g)^p$ over perfect matchings with $v_\ell(g)>0$. For $p \rightarrow -\infty$, within a fixed heavy part, we maximize $\min_{\ell\in L} v_\ell(\pi(\ell))$, which can be solved via a bottleneck perfect matching.
    \end{itemize}
    We solve the resulting maximum-weight (for $p=0$) or minimum-cost (for $p<0$) perfect matching in polynomial time and keep the best solution over all iterations.
    Let $\mathcal{A}^*$ denote a witness allocation attaining $\OPT_p(\mathcal{I})$.

    Next, we compute the best welfare given an EFX allocation.
    Repeat the same enumeration over $(H,S,(B_i))$.
    Discard if $v_i(B_i)=0$ for some $i\in H$.
    Next, enforce EFX feasibility for envy comparisons only between heavy agents: discard the triple if there exist distinct $a,i\in H$ with
    $v_a(B_a)<\tau_a(B_i)$ (condition (i) above).
    
    For each $\ell\in L$, compute $\tau_\ell:=\max_{i\in H}\tau_\ell(B_i)$ (or $\tau_\ell:=0$ if $H=\varnothing$).
    Build a bipartite graph on $L\times R$ containing an edge $(\ell,g)$ iff
    \[
    v_\ell(g)>0 \quad\text{and}\quad v_\ell(g)\ge \tau_\ell.
    \]
    By the characterization proved above, perfect matchings in this graph correspond exactly to EFX completions of the fixed heavy part in which every light utility is positive.
    Among those perfect matchings, optimize the same objective as in Step~1.
    Because we compare against $\OPT_p(\mathcal{I})>0$, restricting to strictly positive utilities is without loss of generality for the compatibility decision.
    Taking the best value over all iterations gives us $\OPT^{\mathrm{EFX}}_p(\mathcal{I})$; let $\mathcal{A}^{\mathrm{EFX}}$ be a witness allocation if one exists.
    
    Every EFX allocation is an allocation, and thus, $\OPT^{\mathrm{EFX}}_p(\mathcal{I})\le \OPT_p(\mathcal{I})$.
    Therefore, there exists an EFX allocation that is globally $p$-mean optimal if and only if     $\OPT^{\mathrm{EFX}}_p(\mathcal{I})=\OPT_p(\mathcal{I})$.
    If there exists such an EFX allocation, output the witness allocation $\mathcal{A}^{\mathrm{EFX}}$.
    
    We analyze the running time.
    For each $k\le c$, the enumeration has $\binom{n}{k}$ choices of $H$ and $\binom{m}{k+c}$ choices of $S$.
    A labeled partition of a $(k+c)$-element set into $k$ parts can be enumerated in time $(k+c)^{\mathcal{O}(k+c)}\le (2c)^{\mathcal{O}(2c)}$, which is absorbed by $(n+m)^{\mathcal{O}(c)}$ since $n+m\ge c$.
    Each iteration solves one assignment instance on $|L| = \mathcal{O}(n)$ vertices in polynomial time (e.g., Hungarian algorithm in $\mathcal{O}(n^3)$).
    Thus, the overall running time is $(n+m)^{\mathcal{O}(c)}$, i.e., XP in $c$.

    For EFX$_0$, replace each threshold definition by dropping the filter $v_x(g)>0$ (i.e., take the maximum over all $g\in B$), and otherwise run the same enumeration and matching procedure.
    The characterization and the correctness proof carry over immediately, because EFX differs from EFX$_0$ only in whether goods of zero value to the envying agent are quantified over.
\end{document}